\documentclass[11pt,a4paper]{article}
\pdfoutput=1

\usepackage{amssymb}
\usepackage{amsmath}
\usepackage{amsfonts}
\usepackage{bbm}
\usepackage{amsthm}
\usepackage{mathrsfs}
\usepackage{hyperref}
\usepackage{color}
\usepackage[margin=2.41cm]{geometry}
\usepackage[all,cmtip]{xy}
\usepackage[utf8]{inputenc}
\usepackage{graphicx}
\usepackage{varwidth}
\usepackage{comment}
\usepackage{faktor}
\usepackage[shortlabels]{enumitem}
\usepackage{euscript}

\usepackage{mathtools}

\usepackage{upgreek}
\usepackage{rotating}

\usepackage{tikz}
\usetikzlibrary{shapes.geometric}

\usepackage{tikz-cd}
\usetikzlibrary{matrix,arrows,calc,decorations.pathmorphing,fit,shapes.geometric,decorations.pathreplacing,positioning}

\usepackage{amssymb}
\tikzset{curve/.style={settings={#1},to path={(\tikztostart)
    .. controls ($(\tikztostart)!\pv{pos}!(\tikztotarget)!\pv{height}!270:(\tikztotarget)$)
    and ($(\tikztostart)!1-\pv{pos}!(\tikztotarget)!\pv{height}!270:(\tikztotarget)$)
    .. (\tikztotarget)\tikztonodes}},
    settings/.code={\tikzset{quiver/.cd,#1}
        \def\pv##1{\pgfkeysvalueof{/tikz/quiver/##1}}},
    quiver/.cd,pos/.initial=0.35,height/.initial=0}


\definecolor{darkred}{rgb}{0.8,0.1,0.1}

\hypersetup{
	colorlinks=true,         
	urlcolor=darkred, 
	linkcolor=darkred,
	citecolor=blue,
}

\theoremstyle{plain}
\newtheorem{theo}{Theorem}[section]
\newtheorem{lem}[theo]{Lemma}
\newtheorem{propo}[theo]{Proposition}
\newtheorem{cor}[theo]{Corollary}

\theoremstyle{definition}
\newtheorem{defi}[theo]{Definition}

\newtheorem{assu}[theo]{Assumption}

\newenvironment{ex}
{\pushQED{\qed}\exx}
{\popQED\endexx}

\newenvironment{rem}
{\pushQED{\qed}\remm}
{\popQED\endremm}

\numberwithin{equation}{section}

\def\nn{\nonumber}

\def\bbR{\mathbb{R}}
\def\bbC{\mathbb{C}}

\def\bbZ{\mathbb{Z}}
\def\bbS{\mathbb{S}}

\def\Hom{\mathrm{Hom}}

\def\id{\mathrm{id}}
\def\Id{\mathrm{Id}}

\def\cc{\mathrm{c}}

\def\1{I}

\def\op{\mathrm{op}}

\def\Set{\mathbf{Set}}
\def\Alg{\mathbf{Alg}}

\def\Vec{\mathbf{Vec}}

\def\astAlg{{}^{\ast}\mathbf{Alg}_\bbC}
\def\CastAlg{C^{\ast}\mathbf{Alg}_\bbC}
\def\preCastAlg{C_{\mathrm{pre}}^{\ast}\mathbf{Alg}_\bbC}

\def\astCat{{}^{\ast}\mathbf{Cat}_\bbC}
\def\CastCat{C^{\ast}\mathbf{Cat}_\bbC}

\def\astAQFT{{}^{\ast}\mathbf{AQFT}}
\def\CastAQFT{C^{\ast}\mathbf{AQFT}}
\def\preCastAQFT{C_{\mathrm{pre}}^{\ast}\mathbf{AQFT}}

\def\Sub{\mathbf{Sub}}
\def\Fin{\mathbf{Fin}}

\def\CC{\mathbf{C}}
\def\DD{\mathbf{D}}

\def\Cat{\mathbf{Cat}}

\def\Fun{\mathbf{Fun}}

\def\Cone{\mathbf{Cone}}
\def\Disk{\mathbf{Disk}}
\def\AAlg{\EuScript{A}\mathsf{lg}}

\def\SSS{\mathbf{SSS}}

\def\AAA{\mathfrak{A}}

\def\BBB{\mathfrak{B}}

\def\CCC{\mathfrak{C}}

\def\FFF{\mathfrak{F}}

\def\A{\mathcal{A}}
\def\B{\mathcal{B}}
\def\C{\mathcal{C}}

\def\O{\mathcal{O}}
\def\P{\mathcal{P}}

\def\V{\EuScript{V}}
\def\CCastCat{C^\ast\EuScript{C}\mathsf{at}}

\def\colim{\mathrm{colim}}

\def\pt{\mathrm{pt}}

\newcommand\und[1]{\underline{#1}}

\newcommand{\norm}[1]{\vert\!\vert #1 \vert\!\vert}

\DeclareMathOperator{\cdiamond}{\diamond	\hspace*{-1.48mm}\cdot	\hspace*{0.8mm}}

\DeclareMathOperator*{\bigboxtimes}{\text{\raisebox{-0.5ex}{\scalebox{1.4}{$\boxtimes$}}}}

\def\sk{\vspace{2mm}}

\makeatletter
\let\@fnsymbol\@alph
\makeatother

%


\title{%
$C^\ast$-categorical prefactorization algebras\\
for superselection sectors and topological order
}

\author{%
Marco Benini$^{1,2,a}$, Victor Carmona$^{3,b}$, Pieter Naaijkens$^{4,c}$\ and\ Alexander Schenkel$^{5,d}$\vspace{4mm}\\
{\small ${}^1$ Dipartimento di Matematica, Dipartimento di Eccellenza 2023-27, Universit\`a di Genova,}\\
{\small Via Dodecaneso 35, 16146 Genova, Italy.}\vspace{2mm}\\
{\small ${}^2$ INFN, Sezione di Genova,}\\
{\small Via Dodecaneso 33, 16146 Genova, Italy.}\vspace{2mm}\\
{\small ${}^3$ Max Planck Institut f\"ur Mathematik in den Naturwissenschaften,}\\
{\small Inselstra\ss e 22, 04103 Leipzig, Germany.}\vspace{2mm}\\
{\small ${}^4$ School of Mathematics, Cardiff University,}\\
{\small Senghennydd Road, Cardiff CF24 4AG, United Kingdom.}\vspace{2mm}\\
{\small ${}^5$ School of Mathematical Sciences, University of Nottingham,}\\
{\small University Park, Nottingham NG7 2RD, United Kingdom.}\vspace{4mm}\\
{\small \begin{tabular}{ll}
Email: & ${}^a$~\href{mailto:marco.benini@unige.it}{\texttt{marco.benini@unige.it}}\\
& ${}^b$~\href{mailto:victor.carmona@mis.mpg.de}{\texttt{victor.carmona@mis.mpg.de}}\\
& ${}^c$~\href{mailto:NaaijkensP@cardiff.ac.uk}{\texttt{NaaijkensP@cardiff.ac.uk}}\\
& ${}^d$~\href{mailto:alexander.schenkel@nottingham.ac.uk}{\texttt{alexander.schenkel@nottingham.ac.uk}}
\vspace{2mm}
\end{tabular}
}
}

\date{November 2025}


\begin{document}

\maketitle

\begin{abstract}
\noindent This paper presents a conceptual and efficient geometric framework to encode the algebraic structures on the category of superselection sectors of an algebraic quantum field theory on the $n$-dimensional lattice $\mathbb{Z}^n$. It is shown that, under the typical assumption of Haag duality, the monoidal $C^\ast$-categories of localized superselection sectors carry the structure of a locally constant prefactorization algebra over the category of cone-shaped subsets of $\mathbb{Z}^n$. Employing techniques from higher algebra, one extracts from this datum an underlying locally constant prefactorization algebra defined on open disks in the cylinder $\mathbb{R}^1\times\mathbb{S}^{n-1}$. While the sphere $\mathbb{S}^{n-1}$ arises geometrically as the angular coordinates of cones, the origin of the line $\mathbb{R}^1$ is analytic and rooted in Haag duality. The usual braided (for $n=2$) or symmetric (for $n\geq 3$) monoidal $C^\ast$-categories of superselection sectors are recovered by removing a point of the sphere $\mathbb{R}^1\times(\mathbb{S}^{n-1}\setminus\mathrm{pt}) \cong\mathbb{R}^n$ and using the equivalence between $\mathbb{E}_n$-algebras and locally constant prefactorization algebras defined on open disks in $\mathbb{R}^n$. The non-trivial homotopy groups of spheres induce additional algebraic structures on these $\mathbb{E}_n$-monoidal $C^\ast$-categories, which in the case of $\mathbb{Z}^2$ is given by a braided monoidal self-equivalence arising geometrically as a kind of `holonomy' around the circle $\mathbb{S}^1$. The locally constant prefactorization algebra structures discovered in this work generalize, under some mild geometric conditions, to other discrete spaces and thereby provide a clear link between the geometry of the localization regions and the algebraic structures on the category of superselection sectors.
\end{abstract}
\vspace{-1mm}

\paragraph*{Keywords:} prefactorization algebras, $C^\ast$-categories, higher algebra, algebraic quantum field theory, superselection sectors, topological order
\vspace{-2mm}

\paragraph*{MSC 2020:} 81Txx, 18Nxx
\vspace{-2mm}

\newpage 

\tableofcontents


\section{Introduction and summary}
Topologically ordered phases of matter are a modern and exciting
research area in the intersection of condensed matter physics and mathematics~\cite{Zeng}. 
A key feature of such systems in two spatial dimensions is that their ground state admits anyonic 
excitations exhibiting braided statistics.
From a condensed matter point of view, topologically ordered phases are interesting 
because they go beyond the Landau paradigm of symmetry breaking, and their classification 
is now a major area of the condensed matter theory, see e.g.\ \cite{Wen} for a review.
Their topological features are robust against perturbations, making them a candidate 
for applications to fault-tolerant quantum computing, see e.g.\ \cite{Kitaev,Nayak}.
\sk

The mathematical description of topologically ordered phases can be approached
from different angles, which can be classified roughly into `microscopic', `mesoscopic' and `macroscopic'.
In the `microscopic' approach, one starts from suitable lattice quantum systems
and studies topological excitations of their ground state. These are then
shown to assemble into a braided monoidal category encoding the 
anyons together with their fusion and braiding, see e.g.\
\cite{Kitaev} and \cite{NaaijkensChapter} for an illustration
in the context of Kitaev's quantum double model, or~\cite{Kawagoe} for an operational approach.
In the `mesoscopic' approach, one takes physically-informed 
collective features of two-dimensional quantum gapped systems, such as their entanglement patterns,  as input
and derives from there the relevant algebraic structures encoding the fusion and 
braiding rules for anyons. This is, for instance, the approach adopted in \cite{bootstrap1,bootstrap2}. 
In the `macroscopic' approach, one provides a direct 
axiomatization of the relevant types of categories encoding 
the extended operators in topological order, typically in terms of
a structure richer than that of a braided monoidal category, such as a 
modular tensor category, see e.g.\ \cite{Johnson-Freyd}.
This bird's-eye view is very useful for addressing classification questions (in two and also higher dimensions), 
but it leaves open more concrete questions such as whether or not these 
categories can be realized by lattice quantum systems.
For further information about topological order and its interplay with category theory,
we point the reader to the informative review article \cite{Kong}.
\sk

The focus of our article is on the `microscopic' approach to topological order.
We are mainly interested in formalizing a novel and conceptual framework
to extract from a lattice quantum system the associated
category of topological excitations of the ground state, 
together with its algebraic structures such as tensor products (fusion) and braidings.
Our inspiration comes from  works which employ techniques from algebraic 
quantum field theory (AQFT) to study lattice quantum systems and their representation theory,
see in particular \cite{Naaijkens1,Naaijkens2,NaaijkensChapter,FiedlerNaaijkens,Ogata}. 
In a nutshell, the key features of this AQFT-based approach to topological order are as follows:
The starting point is a family $\{\AAA(s)\}_{s\in S}$ of $C^\ast$-algebras indexed by 
the points of a discrete set $S$, which is interpreted as a discretization of the spatial dimensions.
A typical choice is the $n$-dimensional lattice $S=\bbZ^n$, but it is an interesting question
to which extent the discrete geometry of $S$ influences the main features of the model, 
see e.g.\ \cite{Elokl} for existing results in this direction.
This family of $C^\ast$-algebras is then extended to an AQFT $\AAA$ over $S$
which assigns to each subset $U\subseteq S$ a certain $C^\ast$-algebra
obtained by forming tensor products and inductive limits. 
While the entire representation category ${}^\ast\mathbf{Rep}_{\AAA}$
of this extended AQFT is in general rather wild and unstructured, typically there
exists an interesting full subcategory which can be endowed 
with the structure of a braided or symmetric monoidal $C^\ast$-category,
under suitable additional hypotheses on both the geometry of $S$ and the AQFT $\AAA$. The key idea 
to extract this full subcategory of so-called \textit{superselection sectors} \cite{DHR,BuchholzFredenhagen}
is to fix a reference representation $\pi_0 \in {}^\ast\mathbf{Rep}_{\AAA}$, 
interpreted as the ground state representation
of the extended AQFT $\AAA$, and consider only those representations 
$\pi\in {}^\ast\mathbf{Rep}_{\AAA}$ which are unitarily equivalent to 
$\pi_0$ on the complement of a suitable class of subsets $U\subseteq S$,
i.e.\ $\pi\vert_{U^\cc} \cong \pi_0\vert_{U^\cc}$ for all $U\in \CC(S)$ belonging
to some chosen category $\CC(S)$ of localization regions. In the case of
$S=\bbZ^n$, one typically takes $\CC(S) = \Cone(\bbZ^n)$ to be the category of
cone-shaped subsets. Note that such superselection sectors encode only topological information of the system,
which can be localized in any choice of localization region independently of 
its location or size, hence one can interpret
them as describing topological excitations of the ground state. In the special
case of $S=\bbZ^n$ with $n\geq 2$, one can construct directly 
a tensor product and braiding (which is symmetric for $n\geq 3$)
on the $C^\ast$-category of superselection sectors through a computationally
intricate method which is rooted in \cite{DHR,BuchholzFredenhagen}.
These methods were recently modernized and generalized in \cite{SSS},
where in particular rather general geometric conditions on the underlying category 
of localization regions were identified which allow for the construction of a braiding.
\sk

The main innovation of our paper is the new observation that
the monoidal $C^\ast$-categories $\SSS_{(\AAA,\pi_0)}(U)$ of
superselection sectors which are strictly localized in a region $U\in\CC(S)$
carry the structure of a \textit{locally constant prefactorization algebra}
over the category $\CC(S)$ of localization regions, provided that 
certain rather mild geometric and algebraic assumptions are satisfied. 
(See Assumptions \ref{assu:0} and \ref{assu:1}, and note that 
by Theorem \ref{theo:HaagDualityPFA} the latter follows from the more standard Haag duality property.)
This means that, for every mutually disjoint family of inclusions 
of localization regions $(U_1,\dots,U_n)\to V$, i.e.\ $U_i\subseteq V$ for all $i$ and 
$U_i\cap U_j=\varnothing$ for all $i\neq j$, there exists a \textit{factorization product} 
\begin{flalign}
\bigotimes_{i=1}^n \SSS_{(\AAA,\pi_0)}(U_i)~\xrightarrow{\;\quad\;}~\SSS_{(\AAA,\pi_0)}(V)
\end{flalign}
that allows us to `fuse' superselection sectors localized in disjoint regions.
The term \textit{local constancy} refers to the property that an equivalence of $C^\ast$-categories
$\SSS_{(\AAA,\pi_0)}(U)\xrightarrow{\;\sim\;}\SSS_{(\AAA,\pi_0)}(V)$ 
is assigned to every $1$-ary inclusion $U\subseteq V$ of localization regions,
which formalizes the intuition that superselection sectors are topological objects
that are insensitive to the size and location of the localization region.
This locally constant behavior of superselection sectors 
is implicitly recognized and also computationally utilized since the early works 
\cite{DHR,BuchholzFredenhagen}, see also \cite[Theorem 3.19]{SSS} for a more explicit
statement, however its full scope becomes manifest only in 
combination with the prefactorization algebra structure which we develop in the present paper.
The algebraic structure described above can also be interpreted as a topological categorified
AQFT, see Corollary \ref{cor:2AQFT} and Remark \ref{rem:2AQFT}.
\sk

Prefactorization algebras are a very general and versatile class of algebraic structures.
They were introduced by Costello and Gwilliam \cite{CG1,CG2} in their works 
on perturbative QFT, but they are also of independent 
interest in factorization homology \cite{AyalaFrancis},
higher algebra \cite{LurieHA}, representation theory \cite{BZBJ,BZBJ2,Hataishi}
and algebraic QFT \cite{Henriques,BPSWcategorified}.
The key feature of a prefactorization algebra is that it describes a family of objects 
together with algebraic operations which are dictated intuitively by the 
underlying geometry of the problem. This geometry is abstractly encoded by a choice
of category $\CC$, collecting the `regions of interest' and their inclusions, 
together with a distinguished class of pairs of morphisms $U_1\to V \leftarrow U_2$ encoding 
when two regions $U_1$ and $U_2$ are `geometrically independent' in $V$. One of the prime
examples is given by the category $\mathbf{Open}(X)$ of open subsets of a manifold $X$,
together with the natural notion of geometric independence $U_1\to V \leftarrow U_2$
given by disjointness $U_1\cap U_2=\varnothing$ of subsets. In the context of our paper,
the geometry is encoded by the category of localization regions $\CC(S)$ 
and geometric independence by disjointness of subsets.
\sk

The reader might now rightfully ask how our locally constant prefactorization algebra structures 
relate to the more familiar braided or symmetric monoidal structures 
arising in traditional superselection theory and topological order. The key insight
is that locally constant prefactorization algebras over very simple types 
of geometries specialize to these familiar algebraic structures. For example,
it is shown in \cite[Theorem 5.4.5.9]{LurieHA} that a locally constant
prefactorization algebra which is defined over open disks in $\bbR^n$
is equivalent to the datum of an algebra over the little $n$-disks operad $\mathbb{E}_n$.
This implies that a category-valued locally constant
prefactorization algebra over open disks in $\bbR^n$ is equivalent to
an $\mathbb{E}_n$-monoidal category, i.e.\ a monoidal category for $n=1$, 
a braided monoidal category for $n=2$, and a symmetric monoidal category for 
$n\geq 3$.\footnote{This degenerate behavior of $\mathbb{E}_n$-monoidal categories 
for $n\geq 3$ is a consequence of our restriction to $1$-categories. In the context
of higher category theory, $\mathbb{E}_n$-monoidal $\infty$-categories are distinct 
concepts for all $n\geq 0$, where increasing $n$ leads to more and more commutative objects.}
When translated to our context,
this means that our locally constant prefactorization algebras of superselection
sectors are a priori much more general objects and they only specialize
to the familiar braided or symmetric monoidal categories if the 
category of localization regions $\CC(S)$ is sufficiently simple.
We will investigate these aspects in Section \ref{sec:lattice} 
for the typical example of the category $\CC(S)= \Cone(\bbZ^n)$ 
of cone-shaped subsets in $\bbZ^n$ and show how to recover the
familiar $\mathbb{E}_n$-monoidal categories in this case.
In our analysis we discover also new algebraic structures on 
these categories, related to the homotopy groups of the sphere $\bbS^{n-1}$
encoding the angular coordinates of cones, 
which do not seem to have been observed before. In the
particular case of a $2$-dimensional lattice $\bbZ^2$,
these additional algebraic structures consist of a
self-equivalence of the braided monoidal category of superselection sectors
which admits a geometric interpretation in terms of a kind of `holonomy'
around the circle $\bbS^1$.
\sk

We will now explain our results in more detail by outlining the content of this paper.
In Section~\ref{sec:prelim}, we collect the relevant preliminary material which is needed
for our work. Subsection~\ref{subsec:Cast} recalls some basic notions of 
$C^\ast$-algebras and their categorical aspects.
Subsection \ref{subsec:AQFT} introduces a concept of AQFTs over discrete 
sets $S$, such as e.g.\ the lattice $S=\bbZ^n$, and develops categorical tools 
which allow us to extend any family $\{\AAA(s)\}_{s\in S}$ of $C^\ast$-algebras
indexed by the points of $S$ to an AQFT over $S$. These tools are rooted in
the operadic approach to AQFT \cite{BSWoperad,BSWinvolutive}, which we combine
with the $C^\ast$-algebraic concepts from Subsection \ref{subsec:Cast},
see in particular the main result in Proposition \ref{prop:LKECast}.
Subsection \ref{subsec:CastCat} recalls some basic aspects of the theory of $C^\ast$-categories
following \cite{CastCat} and introduces a variant of prefactorization algebras 
taking values in $C^\ast$-categories, see in particular Definition \ref{def:PFA} and Remark \ref{rem:PFA}.
\sk

In Section \ref{sec:PFA}, we study the $C^\ast$-category 
$\SSS_{(\AAA,\CC(S),\pi_0)}\subseteq  {}^\ast\mathbf{Rep}_{\AAA}$
of superselection sectors (see Definition \ref{def:SSSloc})
of an AQFT over a discrete set $S$ and explore the algebraic structures
which may be defined on this category. One of the key features entering our approach,
which is implicitly known since \cite{DHR,BuchholzFredenhagen} and explicitly stated in \cite[Facts 2.26, (SSS1)]{SSS},
is that there exists a family of unitarily equivalent models 
$\SSS_{(\AAA,\pi_0)}(U)\xrightarrow{\;\sim\;}\SSS_{(\AAA,\CC(S),\pi_0)}$
for this $C^\ast$-category, indexed by the objects $U\in\CC(S)$ 
of the category of localization regions (e.g.\ cones for $S=\bbZ^n$), that describes
superselection sectors which are strictly localized in $U$. This family of $C^\ast$-categories 
assembles into a functor $\SSS_{(\AAA,\pi_0)} : \CC(S)\to \CastCat$ which is locally constant
in the sense that a unitary equivalence $\SSS_{(\AAA,\pi_0)}(U)\xrightarrow{\;\sim\;}\SSS_{(\AAA,\pi_0)}(V)$
of $C^\ast$-categories is assigned to every inclusion $U\subseteq V$ in $\CC(S)$, 
see Lemma \ref{lem:SSSU}. In Subsection \ref{subsec:PFAstructure},
we show that this functor can be extended to a prefactorization algebra 
$\SSS_{(\AAA,\pi_0)} : \P_{\CC(S)^\perp}\to \CastCat$ over the category
of localization regions $\CC(S)^{\perp}$ 
(endowed with the orthogonality relation ${\perp}$ given by disjointness),
provided that certain geometric and algebraic assumptions are satisfied, see Proposition \ref{prop:PFA}. 
On the one hand, our geometric Assumption \ref{assu:0} is rather mild (see also Remark \ref{rem:assu0}) and it is
satisfied by the category of cone-shaped subsets of $S=\bbZ^n$, 
see Proposition \ref{prop:cones}. On the other hand, our algebraic Assumption \ref{assu:1} 
is designed  to facilitate the construction of the factorization products. 
In particular, it holds if the AQFT satisfies Haag duality for all localization regions 
$U\in \CC(S)$, see Theorem \ref{theo:HaagDualityPFA}.
\sk

In Subsections \ref{subsec:monoidal} and \ref{subsec:compatibility}, we review the usual 
monoidal structure on the $C^\ast$-category of superselection sectors from \cite{DHR,BuchholzFredenhagen}
and prove that it is compatible with our prefactorization algebra structure. This compatibility implies
that we obtain a locally constant prefactorization algebra 
$\SSS_{(\AAA,\pi_0)} : \P_{\CC(S)^\perp}\to\Alg_{\mathsf{uAs}}\big(\CastCat\big)$
taking values in the category of strict monoidal $C^\ast$-categories, see Theorem \ref{theo:PFAinMonCat}.
In contrast to the geometric origin of our prefactorization algebra structure,
this monoidal structure is of an analytic nature. It is rooted
in the fact that, assuming Haag duality, every superselection sector $\pi$ can be 
presented by a suitable endomorphism $\rho$, hence one can define a monoidal product
by composition of endomorphisms. The existence of these two compatible types of algebraic structures, i.e.\
a geometric prefactorization algebra structure and an analytic object-wise monoidal structure,
is the key insight to recover from our perspective the results 
from \cite{Naaijkens1,Naaijkens2,NaaijkensChapter,FiedlerNaaijkens,Ogata} 
about the braided monoidal structures on the categories 
of superselection sectors in $2$-dimensional lattice quantum systems.
\sk

In Section \ref{sec:lattice}, we specialize our results to the case where the set $S=\bbZ^n$
is the $n$-dimensional lattice and the category of localization regions $\CC(S)=\Cone(\bbZ^n)$ is given by 
cone-shaped subsets of $\bbZ^n$. In this context, our general results from Section \ref{sec:PFA} 
imply that, assuming Haag duality for all cone-shaped subsets, the categories of localized 
superselection sectors carry the structure of a locally constant prefactorization algebra
$\SSS_{(\AAA,\pi_0)} : \P_{\Cone(\bbZ^n)^\perp}\to\Alg_{\mathsf{uAs}}\big(\CastCat\big)$
taking values in the category of strict monoidal $C^\ast$-categories, see Corollary \ref{cor:latticePFA}.
In Subsection \ref{subsec:infty}, we use powerful techniques from higher algebra, in particular
from $\infty$-categories and $\infty$-operads \cite{LurieHA},
in order to relate these algebraic structures to more familiar ones.
One of the key results is Corollary \ref{cor:latticePFAcylinder}, which shows that 
$\SSS_{(\AAA,\pi_0)}$ has an underlying locally constant prefactorization algebra 
$\und{\SSS}_{(\AAA,\pi_0)}$ defined on open disks in the cylinder $\bbR^1\times \bbS^{n-1}$.
We would like to emphasize that the sphere $\bbS^{n-1}$ arises from the geometry of our context
(it describes the angular coordinates of cones centered at the origin of $\bbZ^n$),
while the line $\bbR^1$ has an analytic origin rooted in Haag duality and the associated 
object-wise monoidal structure. By removing a point of the sphere 
$\bbR^1\times(\bbS^{n-1}\setminus\pt) \cong \bbR^{n}$, we obtain an underlying
locally constant prefactorization algebra $\und{\und{\SSS}}_{(\AAA,\pi_0)}$ 
defined on open disks in $\bbR^n$, which is equivalent
to an $\mathbb{E}_n$-monoidal structure on the $C^\ast$-category of superselection sectors, 
see Corollary \ref{cor:latticeEn}. This provides a conceptual explanation
for the origin of the braided (for $n=2$) or symmetric (for $n\geq 3$) monoidal categories
of superselection sectors in lattice quantum systems on $\bbZ^n$
\cite{Naaijkens1,Naaijkens2,NaaijkensChapter,FiedlerNaaijkens,Ogata}.
It is important to highlight that these $\mathbb{E}_n$-monoidal categories 
are obtained by removing a point of the sphere $\bbS^{n-1}$, hence they forget those 
algebraic structures carried by the locally constant prefactorization algebra $\und{\SSS}_{(\AAA,\pi_0)}$
on $\bbR^1\times\bbS^{n-1}$ which are linked to the homotopy groups of $\bbS^{n-1}$. 
In the special case of a $2$-dimensional lattice $\bbZ^2$, the additional algebraic structure
arising from the circle $\bbS^1$ can be identified explicitly with a self-equivalence
of the braided monoidal category of superselection sectors, see Corollary \ref{cor:selfequivalence}.
This self-equivalence can be interpreted as a kind of `holonomy' arising from 
rotating cones in $\bbZ^2$ by $2\pi$ around their apex.
\sk

In Subsection \ref{subsec:explicit}, we spell out some more 
concrete computational details for the case of $\bbZ^2$. In particular, we show 
how one can compute the braided monoidal structure on the category of superselection sectors  $\SSS_{(\AAA,\CC(S),\pi_0)}$
explicitly from our locally constant prefactorization algebra 
$\SSS_{(\AAA,\pi_0)} : \P_{\Cone(\bbZ^2)^\perp}\to\Alg_{\mathsf{uAs}}\big(\CastCat\big)$
and observe that the result 
agrees with the traditional approach in \cite{DHR,BuchholzFredenhagen}.
We further explain how to compute the self-equivalence of this braided monoidal category 
which is associated to the circle $\bbS^1$ and show that it is trivial 
(i.e.\ the identity) for Kitaev's quantum double model
for an Abelian group, see Example \ref{ex:Kitaevmonodromy}.
We believe that this triviality result is not due to 
the simplicity of this example, but it is rooted deeper in the type
of superselection sectors which are commonly used for lattice quantum systems.
These correspond to the traditional superselection sectors from \cite{DHR,BuchholzFredenhagen},
which in other contexts are known to neglect information about the underlying `spacetime' topology. 
In more detail, such traditional superselection sectors
correspond in the terminology of Brunetti and Ruzzi to `topologically trivial sectors' \cite{BrunettiRuzzi},
which are a subclass of the more general superselection sectors introduced in their work
building upon ideas of Roberts \cite{Roberts} on non-Abelian cohomology.
We believe that adapting these more general superselection sectors to lattice quantum systems
will produce a richer $C^\ast$-category of superselection sectors that depends 
on the homotopy groups of the category of cone-shaped subsets $\Cone(\bbZ^2)$
and thereby leads to a non-trivial self-equivalence in Corollary \ref{cor:selfequivalence}.
We hope to come back to this in future work.


\section{\label{sec:prelim}Preliminaries}

\subsection{\label{subsec:Cast}$C^\ast$-algebras}
In this subsection we recall some basic facts about 
$\ast$-algebras, pre-$C^\ast$-algebras and $C^\ast$-algebras
over the field $\bbC$ of complex numbers, see e.g.~\cite{Murphy}.
For our purposes, it will be most convenient to approach this subject 
from a categorical point of view following \cite{Bunke,BunkeLecture}.
\begin{defi}\label{def:astAlg}
A \textit{$\ast$-algebra} is a unital associative algebra $A$ over $\bbC$
with a complex antilinear involution $\ast : A\to A\,,~a\mapsto a^\ast$
that reverses the order of multiplication,
i.e.\ 
\begin{flalign}
(\lambda \,a + \lambda^\prime\, a^\prime)^\ast \,=\, \overline{\lambda} \, a^\ast + 
\overline{\lambda^\prime}\,a^{\prime \ast}
~,\quad
(a\,a^\prime)^\ast \,=\, a^{\prime\ast}\,a^\ast~,\quad
a^{\ast\ast}\,=\,a~,
\end{flalign}
for all $a,a^\prime\in A$ and $\lambda,\lambda^\prime\in\bbC$.
We denote by $\astAlg$ the category whose objects are all 
$\ast$-algebras and whose morphisms are all $\ast$-homomorphisms, i.e.\ 
$f:A\to B$ is a morphism of unital associative algebras such that 
$f(a^\ast) = f(a)^\ast$, for all $a\in A$.
\end{defi}

Let us recall that the traditional definition 
of a $C^\ast$-algebra is that of a Banach $\ast$-algebra $A$
whose norm $\norm{\cdot}_A: A\to [0,\infty)$ satisfies the 
$C^\ast$-identity $\norm{a^\ast\,a}_A = \norm{a}_A^2$, for all $a\in A$.
The norm of a $C^\ast$-algebra turns out to be fixed uniquely
by its other structures, see e.g.\ \cite[Corollary 2.1.2]{Murphy},
and every $\ast$-homomorphism $f:A\to B$
between two $C^\ast$-algebras is automatically continuous, 
see e.g.\ \cite[Theorem 2.1.7]{Murphy}. This justifies the following definition.
\begin{defi}\label{def:CastAlg}
A \textit{$C^\ast$-algebra} is $\ast$-algebra $A\in \astAlg$ 
which has the property that it admits a
$C^\ast$-norm turning it into a Banach $\ast$-algebra.
The category of $C^\ast$-algebras is defined as the full subcategory
\begin{flalign}
\CastAlg\,\subseteq\,\astAlg
\end{flalign}
of the category of $\ast$-algebras from Definition \ref{def:astAlg}
whose objects are all $C^\ast$-algebras.
\end{defi}

For every $\ast$-algebra $A\in \astAlg$, one can define a function
$\norm{\cdot}_{\max} : A\to [0,\infty]$ by setting
\begin{flalign}\label{eqn:maxnorm}
\norm{a}_{\max}\,:=\,\sup_{f: A\to B}\norm{f(a)}_B~,
\end{flalign}
for all $a\in A$, where the supremum is taken over all $\ast$-homomorphisms
$f: A\to B$ to a $C^\ast$-algebra $B\in\CastAlg$ with $C^\ast$-norm denoted by $\norm{\cdot}_B$.
\begin{defi}\label{def:preCastAlg}
A \textit{pre-$C^\ast$-algebra} is a $\ast$-algebra $A\in \astAlg$ such that $\norm{a}_{\max} < \infty$
is finite, for all $a\in A$. The category of pre-$C^\ast$-algebras is defined as the full subcategory
\begin{flalign}
\preCastAlg\,\subseteq\,\astAlg
\end{flalign}
of the category of $\ast$-algebras from Definition \ref{def:astAlg}
whose objects are all pre-$C^\ast$-algebras.
\end{defi}
\begin{rem}\label{rem:chainofinclusions}
For every $C^\ast$-algebra $A\in\CastAlg$ one can show that $\norm{\cdot}_{\max}  = \norm{\cdot}_A$
agrees with its $C^\ast$-norm. This implies that the full subcategories
from Definitions \ref{def:CastAlg} and \ref{def:preCastAlg} satisfy
\begin{flalign}
\CastAlg\,\subseteq\, \preCastAlg\,\subseteq\, \astAlg~,
\end{flalign}
see e.g.\ \cite[Example 2.11]{Bunke} or \cite[Lemma 7.7]{BunkeLecture}.
\end{rem}

The following useful result has been shown in \cite[Section 3]{Bunke} 
and \cite[Section 7]{BunkeLecture}, which also provides 
explicit models for the adjoint functors displayed below.
\begin{propo}\label{prop:Castadjunctions}
\begin{itemize}
\item[(a)] There exists an adjunction
\begin{equation}
\begin{tikzcd}
\mathrm{incl}\,:\,\preCastAlg
\ar[r, shift left=0.75ex] & \ar[l, shift left=0.75ex]
\astAlg\,:\,\mathrm{Bd}^\infty
\end{tikzcd}
\end{equation}
whose left adjoint $\mathrm{incl}$ is the full subcategory inclusion $\preCastAlg\subseteq \astAlg$
from Definition~\ref{def:preCastAlg}. Hence, the category of pre-$C^\ast$-algebras 
$\preCastAlg$ is a coreflective full subcategory of the category of $\ast$-algebras $\astAlg$.

\item[(b)] There exists an adjunction
\begin{equation}
\begin{tikzcd}
\mathrm{compl}\,:\, \preCastAlg 
\ar[r, shift left=0.75ex] & \ar[l, shift left=0.75ex]
\CastAlg\,:\,\mathrm{incl}
\end{tikzcd}
\end{equation}
whose right adjoint $\mathrm{incl}$ is the full subcategory inclusion
$\CastAlg\subseteq \preCastAlg$ from Remark \ref{rem:chainofinclusions}.
Hence, the category of $C^\ast$-algebras 
$\CastAlg$ is a reflective full subcategory of the category of pre-$C^\ast$-algebras $\preCastAlg$.
\end{itemize}
\end{propo}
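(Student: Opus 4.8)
The plan is to derive both adjunctions from a single elementary observation about the maximal seminorm $\norm{\cdot}_{\max}$ defined in \eqref{eqn:maxnorm}. The key technical fact, which I would establish first, is that every $\ast$-homomorphism $f:A\to A'$ is \emph{contractive} for the maximal seminorms, i.e.\ $\norm{f(a)}_{\max}\leq\norm{a}_{\max}$ for all $a\in A$. This is immediate: the supremum defining $\norm{f(a)}_{\max}$ ranges over $\ast$-homomorphisms $g:A'\to B$ into $C^\ast$-algebras, and precomposition $g\mapsto g\circ f$ exhibits this family as a subset of the (generally larger) family of $\ast$-homomorphisms $A\to B$ appearing in $\norm{a}_{\max}$. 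In the same elementary way I would record that $\norm{\cdot}_{\max}$ is a $C^\ast$-\emph{seminorm} on any $A\in\astAlg$: subadditivity, homogeneity, submultiplicativity and $\norm{a^\ast}_{\max}=\norm{a}_{\max}$ are inherited termwise from the $C^\ast$-norms $\norm{\cdot}_B$, and the $C^\ast$-identity $\norm{a^\ast a}_{\max}=\norm{a}_{\max}^2$ follows from $\norm{f(a^\ast a)}_B=\norm{f(a)}_B^2$ in each target together with $\sup_f(\norm{f(a)}_B^2)=(\sup_f\norm{f(a)}_B)^2$. Note also $\norm{1}_{\max}\leq 1<\infty$ since every morphism in $\astAlg$ is unital.

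For part (a), I would define the right adjoint $\mathrm{Bd}^\infty$ on objects by $\mathrm{Bd}^\infty(A):=\{a\in A : \norm{a}_{\max}<\infty\}$, the subset of elements of finite maximal seminorm. Because $\norm{\cdot}_{\max}$ is a $C^\ast$-seminorm, this subset is closed under all $\ast$-algebra operations and contains the unit, so it is an object of $\astAlg$, and by construction it lies in $\preCastAlg$. Functoriality is automatic from the contractivity property above, which guarantees that any $\ast$-homomorphism restricts to $\mathrm{Bd}^\infty$. The counit is the inclusion $\mathrm{Bd}^\infty(A)\hookrightarrow A$, and I would verify its universal property directly: given a pre-$C^\ast$-algebra $P$ and a $\ast$-homomorphism $f:P\to A$, contractivity yields $\norm{f(p)}_{\max}\leq\norm{p}_{\max}<\infty$ for all $p\in P$, so $f$ factors through $\mathrm{Bd}^\infty(A)$, and the factorisation is unique because the inclusion is a monomorphism. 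This exhibits $\preCastAlg$ as a coreflective full subcategory of $\astAlg$.

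For part (b), I would build the left adjoint $\mathrm{compl}$ by the standard separation-and-completion construction applied to $\norm{\cdot}_{\max}$, which is now finite on all of $P\in\preCastAlg$. Concretely: quotient by the $\ast$-ideal $N:=\{p\in P : \norm{p}_{\max}=0\}$ (two-sided since $\norm{ap}_{\max}\leq\norm{a}_{\max}\norm{p}_{\max}$ with $\norm{a}_{\max}<\infty$), on which $\norm{\cdot}_{\max}$ descends to a genuine $C^\ast$-norm, and then complete to obtain $\mathrm{compl}(P)$. Since multiplication is continuous, the involution is isometric, and the $C^\ast$-identity passes to the completion by continuity, $\mathrm{compl}(P)$ is a $C^\ast$-algebra, with unit $P\to\mathrm{compl}(P)$ the evident composite. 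For the universal property, given a $C^\ast$-algebra $B$ and a $\ast$-homomorphism $f:P\to B$, the term $\norm{f(p)}_B$ occurs in the supremum \eqref{eqn:maxnorm}, so $\norm{f(p)}_B\leq\norm{p}_{\max}$; hence $f$ annihilates $N$, descends to $P/N$, and extends uniquely by continuity to $\mathrm{compl}(P)$ using completeness of $B$ and density of the image of $P$. This exhibits $\CastAlg$ as a reflective full subcategory of $\preCastAlg$.

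I expect the main obstacle to be not any single deep step but the careful bookkeeping of the seminorm estimates concentrated in the first paragraph: verifying that $\norm{\cdot}_{\max}$ is genuinely a $C^\ast$-seminorm (in particular the $C^\ast$-identity, where one must interchange the supremum with squaring) and that $\ast$-homomorphisms are uniformly contractive for it. Once these facts are secured, both adjunctions reduce to formal constructions — restriction to bounded elements in (a) and separated completion in (b) — together with routine checks of functoriality and naturality. A minor point to handle with care, exactly as in \cite{Bunke,BunkeLecture}, is the set-theoretic size of the supremum in \eqref{eqn:maxnorm}, which one controls by bounding over a representative set of quotients rather than a proper class of all $\ast$-homomorphisms.
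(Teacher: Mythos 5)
Your preliminary observations are correct: $\norm{\cdot}_{\max}$ is a $C^\ast$-seminorm where it is finite (the interchange of supremum and squaring is legitimate on $[0,\infty]$), and every $\ast$-homomorphism is contractive for the maximal seminorms by the precomposition argument. Your part (b) is also correct, and it coincides with the model the paper itself records later (see the proof of Lemma \ref{lem:AQFTcompletion}, where $\mathrm{compl}$ is described as separation by the null ideal $I_A=\{a:\norm{a}_{\max}=0\}$ followed by completion); note that the paper offers no proof of Proposition \ref{prop:Castadjunctions} but defers entirely to \cite{Bunke,BunkeLecture}, so the relevant comparison is with the arguments there.

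Part (a), however, contains a genuine gap at the sentence ``and by construction it lies in $\preCastAlg$''. By Definition \ref{def:preCastAlg}, membership of $A':=\{a\in A:\norm{a}^{A}_{\max}<\infty\}$ in $\preCastAlg$ requires finiteness of the maximal seminorm computed \emph{intrinsically} on $A'$, i.e.\ the supremum over all $\ast$-homomorphisms $A'\to B$ into $C^\ast$-algebras — a class strictly larger in general than the restrictions of $\ast$-homomorphisms out of $A$. Your own contractivity lemma, applied to the inclusion $A'\hookrightarrow A$, gives $\norm{a}^{A}_{\max}\leq\norm{a}^{A'}_{\max}$, which is the wrong direction: passing to the bounded part can enlarge spectra (an element $\lambda-a^\ast a$ invertible in $A$ need not be invertible in $A'$, since the inverse need not be bounded) and thereby create new $\ast$-homomorphisms along which $a$ is unbounded. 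This is exactly why the right adjoint in \cite{Bunke,BunkeLecture} is denoted $\mathrm{Bd}^\infty$ rather than $\mathrm{Bd}$: one must iterate the single-step bounded-part construction transfinitely until the decreasing chain of $\ast$-subalgebras stabilizes, and only the stable value is pre-$C^\ast$ (by definition of stability). The repair is compatible with the rest of your argument: your verification of the universal property applies verbatim at each stage, since any $f:P\to A$ with $P\in\preCastAlg$ factors by contractivity through every iterate and hence through the limit. Without this iteration, the proof of (a) as written is incomplete.
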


Using that the category of $\ast$-algebras $\astAlg$ is both complete and cocomplete
(i.e.\ bicomplete) and a general categorical argument as in \cite[Proposition 4.5]{BunkeReview}, 
it follows that the categories $\preCastAlg$ and $\CastAlg$ are bicomplete too. One can say even
more about how limits and colimits may be computed, see also
\cite[Section 8]{Bunke} and \cite[Section 7]{BunkeLecture}.
\begin{cor}\label{cor:bicompleteness}
The categories $\preCastAlg$ and $\CastAlg$ are bicomplete. Furthermore:
\begin{itemize}
\item[(a)] Colimits in $\preCastAlg$ are 
created by the inclusion functor $\mathrm{incl}:\preCastAlg\to \astAlg$ and 
the limit of a diagram $X : \DD \to \preCastAlg$ can be computed by
\begin{flalign}
\lim\Big(X : \DD \to \preCastAlg\Big)\,\cong\, \mathrm{Bd}^\infty\Big(
\lim\Big(X : \DD \to \astAlg\Big)\Big)
\end{flalign}
from the limit of this diagram in $\astAlg$.

\item[(b)] Limits in $\CastAlg$ are 
created by the inclusion functor $\mathrm{incl}:\CastAlg\to \preCastAlg$ 
and the colimit of a diagram $X : \DD \to \CastAlg$ can be computed by
\begin{flalign}
\colim\Big(X : \DD \to \CastAlg\Big)\,\cong\, \mathrm{compl}\Big(
\colim\Big(X : \DD \to \preCastAlg\Big)\Big)
\end{flalign}
from the colimit of this diagram in $\preCastAlg$.
\end{itemize}
\end{cor}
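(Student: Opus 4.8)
The plan is to derive both statements formally from the two adjunctions recorded in Proposition~\ref{prop:Castadjunctions}, together with the bicompleteness of $\astAlg$, by invoking the standard theory of reflective and coreflective full subcategories. Proposition~\ref{prop:Castadjunctions}(a) exhibits $\preCastAlg$ as a \emph{coreflective} full subcategory of $\astAlg$: the fully faithful inclusion $\mathrm{incl}$ is the \emph{left} adjoint, with right adjoint $\mathrm{Bd}^\infty$, so the unit $\id\to\mathrm{Bd}^\infty\,\mathrm{incl}$ is a natural isomorphism. Dually, Proposition~\ref{prop:Castadjunctions}(b) exhibits $\CastAlg$ as a \emph{reflective} full subcategory of $\preCastAlg$: the fully faithful inclusion $\mathrm{incl}$ is the \emph{right} adjoint, with left adjoint $\mathrm{compl}$, so the counit $\mathrm{compl}\,\mathrm{incl}\to\id$ is a natural isomorphism. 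Since $\astAlg$ is bicomplete, it then suffices to transport (co)limits across these two adjunctions.

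First I would establish the two explicit formulas, which are the purely formal halves. For a diagram $X:\DD\to\preCastAlg$ the limit $L:=\lim(\mathrm{incl}\,X)$ exists in the bicomplete category $\astAlg$, and for every $A\in\preCastAlg$ the adjunction $\mathrm{incl}\dashv\mathrm{Bd}^\infty$, the universal property of $L$, and full faithfulness of $\mathrm{incl}$ yield natural isomorphisms
\[
\Hom_{\preCastAlg}\big(A,\mathrm{Bd}^\infty(L)\big)\,\cong\,\Hom_{\astAlg}\big(\mathrm{incl}\,A,L\big)\,\cong\,\lim\,\Hom_{\preCastAlg}(A,X)~.
\]
By the Yoneda lemma this identifies $\mathrm{Bd}^\infty(L)$ with the limit of $X$ in $\preCastAlg$, which is the formula in~(a). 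The colimit formula in~(b) is entirely dual: for $X:\DD\to\CastAlg$ with $K:=\colim(\mathrm{incl}\,X)$ computed in $\preCastAlg$, the adjunction $\mathrm{compl}\dashv\mathrm{incl}$ and full faithfulness give $\Hom_{\CastAlg}(\mathrm{compl}(K),B)\cong\Hom_{\preCastAlg}(K,\mathrm{incl}\,B)\cong\lim\,\Hom_{\CastAlg}(X,B)$ for all $B\in\CastAlg$, so that $\mathrm{compl}(K)$ is the colimit of $X$ in $\CastAlg$.

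It remains to show that colimits in $\preCastAlg$ are created by $\mathrm{incl}$ and, dually, that limits in $\CastAlg$ are created by $\mathrm{incl}$; equivalently, that $\preCastAlg\subseteq\astAlg$ is closed under colimits and $\CastAlg\subseteq\preCastAlg$ is closed under limits. This closure is the only non-formal point, and is where I expect the main work to lie. For part~(a), given $X:\DD\to\preCastAlg$ with $d:=\colim(\mathrm{incl}\,X)$ in $\astAlg$, I would verify that the counit $\mathrm{incl}\,\mathrm{Bd}^\infty(d)\to d$ is an isomorphism: it is invertible on each object $\mathrm{incl}\,X(j)$ by the triangle identities (the unit being a natural isomorphism), and one propagates this through the colimiting cocone, using naturality of the counit together with the universal property, to produce a two-sided inverse. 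Hence $d$ lies, up to isomorphism, in $\preCastAlg$ and is the colimit there, so $\mathrm{incl}$ creates colimits; the dual argument with the unit on a limit object handles the reflective case in~(b). Assembling the pieces: $\preCastAlg$ inherits all colimits from the bicomplete $\astAlg$ and has all limits by the $\mathrm{Bd}^\infty$-formula, hence is bicomplete; then $\CastAlg$ inherits all limits from the now-bicomplete $\preCastAlg$ and has all colimits by the $\mathrm{compl}$-formula, hence is bicomplete as well. The main obstacle is thus precisely the (co)unit invertibility on (co)limit objects, which follows formally from (co)reflectivity and, at the level of the underlying analysis, corresponds to the facts that an $\astAlg$-colimit of pre-$C^\ast$-algebras still has finite $\norm{\cdot}_{\max}$ and that a $\preCastAlg$-limit of $C^\ast$-algebras is again complete.
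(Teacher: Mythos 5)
Your proposal is correct and follows essentially the same route as the paper, which simply invokes the general categorical fact that a coreflective (resp.\ reflective) full subcategory of a bicomplete category is closed under colimits (resp.\ limits) and acquires the missing (co)limits by applying the (co)reflector, citing \cite[Proposition 4.5]{BunkeReview} rather than spelling it out. The only step you gloss slightly is the two-sided invertibility of the counit on the colimit object (one needs that (pre)composition with the (co)unit is bijective on the relevant hom-sets, not just the bare universal property), but this is the standard closure argument and poses no real gap.
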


\begin{rem}\label{rem:bicompleteness}
Combining items (a) and (b) from Corollary \ref{cor:bicompleteness},
it follows that the colimit of a diagram $X : \DD \to \CastAlg$ can be computed by
\begin{flalign}
\colim\Big(X : \DD \to \CastAlg\Big)\,\cong\, \mathrm{compl}\Big(
\colim\Big(X : \DD \to \astAlg\Big)\Big)
\end{flalign}
from the colimit of this diagram in $\astAlg$. 
\end{rem}

\subsection{\label{subsec:AQFT}Algebraic quantum field theories over sets}
In this subsection we describe a variant of algebraic quantum
field theories (AQFTs) in the context where the 
underlying ``space(time)''\footnote{In our context, it is more accurate to think about the underlying space only.
The time direction does not play a significant role, and is only used 
to inform the selection of a suitable reference representation.
The AQFTs we consider could hence be thought of as living on a lattice over a Cauchy surface.}
is given by a possibly infinite set $S\in \Set$, 
such as e.g.\ the lattice $\bbZ^n$. For this we use some aspects of 
the operadic approach to AQFT from \cite{BSWoperad} and its generalization
to the $\ast$-involutive setting from \cite{BSWinvolutive}.
The aim of this subsection is to fix our notations and formalize through a universal property
(see Proposition \ref{prop:LKECast}) a standard, but rather ad hoc, extension construction 
(see Remark \ref{rem:qspin}) which is frequently used in the context of quantum spin systems.
We believe that our perspective will be particularly valuable for categorically-minded
readers to understand conceptually this extension construction,
however we do not expect that it will lead to new 
practical or computational advantages for quantum spin systems.
\begin{defi}\label{def:SubS}
Given any set $S\in \Set$, we denote by $\Sub(S)$ the category
whose objects are all subsets $U\subseteq S$ and whose morphisms
are subset inclusions $U\subseteq V$. We say that two morphisms
$U_1\subseteq V$ and $U_2\subseteq V$ are \textit{orthogonal},
written as $(U_1\subseteq V)\perp(U_2\subseteq V)$, if 
$U_1\cap U_2=\varnothing$ are disjoint.
\end{defi}

\begin{defi}\label{def:AQFT}
A \textit{$\ast$-AQFT\,\footnote{The terminology $\ast$-AQFT
is chosen as in \cite{BSWinvolutive} to emphasize the presence of
$\ast$-involutive structures and thereby to distinguish this concept
from the `purely algebraic' AQFTs (taking values in an arbitrary symmetric monoidal category)
appearing in other works of some of the authors, 
see e.g.\ \cite{BSWoperad} and \cite{Carmona}.} 
over a set $S\in \Set$} is a functor
$\AAA : \Sub(S)\to \astAlg$ from the category of subsets of $S$ (see Definition \ref{def:SubS})
to the category of $\ast$-algebras (see Definition \ref{def:astAlg})
which satisfies the following $\perp$-commutativity property: 
For all orthogonal pairs $(U_1\subseteq V)\perp (U_2\subseteq V)$,
the diagram
\begin{equation}\label{eqn:perpcommutativity}
\begin{tikzcd}
\ar[d] \AAA(U_1)\otimes \AAA(U_2) \ar[r]& \AAA(V)\otimes \AAA(V)\ar[d,"\mu^\op"] \\
\AAA(V)\otimes\AAA(V) \ar[r,"\mu"']& \AAA(V)
\end{tikzcd}
\end{equation}
of linear maps between vector spaces commutes, where $\mu^{(\op)}$ denotes the (opposite)
multiplication in the $\ast$-algebra $\AAA(V)\in\astAlg$
and the unlabeled arrows are induced by the $\ast$-homomorphisms
$\AAA(U_i)\to \AAA(V)$ obtained by applying the functor $\AAA$ to the 
subset inclusions $U_i\subseteq V$, for $i=1,2$.
The category of $\ast$-AQFTs over $S$ is defined as the full subcategory
\begin{flalign}
\astAQFT(S)\,\subseteq\, \Fun\big(\Sub(S),\astAlg\big)
\end{flalign}
of the functor category whose objects are all $\ast$-AQFTs over $S$.
\end{defi}

\begin{rem}\label{rem:AQFT}
This definition of $\ast$-AQFTs in terms of functors satisfying 
the $\perp$-commutativity property is similar to the traditional 
description of AQFTs in \cite{HaagKastler}. 
(In this relativistic context, the orthogonality relation $\perp$ is 
given by causal disjointness of subsets of the Minkowski spacetime.)
There also exists an equivalent, but more elegant and powerful, definition of ($\ast$-)AQFTs
in terms of ($\ast$-)algebras over ($\ast$-)operads, see \cite{BSWoperad} and \cite{BSWinvolutive}.
Indeed, Definition \ref{def:SubS} defines an orthogonal category $\Sub(S)^\perp := (\Sub(S),\perp)$
in the sense of \cite[Definition 3.4]{BSWoperad}, hence there exists an associated
AQFT ($\ast$-)operad $\O_{\Sub(S)^\perp}$, see \cite[Definition 3.8]{BSWoperad}
and \cite[Proposition 7.8]{BSWinvolutive}. 
The result in \cite[Proposition 7.11]{BSWinvolutive} then shows that 
the category from Definition \ref{def:AQFT} is equivalent
\begin{flalign}\label{eqn:operadalgebra}
\astAQFT(S) \,\simeq\, {}^\ast\Alg_{\O_{\Sub(S)^\perp}}\big(\Vec_\bbC\big)
\end{flalign}
to the category of $\ast$-algebras over the AQFT $\ast$-operad with values
in the involutive closed symmetric monoidal category $\Vec_\bbC$ of complex vector spaces.
\end{rem}

The operadic perspective from Remark \ref{rem:AQFT} allows
us to provide a conceptual formalization of a useful 
construction in the context of lattice quantum systems 
which extends a family of $\ast$-algebras $\{\AAA(s)\}_{s\in S}$
indexed by the points of $S$ to a $\ast$-AQFT over $S$, see e.g.\ 
\cite{Naaijkens1,Naaijkens2,NaaijkensChapter,FiedlerNaaijkens,Ogata}
and also Remark \ref{rem:qspin} below.
Regarding the set $S\in\Set$ as a discrete category,
we denote by 
\begin{flalign}\label{eqn:jfunctor}
j\,:\, S~\xrightarrow{\;\quad\;}~\Sub(S)~,~~s~\xmapsto{\;\quad\;}~j(s)\,:=\,\Big(\{s\}\subseteq S\Big)
\end{flalign}
the functor which assigns to an element $s\in S$ 
the singleton subset $j(s) = \{s\}\in\Sub(S)$.
Note that this functor is fully faithful and that 
restricting the orthogonality relation $\perp$ on $\Sub(S)$ from Definition
\ref{def:SubS} defines the trivial (i.e.\ empty) orthogonality relation on $S$.
The result in \cite[Proposition 4.6]{BSWoperad}, combined with its 
$\ast$-involutive generalization in \cite[Corollary 7.13]{BSWinvolutive},
then specializes in our context to the following
\begin{propo}\label{prop:LKE}
The orthogonal functor \eqref{eqn:jfunctor} induces an adjunction
\begin{equation}
\begin{tikzcd}
j_!\,:\, \astAlg^S 
\ar[r, shift left=0.75ex] & \ar[l, shift left=0.75ex]
\astAQFT(S)\,:\,j^\ast
\end{tikzcd}
\end{equation}
which exhibits the product category $\astAlg^S := \prod_{s\in S}\astAlg$
as a coreflective full subcategory of the category of $\ast$-AQFTs from Definition \ref{def:AQFT},
i.e.\ the left adjoint $j_!$, which is given by operadic left Kan extension, is a fully faithful functor.
The right adjoint functor $j^\ast$ is given concretely by assigning 
to each object $\AAA\in\astAQFT(S)$ the family of $\ast$-algebras 
$j^\ast(\AAA) = \big\{\AAA(\{s\})\big\}_{s\in S}\in \astAlg^S$. 
\end{propo}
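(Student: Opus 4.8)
The plan is to obtain this statement as a specialization of the general operadic adjunction induced by an orthogonal functor, using the identifications from Remark~\ref{rem:AQFT}. First I would make the orthogonal-categorical input precise: regarding $S$ as a discrete category and equipping it with the empty orthogonality relation obtained by restricting disjointness along $j$, so that $j : S^\perp\to\Sub(S)^\perp$ becomes an orthogonal functor. Preservation of orthogonality is vacuous, since $S$ has no non-identity morphisms and hence no orthogonal cospans, and $j$ is fully faithful as already noted. I would then record the identification of the source category: an $n$-ary operation $(s_1,\dots,s_n)\to t$ in the AQFT operad $\O_{S^\perp}$ requires a morphism $s_i\to t$ for each $i$, which in the discrete category $S$ forces $s_i = t$; since there is no orthogonality to quotient by, the endomorphism operad at each point is the associative operad $\As$ and distinct points do not interact. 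Hence ${}^\ast\Alg_{\O_{S^\perp}}(\Vec_\bbC)\simeq\prod_{s\in S}\astAlg = \astAlg^S$, in parallel with the equivalence \eqref{eqn:operadalgebra} for $\Sub(S)^\perp$.

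With these identifications in place, I would invoke \cite[Proposition 4.6]{BSWoperad} together with its $\ast$-involutive upgrade \cite[Corollary 7.13]{BSWinvolutive}: every orthogonal functor induces on the associated categories of $\ast$-algebras an adjunction whose right adjoint $j^\ast$ is restriction along the operad morphism $\O_j : \O_{S^\perp}\to\O_{\Sub(S)^\perp}$, and whose left adjoint $j_!$ is operadic left Kan extension. Unwinding the restriction functor through the identifications above yields exactly the concrete description $j^\ast(\AAA) = \{\AAA(\{s\})\}_{s\in S}$, since restriction evaluates an $\ast$-AQFT on the image singletons $j(s) = \{s\}$.

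The substantive point — and the step I expect to be the main obstacle — is the full faithfulness of $j_!$, equivalently the coreflectivity claim, for which I would show that the unit $\eta : \id\Rightarrow j^\ast j_!$ is a natural isomorphism, i.e.\ that $(j_! A)(\{s\})\cong A_s$ for every family $A = \{A_s\}_{s\in S}$. Here I would compute the value of the operadic left Kan extension at a singleton $\{s\}$ as a colimit indexed by operations of $\O_{\Sub(S)^\perp}$ with target $\{s\}$ and inputs in the image of $j$; since the only singleton subset contained in $\{s\}$ is $\{s\}$ itself, the indexing diagram is essentially trivial (the relevant operations are the associative-operad operations on a single $s$-labelled input, with no $\perp$-commutativity relations imposed), so the colimit collapses to $A_s$ with its given associative $\ast$-algebra structure. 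The delicate part is checking that the $\ast$-linear structure is preserved under this operadic colimit, which is precisely what the constructions underlying \cite[Proposition 4.6]{BSWoperad} and \cite[Corollary 7.13]{BSWinvolutive} guarantee; the full faithfulness of $j$ together with the triviality of the orthogonality relation on $S$ are exactly the hypotheses these results require in order to conclude that $j_!$ is fully faithful and exhibits $\astAlg^S$ as a coreflective full subcategory of $\astAQFT(S)$.
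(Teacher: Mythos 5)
Your proposal is correct and follows essentially the same route as the paper, which obtains the statement by specializing \cite[Proposition 4.6]{BSWoperad} and \cite[Corollary 7.13]{BSWinvolutive} to the fully faithful orthogonal functor $j$ with the trivial restricted orthogonality relation on $S$; the paper gives no further proof beyond this citation. Your additional unwinding of the identification ${}^\ast\Alg_{\O_{S^\perp}}(\Vec_\bbC)\simeq\astAlg^S$ and of the unit at singletons is consistent with, and a reasonable elaboration of, that specialization.
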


We would like to emphasize that operadic constructions as
in Proposition \ref{prop:LKE} exist a priori only for AQFTs assigning 
$\ast$-algebras, but not necessarily for AQFTs assigning (pre-)$C^\ast$-algebras. 
The reason for this is that (pre-)$C^\ast$-algebras can not, to the best of our knowledge, be presented as 
categories of $\ast$-algebras in a suitable cocomplete involutive closed symmetric 
monoidal category, which is however essential for the operadic approach in \cite{BSWoperad,BSWinvolutive}.
In the remaining part of this subsection we will study the interplay between the constructions
presented above and the (pre-)$C^\ast$-algebraic concepts from Subsection \ref{subsec:Cast}.
For this we introduce the following terminology.
\begin{defi}\label{def:CastAQFT}
An object $\AAA\in\astAQFT(S)$ is called a \textit{(pre-)$C^\ast$-AQFT} if, for each $U\in\Sub(S)$,
the $\ast$-algebra $\AAA(U)\in C^\ast_{(\mathrm{pre})}\Alg_\bbC \subseteq \astAlg$ is contained 
in the full subcategory of (pre-)$C^\ast$-algebras from Definitions  \ref{def:CastAlg} and  \ref{def:preCastAlg}. 
We denote by
\begin{flalign}
\CastAQFT(S)\,\subseteq\,\preCastAQFT(S) \,\subseteq\,\astAQFT(S)
\end{flalign}
the full subcategories of $C^\ast$-AQFTs and pre-$C^\ast$-AQFTs over $S$.
\end{defi}

\begin{lem}\label{lem:AQFTcompletion}
The adjunction $\mathrm{compl}:\preCastAlg \rightleftarrows \CastAlg : \mathrm{incl}$ 
from Proposition \ref{prop:Castadjunctions} (b) induces via post-composition an adjunction
\begin{equation}
\begin{tikzcd}
\mathrm{compl}\,:\,\preCastAQFT(S)
\ar[r, shift left=0.75ex] & \ar[l, shift left=0.75ex]
\CastAQFT(S)\,:\,\mathrm{incl}
\end{tikzcd}
\end{equation}
between the categories of pre-$C^\ast$-AQFTs and $C^\ast$-AQFTs.
\end{lem}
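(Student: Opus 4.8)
The plan is to derive the claimed adjunction by restricting, along full subcategory inclusions, the post-composition adjunction that the algebra-level adjunction of Proposition~\ref{prop:Castadjunctions}(b) induces on functor categories. I will rely on the elementary principle that an adjunction $F:\C\rightleftarrows\D:G$ restricts to full subcategories $\C_0\subseteq\C$ and $\D_0\subseteq\D$ whenever $F(\C_0)\subseteq\D_0$ and $G(\D_0)\subseteq\C_0$; indeed, fullness then yields $\Hom_{\D_0}(FX,Y)\cong\Hom_{\D}(FX,Y)\cong\Hom_{\C}(X,GY)\cong\Hom_{\C_0}(X,GY)$ naturally in $X\in\C_0$ and $Y\in\D_0$, and the (co)unit components automatically lie in the subcategories.

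First I would recall that post-composition with the adjoint pair $\mathrm{compl}\dashv\mathrm{incl}$ produces an adjunction $\mathrm{compl}_\ast:\Fun(\Sub(S),\preCastAlg)\rightleftarrows\Fun(\Sub(S),\CastAlg):\mathrm{incl}_\ast$, whose unit and counit are obtained by whiskering those of $\mathrm{compl}\dashv\mathrm{incl}$. Since $\preCastAlg\subseteq\astAlg$ and $\CastAlg\subseteq\astAlg$ are full, the categories $\preCastAQFT(S)$ and $\CastAQFT(S)$ are precisely the full subcategories of $\Fun(\Sub(S),\preCastAlg)$ and $\Fun(\Sub(S),\CastAlg)$ cut out by the $\perp$-commutativity condition of Definition~\ref{def:AQFT}. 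By the principle above, it then suffices to check that $\mathrm{compl}_\ast$ and $\mathrm{incl}_\ast$ preserve these subcategories.

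Preservation by $\mathrm{incl}_\ast$ is immediate: it leaves the underlying functor to $\astAlg$ unchanged, so $\perp$-commutativity is retained, and every $C^\ast$-algebra is a pre-$C^\ast$-algebra by Remark~\ref{rem:chainofinclusions}. The substantive step, which I expect to be the main obstacle, is to show that $\mathrm{compl}_\ast(\AAA)=\mathrm{compl}\circ\AAA$ is $\perp$-commutative whenever $\AAA$ is. Here I would use that the unit $\eta_{\AAA(U)}:\AAA(U)\to\mathrm{compl}(\AAA(U))$ is a $\ast$-homomorphism with dense image and that naturality of $\eta$ intertwines the structure maps of $\AAA$ with those $g_i:=\mathrm{compl}\big(\AAA(U_i\subseteq V)\big)$ of $\mathrm{compl}\circ\AAA$. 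For an orthogonal pair $(U_1\subseteq V)\perp(U_2\subseteq V)$ and elements $a_i\in\AAA(U_i)$, naturality together with the multiplicativity of $\eta_{\AAA(V)}$ identifies $g_1(\eta(a_1))\,g_2(\eta(a_2))$ with $\eta_{\AAA(V)}$ applied to a product in $\AAA(V)$, which equals the opposite product by the $\perp$-commutativity of $\AAA$; hence $g_1(\eta(a_1))$ and $g_2(\eta(a_2))$ commute. Because each $g_i$ is bounded and multiplication in the $C^\ast$-algebra $\mathrm{compl}(\AAA(V))$ is jointly continuous, the set of pairs $(b_1,b_2)$ with $g_1(b_1)\,g_2(b_2)=g_2(b_2)\,g_1(b_1)$ is closed; as it contains the dense subset $\eta(\AAA(U_1))\times\eta(\AAA(U_2))$, it is all of $\mathrm{compl}(\AAA(U_1))\times\mathrm{compl}(\AAA(U_2))$. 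This verifies diagram~\eqref{eqn:perpcommutativity} for $\mathrm{compl}\circ\AAA$, so $\mathrm{compl}_\ast$ lands in $\CastAQFT(S)$, and the induced adjunction restricts as claimed.
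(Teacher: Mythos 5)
Your proposal is correct and follows essentially the same route as the paper: both reduce the only substantive point, $\perp$-commutativity of $\mathrm{compl}\circ\AAA$, to the original $\perp$-commutativity of $\AAA$ via density of the image of $\AAA(U)$ in its completion together with continuity of the structure maps and of multiplication. The paper phrases this through the explicit quotient-then-complete model $\AAA(U)/I_{\AAA(U)}$ while you phrase it through the adjunction unit and naturality, but the mathematical content is identical, and your final restriction-of-adjunction step matches the paper's observation that the resulting functor is left adjoint to the inclusion.
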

\begin{proof}
We have to show that, given any $\AAA\in \preCastAQFT(S)$, post-composing with the completion functor yields a functor
$\mathrm{compl}(\AAA) : \Sub(S)\to \CastAlg$ which satisfies the $\perp$-commutativity property \eqref{eqn:perpcommutativity}.
Let us recall from e.g.\ \cite[Lemma 7.17]{BunkeLecture}
that, given any $A\in\preCastAlg$, its completion
$\mathrm{compl}(A)\in \CastAlg$ can be modeled by forming
the quotient $A/I_A\in \astAlg$ by the $\ast$-ideal $I_A := \{a\in A\,:\,\norm{a}_{\max}=0\}$
and then completing the result with respect to the norm $\norm{\cdot}_{\max}$.
Since all maps in the analog of the diagram \eqref{eqn:perpcommutativity}
for the completions are continuous, it suffices to verify that the diagram
\begin{equation}
\begin{tikzcd}
\ar[d] \faktor{\AAA(U_1)}{I_{\AAA(U_1)}} \otimes \faktor{\AAA(U_2)}{I_{\AAA(U_2)}}  
\ar[r]& \faktor{\AAA(V)}{I_{\AAA(V)}} \otimes \faktor{\AAA(V)}{I_{\AAA(V)}} \ar[d,"\mu^\op"] \\
\faktor{\AAA(V)}{I_{\AAA(V)}} \otimes \faktor{\AAA(V)}{I_{\AAA(V)}}\ar[r,"\mu"']& \faktor{\AAA(V)}{I_{\AAA(V)}}
\end{tikzcd}
\end{equation}
of the uncompleted quotients commutes, for all $(U_1\subseteq V)\perp(U_2\subseteq V)$.
This is a direct consequence of the $\perp$-commutativity property of $\AAA\in \preCastAQFT(S)$. 
\sk

The resulting functor $\mathrm{compl}\colon \preCastAQFT(S)\to \CastAQFT(S)$ 
is clearly left adjoint to the canonical inclusion functor. 
\end{proof}

The following result adapts Proposition \ref{prop:LKE}
to the $C^\ast$-algebraic context.
\begin{propo}\label{prop:LKECast}
The adjunction from Proposition \ref{prop:LKE} together with the completion functor $\mathrm{compl}:
\preCastAQFT(S)\to\CastAQFT(S)$ from Lemma \ref{lem:AQFTcompletion} induce the adjunction
\begin{equation}\label{eqn:j!Cast}
\begin{tikzcd}
\mathrm{compl}\circ j_!\, :\, 
\CastAlg^S \ar[r, shift left=0.75ex] & \ar[l, shift left=0.75ex]
\CastAQFT(S)\,:\, j^\ast
\end{tikzcd}
\end{equation}
between the corresponding $C^\ast$-algebraic full subcategories.
This adjunction exhibits the product category $\CastAlg^S := \prod_{s\in S}\CastAlg$
as a coreflective full subcategory of $\CastAQFT(S)$, i.e.\ the left adjoint
$\mathrm{compl}\circ j_!$ is a fully faithful functor.
\end{propo}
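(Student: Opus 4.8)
The plan is to obtain the asserted adjunction by composing the two adjunctions already at our disposal --- the operadic one $j_!\dashv j^\ast$ of Proposition \ref{prop:LKE} and the object-wise completion $\mathrm{compl}\dashv\mathrm{incl}$ of Lemma \ref{lem:AQFTcompletion} --- and then to read off coreflectivity from the unit. The one genuinely analytic ingredient, which I expect to be the main obstacle, is a preliminary compatibility step: I must check that $\mathrm{compl}\circ j_!$ is even well-defined on $\CastAlg^S$, i.e.\ that $j_!$ sends a family of $C^\ast$-algebras into the full subcategory $\preCastAQFT(S)$ on which $\mathrm{compl}$ is defined.

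For this preliminary step I would use the explicit model of the operadic left Kan extension (cf.\ Remark \ref{rem:qspin} and \cite[Proposition 4.6]{BSWoperad}): since distinct singletons $\{s\}$ are mutually orthogonal, the value of $j_!(\und{A})$ on a subset $U\subseteq S$ is the quasi-local algebra $j_!(\und{A})(U)\cong\bigotimes_{s\in U}A(s)$, realized as the filtered colimit in $\astAlg$ over finite subsets $F\subseteq U$ of the algebraic tensor products $\bigotimes_{s\in F}A(s)$. Writing $\und{A}=\{A(s)\}_{s\in S}\in\CastAlg^S$, I would argue that each such value is a pre-$C^\ast$-algebra in the sense of Definition \ref{def:preCastAlg}, i.e.\ has finite maximal norm \eqref{eqn:maxnorm}. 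For finite $F$ this is the standard fact that the algebraic tensor product of finitely many $C^\ast$-algebras carries a $C^\ast$-norm (for instance the maximal tensor norm), so its $\norm{\cdot}_{\max}$ is finite. For infinite $U$, any element $a$ already lies in some finite stage $\bigotimes_{s\in F}A(s)$, and every $\ast$-homomorphism out of the colimit restricts along the structure map to a $\ast$-homomorphism out of that stage; hence $\norm{a}_{\max}$ computed in the colimit is bounded above by the finite maximal norm of $a$ in the finite stage. Thus $j_!(\und{A})\in\preCastAQFT(S)$ and the composite $\mathrm{compl}\circ j_!\colon\CastAlg^S\to\CastAQFT(S)$ makes sense.

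Granting this, the adjunction is then purely formal. For $\und{A}\in\CastAlg^S$ and $\BBB\in\CastAQFT(S)$ I would chain the natural bijections
\begin{align*}
\Hom_{\CastAQFT(S)}\big(\mathrm{compl}(j_!(\und{A})),\,\BBB\big)
&\;\cong\; \Hom_{\preCastAQFT(S)}\big(j_!(\und{A}),\,\BBB\big)
\;\cong\; \Hom_{\astAQFT(S)}\big(j_!(\und{A}),\,\BBB\big)\\
&\;\cong\; \Hom_{\astAlg^S}\big(\und{A},\,j^\ast(\BBB)\big)
\;\cong\; \Hom_{\CastAlg^S}\big(\und{A},\,j^\ast(\BBB)\big),
\end{align*}
using in turn: the adjunction $\mathrm{compl}\dashv\mathrm{incl}$ of Lemma \ref{lem:AQFTcompletion}; fullness of $\preCastAQFT(S)\subseteq\astAQFT(S)$; the adjunction $j_!\dashv j^\ast$ of Proposition \ref{prop:LKE}; and fullness of $\CastAlg^S\subseteq\astAlg^S$ together with the observation that $j^\ast(\BBB)=\{\BBB(\{s\})\}_{s\in S}$ lands in $\CastAlg^S$ because $\BBB$ is a $C^\ast$-AQFT (Definition \ref{def:CastAQFT}). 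Naturality in both variables is inherited from the naturality of the three constituent bijections, yielding $\mathrm{compl}\circ j_!\dashv j^\ast$.

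Finally, to promote this to coreflectivity, i.e.\ full faithfulness of the left adjoint, I would show that the unit $\eta_{\und{A}}\colon\und{A}\to j^\ast\big(\mathrm{compl}(j_!(\und{A}))\big)$ is an isomorphism. Evaluating at $s\in S$ gives $A(s)\to\mathrm{compl}\big(j_!(\und{A})(\{s\})\big)$. Since $j_!$ is fully faithful (Proposition \ref{prop:LKE}), its unit is invertible and $j_!(\und{A})(\{s\})\cong A(s)$; and since $A(s)$ is already a $C^\ast$-algebra, the completion unit $A(s)\to\mathrm{compl}(A(s))$ is an isomorphism (Remark \ref{rem:chainofinclusions} and Proposition \ref{prop:Castadjunctions}(b)). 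Composing these identifications shows $\eta_{\und{A}}$ is an isomorphism, so $\mathrm{compl}\circ j_!$ is fully faithful and exhibits $\CastAlg^S$ as a coreflective full subcategory of $\CastAQFT(S)$, as claimed.
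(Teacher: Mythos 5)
Your proposal is correct and follows essentially the same route as the paper: first check that $j_!$ lands in $\preCastAQFT(S)$ using the explicit model (finite algebraic tensor products, then filtered colimits over finite subsets), then compose the two adjunctions formally, then verify the unit is a component-wise isomorphism because $j_!(\und{A})(\{s\})\cong A(s)$ is already complete. The only cosmetic difference is that you verify the pre-$C^\ast$ property of the filtered colimit by a direct maximal-norm estimate at a finite stage, whereas the paper invokes Corollary \ref{cor:bicompleteness}~(a) (colimits in $\preCastAlg$ are created in $\astAlg$); both arguments are valid and rest on the same facts.
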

\begin{proof}
From the definition of the right adjoint functor $j^\ast(\AAA) = \big\{\AAA(\{s\})\big\}_{s\in S}$, 
it is evident that it restricts to the $C^\ast$-algebraic full subcategories.
It thus remains to show that our candidate in \eqref{eqn:j!Cast} for a left adjoint 
is well-defined and that it is indeed a left adjoint for the restricted $j^\ast$.
\sk

To show well-definedness, let us start with observing that the 
orthogonal functor \eqref{eqn:jfunctor} factorizes
\begin{equation}
\begin{tikzcd}
j\,:\, S \ar[r,"j_1"]& \Fin(S) \ar[r,"j_2"] & \Sub(S)
\end{tikzcd}
\end{equation}
through the full subcategory $\Fin(S)\subseteq \Sub(S)$ of all \textit{finite} subsets of $S$, which we endow 
with the restriction of the orthogonality relation $\perp$ on $\Sub(S)$.
This implies that the operadic left Kan extension $j_!$ can be computed in two steps
\begin{equation}\label{eqn:jfactorized}
\begin{tikzcd}
j_!\,:\, \astAlg^S \ar[r,"j_{1!}"]& \astAQFT^{\mathrm{fin}}(S) \ar[r,"j_{2!}"] & \astAQFT(S)
\end{tikzcd}~,
\end{equation}
where $\astAQFT^{\mathrm{fin}}(S)$ is defined similarly to Definition \ref{def:AQFT}
by replacing the orthogonal category $\Sub(S)^\perp$ of all subsets 
with the orthogonal category $\Fin(S)^\perp$ of all finite subsets.
The first operadic left Kan extension $j_{1!}$ can be computed explicitly
as follows: To a family of $\ast$-algebras $\AAA = \{\AAA(s)\}_{s\in S}\in \astAlg^S$, it assigns the object
$j_{1!}(\AAA)\in \astAQFT^{\mathrm{fin}}(S)$ which is defined by the functor 
$j_{1!}(\AAA) : \Fin(S)\to \astAlg$ that assigns to every finite subset 
$U\subseteq S$ the (finite!) tensor product $\ast$-algebra
\begin{subequations}\label{eqn:explicitj!}
\begin{flalign}\label{eqn:tensoralgebra}
j_{1!}(\AAA)(U) \,=\, \bigotimes_{s\in U}\AAA(s)\,\in\,\astAlg 
\end{flalign}
and to every inclusion $U\subseteq V$ of finite subsets the canonical $\ast$-homomorphism.
Since $j_2$ satisfies the closedness condition from \cite[Definition 5.3]{BSWoperad},
the result in \cite[Theorem 5.4]{BSWoperad} shows that the
second operadic left Kan extension $j_{2!}$ can be computed by an ordinary \textit{categorical}
left Kan extension of $\astAlg$-valued functors. More concretely, this means that
to an object $\BBB \in \astAQFT^{\mathrm{fin}}(S)$ it assigns the object
$j_{2!}(\BBB)\in \astAQFT(S)$ which is defined by the functor 
$j_{2!}(\BBB) : \Sub(S)\to \astAlg$ that assigns to every subset 
$U\subseteq S$ the filtered colimit
\begin{flalign}\label{eqn:colimitalgebra}
j_{2!}(\BBB)(U) \,=\,\colim\Big(
\Fin(S)/U ~\xrightarrow{\quad\quad}~\Fin(S) ~\xrightarrow{\quad \BBB\quad} ~\astAlg
\Big) 
\end{flalign}
\end{subequations}
and to every inclusion $U\subseteq V$ of subsets the canonical $\ast$-homomorphism.
\sk

Given any family $\AAA= \{\AAA(s)\}_{s\in S}\in \CastAlg^S$ of $C^\ast$-algebras,
the finite algebraic tensor products in \eqref{eqn:tensoralgebra} are pre-$C^\ast$-algebras,
see e.g.\ \cite[Lemma 7.13]{BunkeLecture}. Hence, the functor $j_{1!}$ restricts to a functor
$j_{1!} : \CastAlg^S\to\preCastAQFT^{\mathrm{fin}}(S)$. Using also
\eqref{eqn:colimitalgebra} and Corollary \ref{cor:bicompleteness} (a), 
it follows that the functor $j_{2!}$ restricts to a functor
$j_{2!} : \preCastAQFT^{\mathrm{fin}}(S)\to \preCastAQFT(S)$. Hence,
by \eqref{eqn:jfactorized} one obtains the restriction
\begin{flalign}
j_!\,:\, \CastAlg^S~\xrightarrow{\;\quad\;}~\preCastAQFT(S)~,
\end{flalign}
which implies that our candidate for the left adjoint functor in \eqref{eqn:j!Cast} is well-defined.
\sk

To show that \eqref{eqn:j!Cast} is an adjunction, we observe that, 
for any pair of objects $\AAA\in \CastAlg^S$ and $\CCC\in \CastAQFT(S)$, there exist natural bijections
\begin{flalign}
\Hom_{\astAQFT(S)}\big(\mathrm{compl}\big(j_!(\AAA)\big),\CCC\big)\,\cong\,\Hom_{\astAQFT(S)}\big(j_!(\AAA),\CCC\big)\,
\cong\, \Hom_{\astAlg^S}\big(\AAA,j^\ast(\CCC)\big)~,
\end{flalign} 
where in the first step we use Lemma \ref{lem:AQFTcompletion}
and in the second step we use Proposition \ref{prop:LKE}. 
To show fully faithfulness of the left adjoint functor, one 
can equivalently verify that the adjunction unit is a natural isomorphism.
This follows from the observation that,
for each $ \AAA\in \CastAlg^S$, the unit 
$\eta_\AAA : \AAA \to j^\ast\mathrm{compl} j_!(\AAA)$
is given component-wise by the identity maps 
\begin{flalign}
\AAA(s)  ~\xrightarrow{\;\quad\;}~ \mathrm{compl}\big(j_!(\AAA)(\{s\})\big) \,=\, 
\mathrm{compl}\big(\AAA(s)\big) \,=\, \AAA(s)~,
\end{flalign} 
for all $s\in S$, where in the first equality we use \eqref{eqn:explicitj!}
and in the second equality we use that $\AAA(s)\in \CastAlg$ 
is by hypothesis a $C^\ast$-algebra.
\end{proof}

\begin{rem}\label{rem:qspin}
We would like to highlight that the
left adjoint functor from Proposition \ref{prop:LKECast}
provides a formalization of the following standard extension construction
in the context of lattice quantum systems, see e.g.\ 
\cite{BRI} and \cite{Naaijkens1,Naaijkens2,NaaijkensChapter,FiedlerNaaijkens,Ogata}.
Any family of matrix algebras $\{\AAA(s)\cong \mathrm{Mat}_d(\bbC)\}_{s\in S}\in \CastAlg^S$,
indexed by the points of the set $S$,
can be extended to all \textit{finite} subsets $U^{\mathrm{fin}}\subseteq S$ 
by forming the tensor product $C^\ast$-algebra $\AAA(U^{\mathrm{fin}}) := 
\bigotimes_{s\in U^{\mathrm{fin}}} \AAA(s)$. This can be extended
further to all (possibly infinite) subsets $U \subseteq S$ by taking
the inductive limit $\AAA(U) := \colim_{U^{\mathrm{fin}}\subseteq U}^{}\,\AAA(U^{\mathrm{fin}})$
over all finite subsets $U^{\mathrm{fin}}\subseteq U$. One easily checks by hand 
that this extension procedure defines a $C^\ast$-AQFT over $S$.
From the proof of Proposition \ref{prop:LKECast}, it is evident that the result 
of this ad hoc construction coincides with the application of the left adjoint
functor $\mathrm{compl} \circ j_!$ on $\AAA=\{\AAA(s)\}_{s\in S}\in \CastAlg^S$, hence
our extension construction agrees with the usual one from the literature.
Some of the advantages of our more abstract point of view are as follows: 
1.)~The extended $\mathrm{compl} \big(j_!(\AAA)\big)\in \CastAQFT(S)$ has, and 
it is characterized uniquely (up to isomorphism) by, the following useful universal property:
For every $\BBB\in\CastAQFT(S)$, 
the set of $\CastAQFT(S)$-morphisms $\mathrm{compl} \big(j_!(\AAA)\big)\to \BBB$
is in bijective correspondence to the set of families 
$\{\AAA(s)\to \BBB(\{s\})\}_{s\in S}$ of $\CastAlg$-morphisms.
2.)~The concept of operadic left Kan extension provides a conceptual explanation
for why one first extends from points to finite subsets via tensor products
and then from finite subsets to all subsets via inductive limits.
\end{rem}

\subsection{\label{subsec:CastCat}$C^\ast$-categorical prefactorization algebras}
In this subsection we recall some well-known aspects of the theory of $C^\ast$-categories~\cite{GhezLimaRoberts},
following mostly the presentation in \cite{CastCat}.
See also \cite{Bunke,BunkeLecture,BunkeReview} for related results.
We will then introduce a concept of prefactorization algebras 
which are defined over an orthogonal category $\CC^\perp = (\CC,\perp)$,
for example the one from Definition \ref{def:SubS} or full orthogonal
subcategories thereof, and take values in $C^\ast$-categories.
\begin{defi}\label{def:astCat}
A \textit{$\ast$-category} is a (unital) $\bbC$-linear category $\A$
with an involutive complex antilinear contravariant endofunctor $\ast : \A^\op\to \A$
acting as the identity on objects. In more detail, an involution 
consists of a family of complex antilinear maps $\ast : \A(x,y)\to \A(y,x)\,,~f\mapsto f^\ast$, 
for all objects $x,y\in\A$, such that
\begin{flalign}
(\lambda\,f + \lambda^\prime\,f^\prime)^\ast\,=\, 
\overline{\lambda}\,f^\ast + \overline{\lambda^\prime}\, f^{\prime\ast}~,\quad
(g\,f)^\ast\,=\, f^\ast\,g^\ast~,\quad
f^{\ast\ast} \,=\, f~,
\end{flalign}
for all $f,f^\prime\in \A(x,y)$, $g\in \A(y,z)$ and $\lambda,\lambda^\prime\in\bbC$.
We denote by $\astCat$ the category whose objects are all small $\ast$-categories
and whose morphisms are all $\ast$-functors, i.e.\ $\bbC$-linear functors
$F:\A\to \B$ such that $F(f^\ast) = F(f)^\ast$, for all morphisms $f$ in $\A$.
\end{defi}

Similarly to the case of $C^\ast$-algebras in Subsection \ref{subsec:Cast},
one can define a $C^\ast$-category as a $\ast$-category which has additional properties.
\begin{defi}\label{def:CastCat}
A \textit{$C^\ast$-category} is a $\ast$-category $\A\in\astCat$ which has the following properties:
\begin{itemize}
\item[(i)] There exists a family of norms $\norm{\cdot}_\A^{}: \A(x,y)\to [0,\infty)$, for all $x,y\in \A$,
which satisfy the $C^\ast$-property $\norm{f^\ast f}_\A^{} = \norm{f}_\A^2$, for all morphisms $f$ in $\A$,
and turn $\A$ into a Banach $\ast$-category.
That is, each $\A(x,y)$ is complete with respect to the norm
$\norm{\cdot}_\A^{}$ and $\norm{ g\,f }_\A^{} \leq \norm{g}_\A^{} \, \norm{f}_\A^{}$ 
whenever the composition is defined.

\item[(ii)] For every morphism $f\in \A(x,y)$, the endomorphism $f^\ast\,f\in\A(x,x)$
is positive, i.e.\ its spectrum is contained in $[0,\infty)$.
\end{itemize}
The category of $C^\ast$-categories is defined as the full subcategory
\begin{flalign}
\CastCat\,\subseteq\,\astCat
\end{flalign}
of the category of $\ast$-categories from Definition \ref{def:astCat}
whose objects are all $C^\ast$-categories.
\end{defi}

A particularly important construction for $C^\ast$-categories that 
is used in our work is the \textit{maximal tensor product}
\begin{flalign}\label{eqn:maximaltensor}
\boxtimes\,:=\,\otimes_{\max}\,:\, \CastCat\times\CastCat~\xrightarrow{\;\quad\;}~\CastCat~,
\end{flalign}
which we denote by the symbol $\boxtimes$ in order to avoid subscripts.
Let us recall its description from \cite[Proposition 3.12]{CastCat}:
Given two $C^\ast$-categories $\A,\B\in\CastCat$, one forms
first their algebraic tensor product $\A\otimes\B\in\astCat$, which is the $\ast$-category
whose objects are pairs of objects $(x,p)\in\A\times \B$ and whose morphisms
are defined by the tensor product of vector spaces
\begin{flalign}
(\A\otimes \B)\big((x,p),(y,q)\big)\,:=\, \A(x,y)\otimes \B(p,q)~,
\end{flalign}
for all $(x,p), (y,q)\in\A\times \B$. The involution on $\A\otimes \B$ is defined
by the complex antilinear extension of $(f\otimes g)^\ast := f^\ast\otimes g^\ast : (y,q)\to (x,p)$, 
for all $f\otimes g : (x,p)\to (y,q)$. The maximal tensor product
$\A\boxtimes \B\in\CastCat$ is then defined by completing the algebraic
tensor product with respect to the maximal norm, which is defined in analogy to
the case of $(C)^\ast$-algebras \eqref{eqn:maxnorm} by considering
$\ast$-functors to all $C^\ast$-categories. By construction,
there exists a canonical $\ast$-functor
\begin{subequations}\label{eqn:maxtensoralgtensor}
\begin{flalign}
\A\otimes\B ~\xrightarrow{\;\quad\;}~\A\boxtimes\B
\end{flalign}
which induces a natural bijection of Hom-sets
\begin{flalign}
\Hom_{\astCat}\big(\A\boxtimes\B,\C\big) ~\xrightarrow{\quad\cong\quad}~
\Hom_{\astCat}\big(\A\otimes\B,\C\big)~,
\end{flalign}
\end{subequations}
for all $C^\ast$-categories $\A,\B,\C\in\CastCat$. 
This means that specifying a $\ast$-functor $\A\boxtimes \B\to \C$ between $C^\ast$-categories
is equivalent to specifying a $\ast$-functor $\A\otimes \B\to \C$ out of the algebraic tensor product,
which is practically very convenient.
\sk

The following result follows from \cite{CastCat} and \cite[Corollary 11.4]{Bunke}.
\begin{theo}\label{theo:CastCatmodel}
\begin{itemize}
\item[(a)] The category $\CastCat$ from Definition \ref{def:CastCat}
is closed symmetric monoidal with respect to the maximal tensor product $\boxtimes:=\otimes_{\max}$ from
\eqref{eqn:maximaltensor} and the monoidal unit
is given by the $C^\ast$-category  $\B\bbC\in\CastCat$ with a single object and morphisms $\bbC$.

\item[(b)] It is further a combinatorial simplicial symmetric monoidal model
category  with respect to the following classes of morphisms: A
$\ast$-functor $F:\A\to \B$ in $\CastCat$ is
\begin{itemize}
\item a weak equivalence if it is a unitary equivalence,
i.e.\ it is quasi-invertible such that every component of the natural isomorphisms
$F^{-1}\,F\cong \id_{\A}$ and $F\,F^{-1}\cong \id_{\B}$ is unitary.
(As a consequence of \cite[Lemma 4.6]{CastCat}, this is equivalent to the 
underlying functor between ordinary categories
being fully faithful and essentially surjective.)

\item a cofibration if it is injective on objects.

\item a fibration if
every unitary morphism $v : F(x)\to p$ in $\B$ admits a lift, i.e.\
there exists a unitary morphism $u:x\to y$ in $\A$ such that $F(u) = v$.
\end{itemize}
\end{itemize}
\end{theo}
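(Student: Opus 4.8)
The plan is to obtain both parts by transferring structure from the ambient category $\astCat$ to the reflective subcategory $\CastCat$, and then to install the model structure by a recognition theorem. For part~(a), I would first record that $\astCat$ is a closed symmetric monoidal category under the algebraic tensor product $\otimes$ described just before the theorem, with unit the one-object category $\B\bbC$: this is the standard $\bbC$-linear (Kelly) tensor product of linear categories, and the involution on $\A\otimes\B$ turns all the coherence isomorphisms into $\ast$-functors. The universal property \eqref{eqn:maxtensoralgtensor} says precisely that $\A\boxtimes\B$ corepresents $\ast$-functors out of $\A\otimes\B$, i.e.\ that completion exhibits $\boxtimes$ as the reflection of $\otimes$ into $\CastCat$, so that $\boxtimes = \mathrm{compl}\big(\mathrm{incl}(-)\otimes\mathrm{incl}(-)\big)$. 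The associator, unitors and symmetry then transport along this reflective localization by the general principle that a reflective localization of a symmetric monoidal category inherits a symmetric monoidal structure whenever the reflector is compatible with $\otimes$, which is exactly what \eqref{eqn:maxtensoralgtensor} supplies. The unit stays $\B\bbC$ because it is already a $C^\ast$-category and is therefore fixed by $\mathrm{compl}$.

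For the closed structure I would avoid constructing the internal hom by hand and instead invoke local presentability of $\CastCat$ (a fact we also use in part~(b), and which is provided by \cite[Corollary 11.4]{Bunke}). Since $\boxtimes$ preserves colimits in each variable, being the composite of the cocontinuous functor $\mathrm{incl}(-)\otimes\mathrm{incl}(-)$ with the left adjoint $\mathrm{compl}$, the adjoint functor theorem produces a right adjoint $[\B,-]$ to $(-)\boxtimes\B$ for each $\B$, and these assemble into the internal hom. If an explicit model is wanted, $[\B,\C]$ can be realized as the $C^\ast$-category of \emph{bounded} $\ast$-functors $\B\to\C$ together with their bounded natural transformations, the boundedness being exactly what makes the Hom-spaces complete; but the abstract route is cleaner.

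For part~(b) the plan is to produce the model structure as the $C^\ast$-analogue of the canonical (``folk'') model structure on $\Cat$, using Jeff Smith's recognition theorem for combinatorial model categories. I would take the weak equivalences $W$ to be the unitary equivalences; by \cite[Lemma 4.6]{CastCat} these coincide with the $\ast$-functors that are fully faithful and essentially surjective on underlying categories, which makes $W$ accessible and manifestly closed under two-out-of-three. As generating cofibrations $I$ I would take the object-adjoining map $\emptyset\to\B\bbC$ together with the standard maps adjoining morphisms and imposing the $C^\ast$-relations, arranged so that $I$-injective $\ast$-functors are precisely the fully faithful, surjective-on-objects functors (the trivial fibrations); as generating trivial cofibrations $J$ I would use the cofibrations built from the ``walking unitary isomorphism'', namely the inclusion of $\B\bbC$ into the two-object $C^\ast$-category carrying a single unitary between its objects. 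Checking Smith's hypotheses, one reads off that the cofibrations are exactly the $\ast$-functors injective on objects and the fibrations are exactly those with the unitary-lifting property stated in the theorem. The simplicial enrichment then comes from tensoring and cotensoring against a cosimplicial $C^\ast$-category assembled from these interval categories, with the simplicial axiom (SM7 relative to $\sSet$) verified in the usual way; the symmetric monoidal model structure follows from the pushout-product axiom, which at the level of objects reduces to the elementary fact that pushout-products of object-injective functors are object-injective, while cofibrancy of the unit $\B\bbC$ is immediate since $\emptyset\to\B\bbC\in I$.

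The step I expect to be the main obstacle is the compatibility between the analytic completion defining $\boxtimes$ and the categorical small-object argument. Concretely, one must verify that pushouts and transfinite composites of cofibrations computed in $\CastCat$, where colimits involve $\mathrm{compl}$ (cf.\ Corollary \ref{cor:bicompleteness}), remain injective on objects, and that the pushout-product axiom survives passage to the completion. The same tension reappears in the closed structure, where one must ensure the internal hom lands in genuine complete $C^\ast$-categories rather than merely pre-$C^\ast$-categories. In practice I would discharge these points using the detailed norm estimates already carried out in \cite{CastCat} and the presentability results of \cite[Corollary 11.4]{Bunke}, which is precisely why the theorem is attributed to those two sources.
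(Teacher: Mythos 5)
The paper does not actually prove this statement: Theorem \ref{theo:CastCatmodel} is imported wholesale from \cite{CastCat} and \cite[Corollary 11.4]{Bunke}, so there is no in-text argument to compare yours against. Your plan is, in outline, a faithful reconstruction of how those references proceed: part~(a) is Dell'Ambrogio's construction of $\boxtimes$ as the reflection of the algebraic tensor product, with the internal hom realized by $\ast$-functors and \emph{bounded} natural transformations, and part~(b) is his unitary (``folk'') model structure, with combinatoriality supplied by Bunke's local presentability result. So your proposal is correct as a strategy and matches the source of the theorem.

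Two points in your sketch are imprecise and are exactly where the real work of the cited proofs sits. First, $\mathrm{incl}$ is a \emph{right} adjoint, so your composite $\mathrm{compl}\big(\mathrm{incl}(-)\otimes\mathrm{incl}(-)\big)$ is not cocontinuous ``because its factors are''; cocontinuity of $\boxtimes$ in each variable instead follows from the Day-reflection identity $\mathrm{compl}\big(\mathrm{compl}(X)\otimes Y\big)\cong\mathrm{compl}(X\otimes Y)$ combined with the fact that colimits in $\CastCat$ are computed by applying $\mathrm{compl}$ to colimits of underlying diagrams (cf.\ Corollary \ref{cor:bicompleteness}); moreover the reflection must be routed through pre-$C^\ast$-categories, since $\mathrm{compl}$ is only defined there, and the fact that the algebraic tensor product $\A\otimes\B$ of two $C^\ast$-categories has finite maximal norm on all Hom-spaces is itself a nontrivial analytic input, not a formality. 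Second, in the $C^\ast$-setting one cannot ``freely adjoin a morphism'': the free $\ast$-category on a single arrow has infinite maximal norm and is therefore not a pre-$C^\ast$-category, so the generating cofibrations have to be assembled from universal $C^\ast$-categories with explicit norm bounds (e.g.\ generated by partial isometries or contractions). This is precisely the content your phrase ``imposing the $C^\ast$-relations'' conceals, and it is the reason the result is cited rather than rederived in the paper.
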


\begin{rem}\label{rem:CastCatmodel}
The symmetric monoidal structure from Theorem \ref{theo:CastCatmodel} (a)
allows us to define a $1$-categorical concept of $\CastCat$-valued prefactorization algebras, 
which will be sufficient for our constructions in Section \ref{sec:PFA}.
The symmetric monoidal \textit{model} structure from Theorem \ref{theo:CastCatmodel} (b)
determines a richer $\infty$-categorical context for such objects, which will be discussed
in more detail in Section \ref{sec:lattice}. In particular, this $\infty$-categorical context 
is needed to identify and characterize the algebraic structures on superselection sectors
for lattice $C^\ast$-AQFTs in Section \ref{sec:lattice}.
\sk

We would like to note that $C^\ast$-categories also assemble naturally into a symmetric monoidal
$2$-category, see e.g.\ \cite{Voigt} and \cite{Hataishi}. Using tools
from $2$-operad theory, see e.g.\ \cite{Corner,BPSWcategorified}, it would be possible 
to formulate our results directly in this $2$-categorical context, which however
would come with the following drawbacks: 1.)~Our explicit constructions and results
in Section \ref{sec:PFA} behave manifestly $1$-categorical, hence a $2$-categorical
framework would unnecessarily complicate their presentation. 2.)~Our more abstract
constructions and results in Section \ref{sec:lattice} apply to an
arbitrary presentably symmetric monoidal $\infty$-category as target for
operad algebras, hence a $2$-categorical
framework would unnecessarily restrict their range of applicability. In particular, this would
limit potential future applications to more sophisticated 
lattice models which include stacky and/or derived geometric structures, 
whose representation categories take the form of higher categories
or even dg-categories, see e.g.\ \cite{2dYM,CalaqueCarmona}.
\end{rem}

We will now introduce a concept of $\CastCat$-valued prefactorization algebras 
which are defined over any orthogonal category $\CC^\perp$.
For our applications in Section \ref{sec:PFA}, the relevant examples
are full orthogonal subcategories of the orthogonal category
$\Sub(S)^\perp$ of subsets of a (possibly infinite) set $S\in \Set$ from Definition \ref{def:SubS}.
The following definition of the prefactorization operad associated to an orthogonal
category is taken from \cite[Definition 2.5]{BPSWcategorified} and 
was motivated by the works of Costello and Gwilliam \cite{CG1,CG2}.
\begin{defi}\label{def:Poperad}
The \textit{prefactorization operad} $\P_{\CC^\perp}$ associated to an
orthogonal category $\CC^\perp$ is the $\Set$-valued colored symmetric operad defined
by the following data:
\begin{enumerate}[(1)]
\item The objects of $\P_{\CC^\perp}$ are the objects of the category $\CC$.

\item The sets of operations are
\begin{flalign}
\P_{\CC^\perp}\big(\substack{V \\ \und{U}}\big)\,:=\, \bigg\{ \und{f} := (f_1,\dots,f_n)\in \prod_{i=1}^n \CC(U_i,V)\,:\, f_i\perp f_j \text{~for all~}i\neq j\,\bigg\}~,
\end{flalign}
for each object $V\in\CC$ and each tuple of objects 
$\und{U} :=(U_1,\dots,U_n)\in \CC^n$. For the empty tuple $\und{U}=()$,
we set $\P_{\CC^\perp}\big(\substack{V \\ ()}\big):=\{\pt_V^{}\}$ to be a singleton.

\item The composition maps 
\begin{subequations}
\begin{flalign}
\gamma \,:\, \P_{\CC^\perp}\big(\substack{V \\ \und{U}}\big)\times\prod_{i=1}^n 
\P_{\CC^\perp}\big(\substack{U_i \\ \und{W_i}}\big)~\xrightarrow{\;\quad\;}~ \P_{\CC^\perp}\big(\substack{V \\ \und{\und{W}}}\big)~,
\end{flalign}
where $\und{\und{W}} := (\und{W_1},\dots,\und{W_n})$ denotes the concatenation of tuples, 
are given by composition in the category $\CC$, i.e.\
\begin{flalign}
 \gamma \big(\und{f},(\und{g_1},\dots,\und{g_n}) \big)\,:=\, \und{f}\,\und{\und{g}} \,:=\, \big(f_1\,g_{11},\dots, f_1\, g_{1 k_1},\dots,f_n\, g_{n1},\dots,f_n \, g_{n k_n}\big)~.
\end{flalign}
\end{subequations}

\item The identity operations are $\id_V \in \P_{\CC^\perp}\big(\substack{V \\ V}\big)$.

\item The permutation actions $\P_{\CC^\perp}(\sigma) : \P_{\CC^\perp}\big(\substack{V \\ \und{U}}\big)
\to \P_{\CC^\perp}\big(\substack{V \\ \und{U}\sigma}\big)$, for $\sigma\in\Sigma_n$, are given by
\begin{flalign}
\P_{\CC^\perp}(\sigma)(\und{f}) \,:=\, \und{f}\sigma := (f_{\sigma(1)},\dots, f_{\sigma(n)})~.
\end{flalign}
\end{enumerate}
\end{defi}

\begin{defi}\label{def:PFA}
The category of \textit{$C^\ast$-categorical prefactorization algebras}
over $\CC^\perp$ is defined as the category
\begin{flalign}
C^\ast\mathbf{PFA}_{\CC^\perp}\,:=\, \Alg_{\P_{\CC^\perp}}\big(\CastCat\big)
\end{flalign}
of algebras over the operad $\P_{\CC^\perp}$ from Definition \ref{def:Poperad}
with values in the closed symmetric monoidal category $(\CastCat,\boxtimes,\B\bbC)$ 
of $C^\ast$-categories from Theorem \ref{theo:CastCatmodel} (a).
\end{defi}

\begin{rem}\label{rem:PFA}
For the convenience of readers who are not familiar with operadic concepts, let 
us spell out explicitly the data and properties of a $C^\ast$-categorical 
prefactorization algebra $\FFF : \P_{\CC^\perp}\to \CastCat$:
\begin{itemize}
\item[(1)] To any object $U\in \CC$ is assigned a $C^\ast$-category
$\FFF(U)\in\CastCat$.

\item[(2)] To any operation $\und{f}: \und{U} = (U_1,\dots,U_n)\to V$ in $\P_{\CC^\perp}$
is assigned a $\ast$-functor
\begin{flalign}\label{eqn:PFAstructuremaps}
\FFF(\und{f})\,:\, \bigboxtimes_{i=1}^n \FFF(U_i) ~\xrightarrow{\;\quad\;}~\FFF(V)
\end{flalign}
from the maximal tensor product $\boxtimes$ of $C^\ast$-categories. In the special case
where $\und{U} = ()$ is the empty tuple (i.e.\ $n=0$), this amounts to a $\ast$-functor
\begin{flalign}\label{eqn:PFAunit}
\FFF(\pt_V^{})\,:\,\B\bbC~\xrightarrow{\;\quad\;}~\FFF(V)
\end{flalign}
from the monoidal unit $\B\bbC\in \CastCat$.
\end{itemize}
These data have to satisfy the following axioms:
\begin{itemize}
\item[(i)] For all operations $\und{f}: \und{U} = (U_1,\dots,U_n)\to V$
and $\und{g_i}: \und{W_i} = (W_{i1},\dots,W_{ik_i})\to U_i$ in $\P_{\CC^\perp}$, 
for $i=1,\dots,n$, the diagram
\begin{equation}\label{eqn:PFAdiagramcomp}
\begin{tikzcd}
\ar[rd,"\FFF(\und{f}\,\und{\und{g}})"']\bigboxtimes\limits_{i=1}^n \bigboxtimes\limits_{j=1}^{k_i} \FFF(W_{ij}) \ar[r,"\boxtimes_{i}\FFF(\und{g_i})"] &[10mm] 
\bigboxtimes\limits_{i=1}^n \FFF(U_i)\ar[d,"\FFF(\und{f})"]\\
&[10mm] \FFF(V)
\end{tikzcd}
\end{equation}
in $\CastCat$ commutes.

\item[(ii)] For all identity operations $\id_V: V\to V$ in $\P_{\CC^\perp}$,  the $\ast$-functor
\begin{flalign}
\FFF(\id_V) \,=\,\id_{\FFF(V)}\,:\, \FFF(V)~\xrightarrow{\;\quad\;}~\FFF(V)
\end{flalign}
coincides with the identity $\ast$-functor.

\item[(iii)] For all operations $\und{f} : \und{U} = (U_1,\dots,U_n)\to V$ in $\P_{\CC^\perp}$ 
and permutations $\sigma\in\Sigma_n$, the diagram
\begin{equation}\label{eqn:PFAdiagramperm}
\begin{tikzcd}
\ar[dr,"\FFF(\und{f})"'] \bigboxtimes\limits_{i=1}^n\FFF(U_i)\ar[rr,"\cong"] && 
\bigboxtimes\limits_{i=1}^n\FFF(U_{\sigma(i)}) \ar[dl,"\FFF(\und{f}\sigma)"] \\
& \FFF(V)
\end{tikzcd}
\end{equation}
in $\CastCat$ commutes, where the unlabeled isomorphism is obtained from the symmetric braiding
on $\CastCat$.
\end{itemize}
As a consequence of the universal property \eqref{eqn:maxtensoralgtensor} of 
the maximal tensor product, we note that the structure maps \eqref{eqn:PFAstructuremaps}
can be specified equivalently by $\ast$-functors 
\begin{flalign}
\FFF(\und{f})\, :\, \bigotimes_{i=1}^n\FFF(U_i)~\xrightarrow{\;\quad\;}~ \FFF(V)
\end{flalign}
out of the algebraic tensor product and \eqref{eqn:PFAunit} is equivalent to the choice of an object 
\begin{flalign}
\FFF(\pt_V^{})\,\in\, \FFF(V)~.
\end{flalign} 
Furthermore, commutativity of the diagrams \eqref{eqn:PFAdiagramcomp} and 
\eqref{eqn:PFAdiagramperm} can be verified
by restricting to algebraic tensor products. These observations are useful in practice.
\end{rem}


\section{\label{sec:PFA}Prefactorization algebras of localized superselection sectors}
Consider any set $S\in \Set$ and any family $\AAA = \{\AAA(s)\}_{s\in S}\in\CastAlg^S$
of $C^\ast$-algebras. Then the construction in Proposition \ref{prop:LKECast}
defines a $C^\ast$-AQFT
\begin{flalign}\label{eqn:Aoverline}
\overline{\AAA}\,:=\,\mathrm{compl}\big(j_!(\AAA)\big)\,\in\,\CastAQFT(S)
\end{flalign}
on the orthogonal category $\Sub(S)^\perp$
of all subsets of $S$ from Definition \ref{def:SubS}.
This means that we can assign functorially to each subset $U\subseteq S$
a $C^\ast$-algebra $\overline{\AAA}(U)\in\CastAlg$, which is constructed explicitly
in the proof of Proposition \ref{prop:LKECast} in terms of tensor products, filtered colimits
and $C^\ast$-completions, and to each subset inclusion $U\subseteq V$ a $\ast$-homomorphism
$\overline{\AAA}(U)\to \overline{\AAA}(V)$. This functor satisfies by construction
the $\perp$-commutativity property \eqref{eqn:perpcommutativity}.
It assigns to the empty subset $\varnothing\subseteq S$ the 
initial object $\overline{\AAA}(\varnothing) = \bbC\in C^\ast\Alg$ 
and to each singleton $\{s\}\subseteq S$ the input datum
$\overline{\AAA}(\{s\}) = \AAA(s)\in \CastAlg$ of the construction. 
In applications to quantum spin systems
\cite{Naaijkens1,Naaijkens2,NaaijkensChapter,FiedlerNaaijkens,Ogata},
one usually considers a lattice $S=\bbZ^n$ and
a family of finite-dimensional matrix algebras 
$\AAA = \{\AAA(s) \cong \mathrm{Mat}_d(\bbC)\}_{s\in S}\in\CastAlg^S$.
\sk

The aim of this section is to construct $C^\ast$-categories
of suitably localized superselection sectors for such 
$C^\ast$-AQFTs over $S$ and to explore the algebraic structures 
which can be defined, under suitable additional assumptions,
on these $C^\ast$-categories. We propose the following simple
definition of $\ast$-representations of the extended $\overline{\AAA}\in \CastAQFT(S)$ 
in \eqref{eqn:Aoverline} and argue in Remark \ref{rem:SSStraditional} 
below that this is equivalent to the more standard definition in the literature.
\begin{defi}\label{def:SSS}
Let $\AAA = \{\AAA(s)\}_{s\in S}\in\CastAlg^S$ be any family of $C^\ast$-algebras
which is indexed by a set $S\in \Set$. The $C^\ast$-category ${}^\ast\mathbf{Rep}_{\AAA}$
of \textit{$\ast$-representations} of the corresponding
extended $\overline{\AAA}\in \CastAQFT(S)$ 
in \eqref{eqn:Aoverline} is defined as follows:
\begin{itemize}
\item[$\mathsf{Obj}$:] An object in ${}^\ast\mathbf{Rep}_{\AAA}$ is a pair $(H,\pi)$ consisting
of a complex Hilbert space $H$ and a family $\pi = \{\pi_s : \AAA(s)\to B(H)\}_{s\in S}$
of mutually commuting $\ast$-homomorphisms to the $C^\ast$-algebra $B(H)\in \CastAlg$
of bounded operators on $H$, i.e.\ the diagram
\begin{equation}
\begin{tikzcd}
\ar[d,"\pi_{s}\otimes\pi_{s^\prime}"'] \AAA(s)\otimes \AAA(s^\prime) \ar[r,"\pi_{s}\otimes\pi_{s^{\prime}}"] &[10mm]  B(H)\otimes B(H) \ar[d,"\circ^{\op}"] \\
B(H)\otimes B(H) \ar[r,"\circ"'] &[10mm] B(H)  
\end{tikzcd}
\end{equation}
of linear maps between vector spaces commutes, for all $s,s^\prime\in S$ with $s\neq s^\prime$,
where $\circ^{(\op)}$ denotes the (opposite) composition of bounded operators.

\item[$\mathsf{Mor}$:] Given two objects $(H,\pi)$ and $(H^\prime,\pi^\prime)$,
the Banach space of morphisms is the closed subspace
\begin{flalign}
{}^\ast\mathbf{Rep}_{\AAA}\big((H,\pi),(H^\prime,\pi^\prime)\big)\,\subseteq\, B(H,H^\prime)
\end{flalign}
of the Banach space of bounded operators from $H$ to $H^\prime$ which consists
of all bounded operators $L : H\to H^\prime$ such that the diagram
\begin{equation}\label{eqn:SSSmorphisms}
\begin{tikzcd}
\ar[d,"\pi_s"'] \AAA(s)\ar[r,"\pi_s^\prime"] &[4mm] B(H^\prime)\ar[d,"(-)\circ L"]\\
B(H)\ar[r,"L\circ(-)"'] &[4mm] B(H,H^\prime)
\end{tikzcd}
\end{equation}
of linear maps between vector spaces commutes, for all $s\in S$.
\end{itemize}
The identity morphisms are given by $\id_H : (H,\pi)\to (H,\pi)$
and the composition of two morphisms $L : (H,\pi)\to (H^\prime,\pi^\prime)$
and $L^\prime : (H^\prime,\pi^\prime)\to (H^{\prime\prime},\pi^{\prime\prime})$
is the composition $L^\prime\circ L: (H,\pi) \to (H^{\prime\prime},\pi^{\prime\prime})$
of bounded operators. The $\ast$-involution is defined by forming adjoints $L^\ast : (H^\prime,\pi^\prime)\to
(H,\pi)$ of bounded operators $L : (H,\pi)\to (H^\prime,\pi^\prime)$.
\end{defi}

\begin{rem}\label{rem:SSStraditional}
The usual definition of $\ast$-representations of the 
extended $\overline{\AAA}\in \CastAQFT(S)$ in \eqref{eqn:Aoverline}
consists of $\ast$-homomorphisms $\overline{\AAA}(S)\to B(H)$ 
from the global $C^\ast$-algebra $\overline{\AAA}(S)\in \CastAlg$
which is assigned to the entire set $S\in\Sub(S)$.
We would like to note that this definition is equivalent to our Definition \ref{def:SSS}:
As a consequence of the adjunction in Proposition \ref{prop:Castadjunctions} (b),
a $\ast$-homomorphism $\overline{\AAA}(S) = \mathrm{compl}\big(j_!(\AAA)(S)\big)\to B(H)$
is equivalent to the datum of a $\ast$-homomorphism $\pi : j_!(\AAA)(S)\to B(H)$.
From our construction
of $j_!(\AAA)(S)$ in the proof of  Proposition \ref{prop:LKECast}, in terms of a colimit 
over the slice category $\Fin(S)/S$, it follows that this datum is equivalent to a family 
\begin{flalign}
\bigg\{\pi_U \,:\, \AAA(U)\,:=\,\bigotimes_{s\in U}\AAA(s)~\xrightarrow{\;\quad\;}~B(H)\bigg\}_{U\in \Fin(S)}
\end{flalign}
of $\ast$-homomorphisms, indexed by all finite subsets $U\subseteq S$, such that the diagram
\begin{equation}\label{eqn:trianglerep}
\begin{tikzcd}
& B(H) &\\
\AAA(U) \ar[ru,"\pi_U"]\ar[rr] && \AAA(V)\ar[lu,"\pi_V"']
\end{tikzcd}
\end{equation}
in $\astAlg$ commutes, for all inclusions $U\subseteq V$ of finite subsets.
Since $\AAA(U) =\bigotimes_{s\in U}\AAA(s)$ is a tensor product $\ast$-algebra,
for all $U\in \Fin(S)$, such datum is equivalent to an $S$-indexed family $\{\pi_s : \AAA(s)\to B(H)\}_{s\in S}$
of mutually commuting $\ast$-homomorphisms as in Definition \ref{def:SSS}.
\sk

By a similar argument as above, it follows that every object $(H,\pi)\in {}^\ast\mathbf{Rep}_{\AAA}$
in the category from Definition \ref{def:SSS} comes with a canonical $\ast$-homomorphism
\begin{flalign}\label{eqn:piUextend}
\pi_U\,:\,\overline{\AAA}(U)~\xrightarrow{\;\quad\;}~B(H)
\end{flalign}
from the $C^\ast$-algebra $\overline{\AAA}(U)\in \CastAlg$
which is assigned by the extended $\overline{\AAA}\in \CastAQFT(S)$ 
to any (possibly infinite) subset $U\in\Sub(S)$.
This family of $\ast$-homomorphisms satisfies commutative triangles as in \eqref{eqn:trianglerep},
for all subset inclusion $U\subseteq V$, and it is related to the input datum via
$\pi_{\{s\}} = \pi_s$, for all elements $s\in S$.
\end{rem}

The $C^\ast$-category of all $\ast$-representations ${}^\ast\mathbf{Rep}_{\AAA}$
is typically too large and unstructured,
which has led Doplicher, Haag and Roberts \cite{DHR} and later
also Buchholz and Fredenhagen \cite{BuchholzFredenhagen} to consider only specific
classes of $\ast$-representations which are called \textit{superselection sectors}. 
These $\ast$-representations are defined relative to a fixed faithful $\ast$-representation 
$(H,\pi_0)\in {}^\ast\mathbf{Rep}_{\AAA}$  (the ``vacuum representation'') and
are required to be localizable in a suitable class of subsets $U\subseteq S$,
i.e.\ they are unitarily equivalent to the vacuum representation in all 
complements $U^\cc := S\setminus U$.
In our context, it is convenient to describe such localization regions 
by means of a full subcategory $\CC(S)\subseteq \Sub(S)$ 
of the orthogonal category $\Sub(S)^\perp$ from Definition \ref{def:SubS},
which inherits the structure of an orthogonal full subcategory 
$\CC(S)^\perp\subseteq \Sub(S)^\perp$ by 
restricting the orthogonality relation.
\begin{defi}\label{def:SSSloc}
Let $\AAA = \{\AAA(s)\}_{s\in S}\in\CastAlg^S$ be any family of $C^\ast$-algebras
which is indexed by a set $S\in \Set$. Fix any object $(H,\pi_0)\in {}^\ast\mathbf{Rep}_\AAA$ 
such that the $\ast$-homomorphism $\pi_{0U} : \overline{\AAA}(U)\to B(H)$ is injective, for all $U\subseteq S$,
and any full subcategory $\CC(S)\subseteq \Sub(S)$.
The $C^\ast$-category of \textit{$(\CC(S),\pi_0)$-localizable superselection sectors} 
of the corresponding extended $\overline{\AAA}\in \CastAQFT(S)$ 
in \eqref{eqn:Aoverline} is defined as the full $C^\ast$-subcategory
\begin{subequations}\label{eqn:localizablerep}
\begin{flalign}
\SSS_{(\AAA,\CC(S),\pi_0)} \,\subseteq\,{}^\ast\mathbf{Rep}_\AAA
\end{flalign}
consisting of all objects $(K,\pi)\in {}^\ast\mathbf{Rep}_\AAA$ which satisfy
the following localizability conditions: For every $U\in \CC(S)$, there exists
a unitary operator $u : K\to H$ such that the diagram
\begin{equation}
\begin{tikzcd}
\ar[d,"\pi_{s}"'] \AAA(s) \ar[r,"\pi_{0s}"] &[4mm]  B(H) \ar[d,"(-)\circ u"] \\
B(K) \ar[r,"u\circ(-)"'] &[4mm] B(K,H)  
\end{tikzcd}
\end{equation}
\end{subequations}
of linear maps between vector spaces commutes, for all $s\in U^\cc := S\setminus U$ in the complement of $U\subseteq S$.
\end{defi}

\begin{rem}\label{rem:smalllocalizableisenough}
To show that a $\ast$-representation $(K,\pi)\in {}^\ast\mathbf{Rep}_\AAA$
lies in the full $C^\ast$-subcategory $\SSS_{(\AAA,\CC(S),\pi_0)} \subseteq {}^\ast\mathbf{Rep}_\AAA$,
it suffices to verify the localizability conditions \eqref{eqn:localizablerep}
only for sufficiently small subsets $U\in \CC(S)$. More precisely, given any morphism
$U\subseteq V$ in $\CC(S)$, localizability of $(K,\pi)\in {}^\ast\mathbf{Rep}_\AAA$ 
in the smaller subset $U\in \CC(S)$ implies localizability in the larger subset $V\in\CC(S)$. 
Indeed, if there exists a unitary operator $u: K\to H$ such that $u \circ \pi_s(-) = \pi_{0s}(-)\circ u$,
for all $s\in U^\cc$, then this property holds true in particular for all $s\in V^\cc \subseteq U^\cc$.
This simple observation will become useful below.
\end{rem}

In our constructions below, it is convenient to work with the following family
of unitarily equivalent models for the $C^\ast$-category $\SSS_{(\AAA,\CC(S),\pi_0)}$ 
of $(\CC(S),\pi_0)$-localizable superselection sectors from Definition \ref{def:SSSloc}.
\begin{lem}\label{lem:SSSU}
For each $U\in\CC(S)$, we denote by
\begin{subequations}\label{eqn:localizedrep}
\begin{flalign}
\SSS_{(\AAA,\pi_0)}(U)\,\subseteq\,\SSS_{(\AAA,\CC(S),\pi_0)}
\end{flalign}
the full $C^\ast$-subcategory consisting of all objects $(H,\pi)\in \SSS_{(\AAA,\CC(S),\pi_0)}$ 
which share the same Hilbert space with the reference $\ast$-representation $(H,\pi_0)\in{}^\ast\mathbf{Rep}_\AAA$ 
and satisfy the strict localization condition
\begin{flalign}
\pi_s\,=\,\pi_{0s}\,:\AAA(s)~\xrightarrow{\;\quad\;}~B(H)~,
\end{flalign}
\end{subequations}
for all $s\in U^\cc$. Then the full $C^\ast$-subcategory inclusion
\eqref{eqn:localizedrep} is a unitary equivalence of $C^\ast$-categories, for every $U\in\CC(S)$.
\end{lem}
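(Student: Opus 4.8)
The plan is to reduce the statement to the characterization of unitary equivalences recorded in Theorem \ref{theo:CastCatmodel}~(b): a $\ast$-functor is a unitary equivalence precisely when its underlying ordinary functor is fully faithful and essentially surjective. Since \eqref{eqn:localizedrep} is a \emph{full} $C^\ast$-subcategory inclusion, it is fully faithful by construction, so the only thing I would need to establish is essential surjectivity, i.e.\ that every object $(K,\pi)\in\SSS_{(\AAA,\CC(S),\pi_0)}$ is isomorphic, inside $\SSS_{(\AAA,\CC(S),\pi_0)}$, to an object lying in the strictly localized subcategory $\SSS_{(\AAA,\pi_0)}(U)$.

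To produce such an isomorphism I would exploit localizability of $(K,\pi)$ at the specific region $U$: by Definition \ref{def:SSSloc} there is a unitary $u:K\to H$ with $u\,\pi_s(a)=\pi_{0s}(a)\,u$ for all $s\in U^\cc$ and all $a\in\AAA(s)$. I then transport the representation along $u$, setting $\pi_s^\prime(a):=u\,\pi_s(a)\,u^\ast$ on the reference Hilbert space $H$, for every $s\in S$. A routine check shows that each $\pi_s^\prime$ is again a $\ast$-homomorphism (conjugation by a unitary preserves $\ast$-homomorphism relations) and that the $\pi_s^\prime$ are mutually commuting, because the $\pi_s$ are; hence $(H,\pi^\prime)$ is a genuine object of ${}^\ast\mathbf{Rep}_\AAA$ which shares the Hilbert space $H$ with the reference representation.

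The two defining properties of $\SSS_{(\AAA,\pi_0)}(U)$ then fall out directly. For the strict localization condition, when $s\in U^\cc$ the intertwining relation yields $\pi_s^\prime(a)=u\,\pi_s(a)\,u^\ast=\pi_{0s}(a)\,u\,u^\ast=\pi_{0s}(a)$, so $\pi_s^\prime=\pi_{0s}$ on the complement of $U$. To see that $(H,\pi^\prime)$ still lies in $\SSS_{(\AAA,\CC(S),\pi_0)}$, I would observe that $u$ is itself a \emph{unitary isomorphism} $(K,\pi)\to(H,\pi^\prime)$ in ${}^\ast\mathbf{Rep}_\AAA$: the morphism condition \eqref{eqn:SSSmorphisms}, namely $\pi_s^\prime(a)\,u=u\,\pi_s(a)$, holds for all $s\in S$ by the very definition of $\pi^\prime$, and $u^\ast=u^{-1}$ supplies the inverse morphism. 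Since $\SSS_{(\AAA,\CC(S),\pi_0)}$ is a full subcategory that is stable under overall unitary isomorphism — localizability in any $V\in\CC(S)$ transports along $u$ by replacing the implementing unitary $u_V$ with $u_V\,u^\ast$ — the object $(H,\pi^\prime)$ indeed belongs to $\SSS_{(\AAA,\CC(S),\pi_0)}$, and fullness then upgrades $u$ to an isomorphism in that category.

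Putting these together gives essential surjectivity, and combined with full faithfulness this exhibits \eqref{eqn:localizedrep} as a unitary equivalence. I do not expect a genuine obstacle: the entire argument is a transport-of-structure along the localizing unitary $u$. The only points requiring care are the bookkeeping of the intertwining conventions fixed in \eqref{eqn:SSSmorphisms} and in \eqref{eqn:localizablerep} (to keep the direction of $u$ and the placement of adjoints consistent), and the closure of the localizability condition under overall unitary isomorphism, which is what guarantees that the transported object remains in $\SSS_{(\AAA,\CC(S),\pi_0)}$.
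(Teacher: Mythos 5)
Your proposal is correct and follows essentially the same route as the paper: both arguments transport $(K,\pi)$ along the localizing unitary $u$ to the strictly localized object $\pi^u_s(-)=u\circ\pi_s(-)\circ u^\ast$, observe that $u$ becomes a unitary isomorphism $(K,\pi)\to(H,\pi^u)$ in $\SSS_{(\AAA,\CC(S),\pi_0)}$, and conclude essential surjectivity (fullness being automatic). Your extra remark that localizability in every $V\in\CC(S)$ is preserved under the transport is a point the paper leaves implicit, and it is handled correctly.
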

\begin{proof} By Definition \ref{def:SSSloc}, 
given any $(K,\pi)\in \SSS_{(\AAA,\CC(S),\pi_0)}$, there exists 
a unitary operator $u: K\to H$ such that $u\circ \pi_s(-) = \pi_{0s}(-)\circ u$,
for all $s\in U^\cc$. Then the family of $\ast$-homomorphisms
\begin{flalign}\label{eqn:piuformula}
\pi^u\,:=\,\Big\{u \circ \pi_s(-)\circ u^\ast \,:\, \AAA(s)~\xrightarrow{\;\quad\;}~ B(H)\Big\}_{s\in S}
\end{flalign}
defines an object $(H,\pi^u)\in \SSS_{(\AAA,\CC(S),\pi_0)}$ which satisfies by construction 
the strict localization condition
$\pi^u_s = \pi_{0s}$, for all $s\in U^\cc$. In other words, $(H,\pi^{u})$ is contained in the
full $C^\ast$-subcategory $\SSS_{(\AAA,\pi_0)}(U)\subseteq\SSS_{(\AAA,\CC(S),\pi_0)}$.
The unitary $u: K\to U$ defines a unitary isomorphism $u : (K,\pi)\to (H,\pi^u)$ in $\SSS_{(\AAA,\CC(S),\pi_0)}$,
which implies that the full $C^\ast$-subcategory inclusion $\SSS_{(\AAA,\pi_0)}(U)\subseteq 
\SSS_{(\AAA,\CC(S),\pi_0)}$ is a unitary equivalence of $C^\ast$-categories.
\end{proof}

\begin{rem}
We would like to note that similar $C^\ast$-categories of superselection sectors
which are also strictly localized in some fixed region appeared before in the literature.
See for example \cite[Setting 5.1]{Ogata} for the context of $2$-dimensional lattice quantum 
systems and \cite[Definition 2.25]{SSS} for a general von Neumann algebraic 
setting for AQFTs over posets.
\end{rem}

\begin{rem}
Since all objects $(H,\pi)\in \SSS_{(\AAA,\pi_0)}(U)$ in the $C^\ast$-category
from \eqref{eqn:localizedrep} share the same Hilbert space $H$,
fixed by the choice of reference $\ast$-representation $(H,\pi_0)\in {}^\ast\mathbf{Rep}_\AAA$,
we will drop the Hilbert space from our notations and simply write $\pi\in \SSS_{(\AAA,\pi_0)}(U)$
for an object and $L : \pi\to \pi^\prime$ for a morphism in $\SSS_{(\AAA,\pi_0)}(U)$.
\end{rem}

\begin{rem}\label{rem:transporter}
From Definition \ref{def:SSSloc}, it follows that, given any
$\pi\in \SSS_{(\AAA,\pi_0)}(U) \subseteq \SSS_{(\AAA,\CC(S),\pi_0)}$ which is
localized strictly in $U\in\CC(S)$ and any other localization region $V\in\CC(S)$, there exists
a unitary operator $u : H\to H$ such that $u\circ \pi_s(-)  = \pi_{0s}(-)\circ u$, for all $s\in V^\cc$.
This means that $\pi^u\in \SSS_{(\AAA,\pi_0)}(V)\subseteq \SSS_{(\AAA,\CC(S),\pi_0)}$ defined
in \eqref{eqn:piuformula} is localized strictly in $V\in\CC(S)$. We further have a unitary isomorphism
\begin{flalign}\label{eqn:transporter}
u\,:\, \pi~\xrightarrow{\;\quad\;}~\pi^u
\end{flalign}
in $\SSS_{(\AAA,\CC(S),\pi_0)}$, which in the literature is often called a charge transporter.
\end{rem}

A direct consequence of the definition in \eqref{eqn:localizedrep} is that
we have inclusions of full $C^\ast$-subcategories
\begin{flalign}\label{eqn:SSSUVinclion}
\SSS_{(\AAA,\pi_0)}(U)\,\subseteq\,\SSS_{(\AAA,\pi_0)}(V)\,\subseteq\,\SSS_{(\AAA,\CC(S),\pi_0)}~,
\end{flalign}
for all morphisms $U\subseteq V$ in $\CC(S)$. Note that these inclusions are unitary equivalences
by Lemma~\ref{lem:SSSU} and the $2$-out-of-$3$ property of unitary equivalences. 
This implies the following result.
\begin{propo}\label{prop:SSSfunctor}
The $C^\ast$-categories from \eqref{eqn:localizedrep} assemble into a functor
\begin{flalign}
\SSS_{(\AAA,\pi_0)} \,:\,\CC(S)~\xrightarrow{\;\quad\;}~\CastCat
\end{flalign}
which assigns to each morphism $U\subseteq V$ in $\CC(S)$ the 
full $C^\ast$-subcategory inclusion in \eqref{eqn:SSSUVinclion}.
This functor is locally constant in the sense that it 
assigns to every morphism $U\subseteq V$ in $\CC(S)$ a unitary equivalence of
$C^\ast$-categories.
\end{propo}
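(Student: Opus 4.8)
The plan is to leverage that $\CC(S)$, as a full subcategory of the poset $\Sub(S)$ of subsets ordered by inclusion, is itself a poset, so that between any two objects there is at most one morphism. Defining the functor $\SSS_{(\AAA,\pi_0)} : \CC(S) \to \CastCat$ therefore reduces to three tasks: assigning a $C^\ast$-category to each object $U$, which I take to be the strictly localized category $\SSS_{(\AAA,\pi_0)}(U)$ from \eqref{eqn:localizedrep}; assigning a $\ast$-functor to each morphism $U \subseteq V$, which I take to be the full $C^\ast$-subcategory inclusion $\SSS_{(\AAA,\pi_0)}(U) \subseteq \SSS_{(\AAA,\pi_0)}(V)$ recorded in \eqref{eqn:SSSUVinclion}; and verifying the functoriality axioms. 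The first two are immediate, and each subcategory inclusion is automatically a $\ast$-functor, since it is $\bbC$-linear and involution-preserving.

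For functoriality, the inclusion attached to an identity $U \subseteq U$ is visibly the identity $\ast$-functor on $\SSS_{(\AAA,\pi_0)}(U)$, while for a composable chain $U \subseteq V \subseteq W$ the composite of the two subcategory inclusions is, on the nose, the subcategory inclusion $\SSS_{(\AAA,\pi_0)}(U) \subseteq \SSS_{(\AAA,\pi_0)}(W)$ assigned to $U \subseteq W$. This strict composability---which holds because every $\SSS_{(\AAA,\pi_0)}(U)$ is literally a full $C^\ast$-subcategory of the one fixed ambient category $\SSS_{(\AAA,\CC(S),\pi_0)}$---means that no coherence data are required, so the data assemble into a genuine functor valued in $\CastCat$.

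Local constancy is then essentially the observation already recorded above the statement. Fixing $U \subseteq V$, I would view the triangle of full subcategory inclusions $\SSS_{(\AAA,\pi_0)}(U) \subseteq \SSS_{(\AAA,\pi_0)}(V) \subseteq \SSS_{(\AAA,\CC(S),\pi_0)}$ from \eqref{eqn:SSSUVinclion}. By Lemma \ref{lem:SSSU} both legs into $\SSS_{(\AAA,\CC(S),\pi_0)}$ are unitary equivalences; since unitary equivalences are exactly the weak equivalences of the model structure of Theorem \ref{theo:CastCatmodel}, they enjoy the two-out-of-three property, which forces the remaining leg $\SSS_{(\AAA,\pi_0)}(U) \subseteq \SSS_{(\AAA,\pi_0)}(V)$ to be a unitary equivalence. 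I do not anticipate any genuine obstacle: the analytic heart of the matter---constructing the charge transporter that implements the equivalence---is already discharged in Lemma \ref{lem:SSSU}, and what remains is the bookkeeping of subcategory inclusions in a poset together with a formal appeal to two-out-of-three. The only point meriting minor care is confirming the strict (rather than merely coherent) composability of the inclusions noted above, so that the assignment is a functor into the $1$-category $\CastCat$ on the nose.
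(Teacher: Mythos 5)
Your proposal is correct and follows the same route as the paper: functoriality is immediate because the strictly localized categories are all full $C^\ast$-subcategories of the single ambient category $\SSS_{(\AAA,\CC(S),\pi_0)}$, so the inclusions compose on the nose, and local constancy follows from Lemma \ref{lem:SSSU} together with the $2$-out-of-$3$ property of unitary equivalences. This is exactly the argument the paper gives in the paragraph surrounding \eqref{eqn:SSSUVinclion}.
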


\begin{rem}
Another direct consequence of the definition in \eqref{eqn:localizedrep}
is that there exists, for each $U\in\CC(S)$, 
a distinguished object
\begin{flalign}\label{eqn:pointing}
\pi_0\,\in\, \SSS_{(\AAA,\pi_0)}(U)
\end{flalign}
given by the reference $\ast$-representation.
Note that these objects are compatible with the
functor structure from Proposition \ref{prop:SSSfunctor}, hence one can also
think of this functor as a functor to the category of pointed 
(i.e.\ $\mathbb{E}_0$-monoidal) $C^\ast$-categories.
\end{rem}

\subsection{\label{subsec:PFAstructure}Prefactorization algebra structure}
Note that the functor $\SSS_{(\AAA,\pi_0)} :\CC(S) \to \CastCat$ from 
Proposition \ref{prop:SSSfunctor} and the distinguished objects \eqref{eqn:pointing}
describe the arity $0$ and $1$ data of a $C^\ast$-categorical
prefactorization algebra over $\CC(S)^\perp$, see in particular Remark \ref{rem:PFA}. 
In order to obtain the entire data of a $C^\ast$-categorical prefactorization algebra,
one has to define also actions of the $(n\geq 2)$-ary operations $\und{U} = (U_1,\dots,U_n)\to V$ in 
the operad $\P_{\CC(S)^\perp}$, i.e.\ $U_i\subseteq V$ for all $i$ and
$U_i\cap U_j=\varnothing$ for all $i\neq j$. 
The goal of this section is to show that such an extension of the functor
to a prefactorization algebra exists, provided that certain
geometric and algebraic assumptions are satisfied. 
Let us start with introducing suitable geometric assumptions on 
the full subcategory $\CC(S)\subseteq \Sub(S)$
encoding the localization regions of superselection sectors 
in Definition \ref{def:SSSloc} which ensure the existence
of our constructions in this subsection and also in Subsection \ref{subsec:monoidal} below.
We show in Section \ref{sec:lattice} that these assumptions are satisfied
in the case of cone-shaped subsets in the lattice $S=\bbZ^n$.
\begin{assu}\label{assu:0}
We assume that the full subcategory $\CC(S)\subseteq \Sub(S)$
satisfies the following properties: 
\begin{itemize}
\item[(1)] Given any $U\in\CC(S)$, there exists an object $V\in \CC(S)$
such that $V\subseteq U^\cc$.

\item[(2)] Given any $U,V\in\CC(S)$, there exist objects  $V^\prime, W\in\CC(S)$
such that $V^\prime\subseteq V$ and $U\cup V^\prime\subseteq W$.

\item[(3)] Given any $U\in\CC(S)$ and any finite 
tuple $(U_1,\dots,U_n)\in\CC(S)^n$ of mutually disjoint subsets, i.e.\ $U_i\cap U_j=\varnothing$
for all $i\neq j$, there exists an object $V\in\CC(S)$ 
such that $V\subseteq U$ and $V\cap U_i\neq \varnothing$
for at most one index $i\in\{1,\dots,n\}$.
\end{itemize}
\end{assu}

\begin{rem}\label{rem:assu0}
We would like to emphasize that our geometric Assumption \ref{assu:0}
is designed such that the strictly localized superselection sectors admit 
the structure of a prefactorization algebra over $\CC(S)^\perp$ taking values in monoidal
$C^\ast$-categories, see in particular Theorem \ref{theo:PFAinMonCat} below. 
Since such prefactorization algebras are more general 
algebraic objects than braided monoidal categories, it is not surprising
that our geometric assumptions are weaker than the geometric conditions 
(GA0--GA3) in \cite[Section 1.1]{SSS}. The latter are designed to achieve 
a more specialized result, namely that strictly localized superselection sectors form a 
\textit{braided} monoidal $C^\ast$-category.
We return to this point in Section~\ref{sec:lattice}.
\end{rem}

The following algebraic assumptions control the local behavior 
of strictly localized superselection sectors in the sense of \eqref{eqn:localizedrep}.
We show in Theorem \ref{theo:HaagDualityPFA} below that these assumptions hold
true in particular in the case where the extended $\overline{\AAA}\in \CastAQFT(S)$ 
satisfies Haag duality in the reference $\ast$-representation $(H,\pi_0)$,
for all $U\in \CC(S)$. The reason why we prefer presenting Assumption \ref{assu:1}
in the form below, instead of assuming directly the more standard Haag duality property,
is that it is designed to provide a streamlined proof for the key Lemma \ref{lem:PFAstructure}.
\begin{assu}\label{assu:1}
We assume that the following properties hold, for all
pairs $U_1,U_2\in \CC(S)$ of disjoint subsets, i.e.\
$U_1\cap U_2=\varnothing$:
\begin{itemize}
\item[(1)] For any pair of objects $\pi_1\in \SSS_{(\AAA,\pi_0)}(U_1)$ and $\pi_2\in \SSS_{(\AAA,\pi_0)}(U_2)$,
the diagram
\begin{equation}\label{eqn:assu11}
\begin{tikzcd}
\ar[d,"\pi_{1s_1}\otimes\pi_{2s_2}"'] \AAA(s_1)\otimes \AAA(s_2) \ar[r,"\pi_{1s_1}\otimes\pi_{2s_2}"] &[10mm]  B(H)\otimes B(H) \ar[d,"\circ^{\op}"] \\
B(H)\otimes B(H) \ar[r,"\circ"'] &[10mm] B(H)  
\end{tikzcd}
\end{equation}
of linear maps between vector spaces commutes, for all $s_1\in U_1$ and $s_2\in U_2$.

\item[(2)] 
For any pair of morphisms $L_1 : \pi_1\to\pi_1^\prime$ in $\SSS_{(\AAA,\pi_0)}(U_1)$
and $L_2 : \pi_2\to\pi_2^\prime$ in  $ \SSS_{(\AAA,\pi_0)}(U_2)$, we have that 
\begin{subequations}\label{eqn:assu12}
\begin{flalign}\label{eqn:assu12com}
L_1\circ L_2 \,= \,L_2\circ L_1
\end{flalign}
and the diagrams
\begin{equation}\label{eqn:assu12diag}
\begin{tikzcd}
\ar[d,"\pi_{1s_1}"'] \AAA(s_1) \ar[r,"\pi_{1s_1}"] &[4mm]  B(H) \ar[d,"(-)\circ L_2"] \\
B(H) \ar[r,"L_2\circ (-)"'] &[4mm] B(H)  
\end{tikzcd} 
\qquad \qquad
\begin{tikzcd}
\ar[d,"\pi_{2s_2}"'] \AAA(s_2) \ar[r,"\pi_{2s_2}"] &[4mm]  B(H) \ar[d,"(-)\circ L_1"] \\
B(H) \ar[r,"L_1\circ (-)"'] &[4mm] B(H) 
\end{tikzcd}
\end{equation}
\end{subequations}
of linear maps between vector spaces commute, for all $s_1\in U_1$ and $s_2\in U_2$.

\item[(3)] For any pair of objects $\pi_1\in \SSS_{(\AAA,\pi_0)}(U_1)$ and $\pi_2\in \SSS_{(\AAA,\pi_0)}(U_2)$
and any $V\in\CC(S)$ such that $V\cap U_1=\varnothing$ are disjoint (but $V$ and $U_2$ not necessarily disjoint), 
the unitary isomorphism $u_2 : \pi_2\to \pi_2^{u_2}$ from \eqref{eqn:transporter} to some
$\pi_2^{u_2}\in \SSS_{(\AAA,\pi_0)}(V)$ strictly localized in $V$ 
can be chosen such that the diagram
\begin{equation}\label{eqn:assu13}
\begin{tikzcd}
\ar[d,"\pi_{1s_1}"'] \AAA(s_1) \ar[r,"\pi_{1s_1}"] &[4mm]  B(H) \ar[d,"(-)\circ u_2"] \\
B(H) \ar[r,"u_2\circ (-)"'] &[4mm] B(H)  
\end{tikzcd}
\end{equation}
of linear maps between vector spaces commutes, for all $s_1\in U_1\cap V^\cc = U_1$.
\end{itemize}
\end{assu}

\begin{lem}\label{lem:PFAstructure}
Suppose that Assumptions \ref{assu:0} (3) and \ref{assu:1} are satisfied. For any 
$n$-ary operation $(U_1,\dots,U_n)\to V$
in the prefactorization operad $\P_{\CC(S)^\perp}$ from Definition \ref{def:Poperad},
i.e.\ $U_i\subseteq V$ for all $i$ and $U_i\cap U_j=\varnothing$ for all $i\neq j$, 
consider the assignment from the algebraic tensor product
\begin{subequations}\label{eqn:PFAstructure}
\begin{flalign}
\bullet\,:\, \bigotimes_{i=1}^n\SSS_{(\AAA,\pi_0)}(U_i)~&\xrightarrow{\;\quad\;}~\SSS_{(\AAA,\pi_0)}(V)~,\\
\nn (\pi_1,\dots,\pi_n)~&\xmapsto{\;\quad\;}~\pi_1\bullet\cdots\bullet\pi_n~,\\[4pt]
\nn L_1\otimes\cdots\otimes L_n ~&\xmapsto{\;\quad\;}~L_1\bullet \cdots\bullet L_n
\end{flalign}
given by
\begin{flalign}\label{eqn:PFAstructureformula}
(\pi_1\bullet\cdots\bullet\pi_n)_s\,:=\,\begin{cases}
\pi_{is} ~&\text{if } s\in U_i\text{ for }i=1,\dots,n\,, \\
\pi_{0s} ~&\text{if } s\in U_1^\cc\cap \cdots\cap U_n^\cc \,,
\end{cases}
\end{flalign}
for all $s\in S$, and by the composition of bounded operators
\begin{flalign}\label{eqn:PFAstructureformulamorphism}
L_1\bullet \cdots\bullet L_n\,:=\, L_1\circ \cdots\circ L_n\, \in\, B(H)~.
\end{flalign} 
\end{subequations}
Then \eqref{eqn:PFAstructure} is a well-defined $\ast$-functor.
\end{lem}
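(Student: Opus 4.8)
The plan is to verify separately that \eqref{eqn:PFAstructure} is well defined on objects, well defined on morphisms, and functorial, with the membership of the product in the target $\SSS_{(\AAA,\pi_0)}(V)$ as the crux. For the object part, I would first observe that, since the $U_i$ are mutually disjoint, every $s\in S$ satisfies exactly one of the two cases in \eqref{eqn:PFAstructureformula}, so $\pi_1\bullet\cdots\bullet\pi_n$ is an unambiguously defined family of $\ast$-homomorphisms. To see that it is an object of ${}^\ast\mathbf{Rep}_\AAA$, I would check mutual commutativity of $(\pi_1\bullet\cdots\bullet\pi_n)_s$ and $(\pi_1\bullet\cdots\bullet\pi_n)_{s^\prime}$ for $s\neq s^\prime$ by cases: if $s,s^\prime$ lie in a common $U_i$ or both in $\bigcap_k U_k^\cc$ it is inherited from $\pi_i$ (respectively $\pi_0$) being a representation; if $s\in U_i$ and $s^\prime\in U_j$ with $i\neq j$ it is exactly Assumption \ref{assu:1}(1); and the mixed case $s\in U_i$, $s^\prime\in\bigcap_k U_k^\cc$ reduces to the first using the strict localization $\pi_{i s^\prime}=\pi_{0 s^\prime}$. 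Strict localization of the product in $V$ is then immediate, since $U_i\subseteq V$ for all $i$ forces $V^\cc\subseteq\bigcap_k U_k^\cc$, where \eqref{eqn:PFAstructureformula} returns $\pi_{0s}$.

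The substantial step is to show that $\pi_1\bullet\cdots\bullet\pi_n$ actually lies in $\SSS_{(\AAA,\CC(S),\pi_0)}$, i.e.\ that it is localizable in \emph{every} region and not merely strictly localized in $V$. By Remark \ref{rem:smalllocalizableisenough} it suffices to localize it in arbitrarily small regions, and Assumption \ref{assu:0}(3) guarantees that every region contains one, say $V^\prime$, meeting at most one of the $U_i$. I would then witness localizability in such a $V^\prime$ by a charge transporter built as a composite of the individual transporters from Remark \ref{rem:transporter}. In the binary case $\pi_1\bullet\pi_2$ with, say, $V^\prime\cap U_1=\varnothing$, I would invoke Assumption \ref{assu:1}(3) to choose a transporter $u_2\colon\pi_2\to\pi_2^{u_2}$ moving $\pi_2$ into $V^\prime$ while leaving $\pi_1$ undisturbed on $U_1$, pick any transporter $u_1$ localizing $\pi_1$ in $V^\prime$, and verify pointwise on $(V^\prime)^\cc$ that $u_1 u_2$ conjugates $\pi_1\bullet\pi_2$ to $\pi_0$: on $U_1$ the factor $u_2$ acts trivially while $u_1$ trivializes $\pi_1$, on $U_2$ the factor $u_2$ trivializes $\pi_2$ while $u_1$ commutes with $\pi_0$, and on the residual complement both factors commute with $\pi_0$. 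The general case would follow by induction on $n$, regarding the subproduct as a single localized sector so that the non-disturbance clause of Assumption \ref{assu:1}(3) applies to it. I expect this localizability step to be the main obstacle: the delicate point is the bookkeeping of which transporter disturbs which sector, which is precisely what Assumption \ref{assu:1}(3) together with the at-most-one-overlap property of Assumption \ref{assu:0}(3) is designed to control.

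For the morphism part, given $L_i\colon\pi_i\to\pi_i^\prime$ in $\SSS_{(\AAA,\pi_0)}(U_i)$, I would check that $L_1\bullet\cdots\bullet L_n=L_1\circ\cdots\circ L_n$ satisfies the intertwining condition \eqref{eqn:SSSmorphisms} for the product representations. For $s\in U_k$ I would commute the generator $\pi_{ks}(a)$ leftward past the factors $L_j$ with $j\neq k$, convert it at the $k$-th factor via $L_k\,\pi_{ks}(a)=\pi_{ks}^\prime(a)\,L_k$, and continue leftward; here each $L_j$ ($j\neq k$) commutes with both $\pi_{ks}$ and $\pi_{ks}^\prime$ by the diagrams \eqref{eqn:assu12diag} of Assumption \ref{assu:1}(2), applied to the given morphisms and to suitable identity morphisms. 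For $s\in\bigcap_k U_k^\cc$ each $L_i$ commutes with $\pi_{0s}$ by \eqref{eqn:SSSmorphisms} for $L_i$ evaluated on $U_i^\cc$. Since composition of bounded operators is multilinear, the rule $L_1\otimes\cdots\otimes L_n\mapsto L_1\circ\cdots\circ L_n$ descends to a well-defined linear map out of the algebraic tensor product of the morphism spaces.

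Finally, I would verify functoriality. Preservation of identities is clear because all objects share the Hilbert space $H$, so $\id_H\circ\cdots\circ\id_H=\id_H$. For composition I would use that composition in the algebraic tensor product is componentwise, reducing the claim to the operator identity $(M_1L_1)\circ\cdots\circ(M_nL_n)=(M_1\circ\cdots\circ M_n)\circ(L_1\circ\cdots\circ L_n)$; this rearrangement is exactly the commutativity \eqref{eqn:assu12com} of morphisms coming from different factors. The same commutativity, applied to the adjoints, yields compatibility with the involution, $(L_1\circ\cdots\circ L_n)^\ast=L_n^\ast\circ\cdots\circ L_1^\ast=L_1^\ast\circ\cdots\circ L_n^\ast$, and multilinearity accounts for $\bbC$-linearity. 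Assembling these points shows that \eqref{eqn:PFAstructure} is a well-defined $\ast$-functor.
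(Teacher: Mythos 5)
Your overall strategy coincides with the paper's: the case analysis for mutual commutativity via Assumption~\ref{assu:1}~(1), the reduction of localizability to regions meeting at most one $U_i$ via Remark~\ref{rem:smalllocalizableisenough} and Assumption~\ref{assu:0}~(3), the composite charge transporter witnessing localizability, and the use of Assumption~\ref{assu:1}~(2) for the morphism part, functoriality and the $\ast$-involution are all exactly the paper's argument. Your binary case is correct and is precisely the paper's computation for $n=2$.

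The one step that does not go through as stated is the passage to general $n$ ``by induction, regarding the subproduct as a single localized sector so that the non-disturbance clause of Assumption~\ref{assu:1}~(3) applies to it.'' Assumption~\ref{assu:1}~(3) is only available for a pair of sectors strictly localized in two \emph{disjoint members of $\CC(S)$}, and the subproduct $\pi_1\bullet\cdots\bullet\pi_{n-1}$ is not strictly localized in any $W\in\CC(S)$ disjoint from $U_n$ and from the target region: it is only strictly localized in $V$, which contains $U_n$, and $\CC(S)$ (e.g.\ cones) is not closed under disjoint unions. So the non-disturbance clause cannot be invoked for the grouped object. The repair is to keep the sectors separate, as the paper does: choose individual transporters $u_i$ localizing each $\pi_i$ in the small region $U$, form $u=u_1\circ\cdots\circ u_n$ with the distinguished $u_n$ (the one whose cone may meet $U$) rightmost, and for $s\in U_i\cap U^\cc$ with $i<n$ move $\pi_{is}(-)$ past $u_{i+1},\dots,u_n$ by applying Assumption~\ref{assu:1}~(3) \emph{pairwise} to $(\pi_i,\pi_j)$ with target $U$ (using $U_i\cap U_j=\varnothing$ and $U_i\cap U=\varnothing$), then use the intertwining property of $u_i$ and the commutation of $u_1,\dots,u_{i-1}$ with $\pi_{0s}$. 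Your own $n=2$ bookkeeping already contains every ingredient of this direct argument, so the gap is one of formulation rather than of idea, but as written the inductive step would fail.
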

\begin{proof}
Note that mutual disjointness $U_i\cap U_j=\varnothing$, for all $i\neq j$,
implies that the case distinction in \eqref{eqn:PFAstructureformula} is valid.
Assumption \ref{assu:1} (1) implies that the family of $\ast$-homomorphisms
\eqref{eqn:PFAstructureformula} is mutually commuting and it is obvious from the 
definition that $(\pi_1\bullet\cdots\bullet\pi_n)_s = \pi_{0s}$, for all $s\in V^\cc$, because
$U_i\subseteq V$ for all $i$. It remains to show that \eqref{eqn:PFAstructureformula}
satisfies the localizability condition from Definition \ref{def:SSSloc}, for all $U\in\CC(S)$.
Using Remark \ref{rem:smalllocalizableisenough} and Assumption \ref{assu:0} (3),
it suffices to consider the case where $U\in\CC(S)$ intersects at most one of the 
$U_i$. Without loss of generality, we can take this to be $U_n$ (otherwise, we relabel
the subsets), hence we may assume that $U\cap U_i =\varnothing$ for all $i=1,\dots,n-1$.
Since $\pi_i\in \SSS_{(\AAA,\pi_0)}(U_i)$, there exists by definition a unitary operator
$u_i : H\to H$ such that $u_i\circ \pi_{is}(-) = \pi_{0s}(-)\circ u_i$, for all $s\in U^\cc$.
(Note that this condition specializes to the commutativity
property $u_i\circ \pi_{0s}(-) = \pi_{0s}(-)\circ u_i$, for all $s\in U_i^\cc\cap U^\cc$.)
We claim that the composition $u:= u_1\circ\cdots\circ u_n : H\to H$ of these unitaries
(with the distinguished $u_n$ on the right, but otherwise an arbitrary composition order) satisfies
\begin{subequations}
\begin{flalign}
u \circ (\pi_1\bullet\cdots\bullet\pi_n)_s(-) \,=\, \pi_{0s}(-)\circ u~,
\end{flalign}
for all $s\in U^\cc$, which would complete the proof
that $\pi_1\bullet\cdots\bullet\pi_n\in \SSS_{(\AAA,\pi_0)}(V)$ defines an object.
For $s\in U_1^\cc \cap\cdots \cap U_n^\cc\cap U^\cc$, this follows directly
from the commutativity of $u_i$ and $\pi_{0s}$ in this case. For $i=1,\dots,n-1$ and $s\in U_i\cap U^\cc$,
this follows from the calculation
\begin{flalign}
\nn u_1\circ\cdots\circ u_n \circ \pi_{is}(-) \,&=\, 
u_1\circ\cdots\circ u_i \circ \pi_{is}(-) \circ u_{i+1}\circ \cdots \circ u_n\\
\,&=\, u_1\circ\cdots\circ u_{i-1} \circ \pi_{0s}(-) \circ u_{i}\circ \cdots \circ u_n\,=\,
\pi_{0s}(-)\circ u_1\circ\cdots\circ u_n~,
\end{flalign}
where the first step uses Assumption \ref{assu:1} (3), the second step 
uses the intertwining property of $u_i$, and the last step follows from 
the commutativity of $u_j$ and $\pi_{0s}$, noting that $s\in U_i\cap U^\cc\subseteq U_j^\cc\cap U^\cc$
for all $j\neq i$. For $s\in U_n\cap U^\cc$, this follows directly
from a similar calculation 
\begin{flalign}
u_1\circ\cdots\circ u_n \circ \pi_{ns}(-) \,=\, 
u_1\circ\cdots\circ u_{n-1} \circ \pi_{0s}(-) \circ u_n
\,=\,
\pi_{0s}(-)\circ u_1\circ\cdots\circ u_n~,
\end{flalign}
\end{subequations}
which does not require Assumption \ref{assu:1} (3).
\sk

To verify that $L_1\bullet \cdots\bullet 
L_n: \pi_1\bullet\cdots\bullet\pi_n\to \pi_1^\prime\bullet\cdots\bullet\pi_n^\prime$
defines a morphism in $\SSS_{(\AAA,\pi_0)}(V)$, we have to show that
\begin{flalign}
(L_1\bullet \cdots\bullet L_n) \circ (\pi_1\bullet\cdots\bullet\pi_n)_s(-)\,=\, 
(\pi_1^\prime\bullet\cdots\bullet\pi_n^\prime)_s(-)\circ (L_1\bullet \cdots\bullet L_n)~,
\end{flalign}
for all $s\in S$. For $s\in U_1^\cc\cap\cdots \cap U_n^\cc$,
this follows directly from \eqref{eqn:SSSmorphisms} and the fact that $L_i : \pi_i\to \pi_i^\prime$
is a morphism in $\SSS_{(\AAA,\pi_0)}(U_i)$, for all $i=1,\dots,n$. For $s\in U_i$, this follows 
directly from Assumption \ref{assu:1} (2) and a similar calculation as in the previous paragraph.
\sk

The assignment in \eqref{eqn:PFAstructure} is clearly functorial
by using that $L_i\circ L_j = L_j\circ L_i$, for all $i\neq j$, from Assumption \ref{assu:1} (2).
By a similar argument one shows that this linear functor preserves the $\ast$-involutions
$(L_1\bullet \cdots\bullet L_n)^\ast=(L_1\circ \cdots\circ L_n)^\ast = (L_n\circ\cdots\circ L_1)^\ast = 
L_1^\ast\circ\cdots\circ L_n^\ast=L_1^\ast\bullet\cdots\bullet L_n^\ast$.
\end{proof}

\begin{propo}\label{prop:PFA}
Suppose that Assumptions \ref{assu:0} (3) and \ref{assu:1} are satisfied. 
Then the functor from Proposition \ref{prop:SSSfunctor}, the distinguished objects \eqref{eqn:pointing} 
and the arity $(n\geq 2)$ structure maps from Lemma \ref{lem:PFAstructure} define
a $C^\ast$-categorical prefactorization algebra $\SSS_{(\AAA,\pi_0)} : \P_{\CC(S)^\perp}\to \CastCat$
over the orthogonal category $\CC(S)^\perp$. This prefactorization algebra is 
locally constant in the sense that it assigns to every $1$-ary operation $U\to V$ in 
$\P_{\CC(S)^\perp}$ a unitary equivalence of $C^\ast$-categories.
\end{propo}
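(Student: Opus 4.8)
The plan is to verify the three operad-algebra axioms (i)--(iii) from Remark \ref{rem:PFA}, having already assembled the data: the $C^\ast$-categories $\SSS_{(\AAA,\pi_0)}(U)$ from \eqref{eqn:localizedrep}, the arity $0$ pointings \eqref{eqn:pointing}, the arity $1$ inclusions from Proposition \ref{prop:SSSfunctor}, and the arity $(n\geq 2)$ structure maps $\bullet$ from Lemma \ref{lem:PFAstructure}. By the universal property \eqref{eqn:maxtensoralgtensor} of the maximal tensor product, it suffices throughout to work with the algebraic tensor products and check commutativity of the diagrams \eqref{eqn:PFAdiagramcomp} and \eqref{eqn:PFAdiagramperm} on generators, i.e.\ on objects $(\pi_1,\dots,\pi_n)$ and on morphisms $L_1\otimes\cdots\otimes L_n$. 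This reduces everything to the explicit formulas \eqref{eqn:PFAstructureformula} and \eqref{eqn:PFAstructureformulamorphism}.

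First I would address local constancy, which is the easy part: for a $1$-ary operation $U\to V$ the structure map is by construction the full $C^\ast$-subcategory inclusion \eqref{eqn:SSSUVinclion}, and this is a unitary equivalence by Lemma \ref{lem:SSSU} together with the $2$-out-of-$3$ property, exactly as recorded after \eqref{eqn:SSSUVinclion}. Next I would verify the equivariance axiom (iii). On objects, the formula \eqref{eqn:PFAstructureformula} is manifestly symmetric in the inputs, since the case distinction only records, for each $s\in S$, which (unique, by disjointness) region $U_i$ contains it; permuting the labels does not change the resulting representation $\pi_1\bullet\cdots\bullet\pi_n$. On morphisms, equivariance is the statement that $L_{\sigma(1)}\circ\cdots\circ L_{\sigma(n)}=L_1\circ\cdots\circ L_n$, which follows from the commutativity $L_i\circ L_j=L_j\circ L_i$ for $i\neq j$ granted by Assumption \ref{assu:1} (2). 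The unit axiom (ii) is immediate, since $\id_V$ corresponds to $n=1$ with $U_1=V$ and the structure map is the identity inclusion.

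The substantive work is the associativity/composition axiom (i). Here I would take an $n$-ary operation $\und{f}\colon(U_1,\dots,U_n)\to V$ together with $n$ further operations $\und{g_i}\colon(W_{i1},\dots,W_{ik_i})\to U_i$, and compare the two composites around \eqref{eqn:PFAdiagramcomp} applied to a tuple of objects $(\pi_{ij})$. On objects both paths produce a representation whose value at $s\in S$ is determined by the innermost region $W_{ij}$ containing $s$, with $\pi_{0s}$ on the complement of $\bigcup_{i,j}W_{ij}$; since the inner regions $W_{ij}$ are mutually disjoint (as $\und{f}\,\und{\und{g}}$ is a genuine operation of $\P_{\CC(S)^\perp}$) the case distinction \eqref{eqn:PFAstructureformula} is unambiguous and both sides agree pointwise in $s$. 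On morphisms both paths yield the single composite $\bigcirc_{i,j}L_{ij}$ of bounded operators; the only subtlety is the ordering, which I would neutralize using the pairwise commutativity from Assumption \ref{assu:1} (2) applied to morphisms localized in the disjoint regions $W_{ij}$.

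\emph{The main obstacle} I expect is precisely this bookkeeping around the non-canonical composition order built into Lemma \ref{lem:PFAstructure} (the proof there fixes $u_n$ on the right but is otherwise order-agnostic). One must check that the two bracketings in \eqref{eqn:PFAdiagramcomp} do not secretly commit to incompatible orderings of the factors $L_{ij}$, and that the resulting operator is genuinely well-defined as a morphism $\pi_1\bullet\cdots\bullet\pi_n\to\pi_1^\prime\bullet\cdots\bullet\pi_n^\prime$ independently of how the composition is associated. This is where Assumption \ref{assu:1} (2) does the real work: it simultaneously guarantees that reordering is harmless and that each partial composite still intertwines the relevant $\pi_{is}$ and $\pi_{is}^\prime$, so that the composite lands in the correct Hom-space. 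Once this commutativity is invoked carefully, the two composites coincide on the nose and axiom (i) follows.
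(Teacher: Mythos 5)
Your proposal is correct and follows essentially the same route as the paper, whose proof simply records that the axioms of Remark \ref{rem:PFA} are evident from the explicit formulas \eqref{eqn:PFAstructureformula}--\eqref{eqn:PFAstructureformulamorphism} and that local constancy is inherited from Proposition \ref{prop:SSSfunctor}. One minor remark: in the composition axiom (i) the concatenation ordering of $\und{f}\,\und{\und{g}}$ matches the two-step composite on the nose, so the commutativity from Assumption \ref{assu:1} (2) is only genuinely needed for the equivariance axiom (iii) on morphisms (and, as in Lemma \ref{lem:PFAstructure}, for functoriality), not to reconcile orderings in (i).
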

\begin{proof}
One has to check that the structure maps defined by these data satisfy the axioms of a 
$C^\ast$-categorical prefactorization algebra from Definition \ref{def:PFA}, 
see in particular the explicit form of these axioms in Remark \ref{rem:PFA}.
This is evident from the definitions. Local constancy follows directly from Proposition 
\ref{prop:SSSfunctor} since the $1$-ary operations $U\to V$ in $\P_{\CC(S)^\perp}$
are precisely the morphisms $U\subseteq V$ in $\CC(S)$.
\end{proof}

We conclude this subsection by showing that Haag duality for $\overline{\AAA}\in \CastAQFT(S)$ 
in the reference $\ast$-representation $(H,\pi_0)\in{}^\ast\mathbf{Rep}_{\AAA}$, for all $U\in\CC(S)$,
implies the algebraic Assumption \ref{assu:1}. 
Let us start with recalling the concept of Haag duality. Using the construction from the second paragraph
of Remark \ref{rem:SSStraditional}, we obtain injective $\ast$-homomorphisms
$\pi_{0U} : \overline{\AAA}(U)\to B(H)$, for all $U\in \Sub(S)$,
and we denote their images by $\pi_{0U}\big(\overline{\AAA}(U)\big)\subseteq B(H)$.
The $\perp$-commutativity property of $\overline{\AAA}\in \CastAQFT(S)$
implies that $\pi_{0U^\cc}\big(\overline{\AAA}(U^\cc)\big) \subseteq \pi_{0U}\big(\overline{\AAA}(U)\big)^\prime$
is contained in the commutant of $\pi_{0U}\big(\overline{\AAA}(U)\big)\subseteq B(H)$, for all $U\in \Sub(S)$.
Applying the commutant to this inclusion yields 
\begin{flalign}\label{eqn:bicommutant}
\pi_{0U}\big(\overline{\AAA}(U)\big)^{\prime\prime}\, \subseteq \, \pi_{0U^\cc}\big(\overline{\AAA}(U^\cc)\big)^\prime~,
\end{flalign}
for all $U\subseteq \Sub(S)$. 
\begin{defi}\label{def:HaagDuality}
The extended $\overline{\AAA}\in \CastAQFT(S)$ 
is said to satisfy \textit{Haag duality for $U\in\Sub(S)$}
in the reference $\ast$-representation $(H,\pi_0)$ 
if the inclusion in \eqref{eqn:bicommutant} is an equality 
\begin{flalign}
\pi_{0U}\big(\overline{\AAA}(U)\big)^{\prime\prime} \,=\, \pi_{0U^\cc}\big(\overline{\AAA}(U^\cc)\big)^\prime~.
\end{flalign}
\end{defi}

\begin{theo}\label{theo:HaagDualityPFA}
Suppose that $\overline{\AAA}\in \CastAQFT(S)$
satisfies Haag duality in the reference $\ast$-representation $(H,\pi_0)$, for all $U\in \CC(S)$.
Then the algebraic Assumption \ref{assu:1} is satisfied.
Hence, provided that the geometric Assumption \ref{assu:0} (3) is satisfied too,
Proposition \ref{prop:PFA} defines a locally constant $C^\ast$-categorical prefactorization algebra 
$\SSS_{(\AAA,\pi_0)} : \P_{\CC(S)^\perp}\to \CastCat$ in this case.
\end{theo}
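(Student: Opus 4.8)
The plan is to reduce all three clauses of Assumption \ref{assu:1} to a single observation about the von Neumann algebras generated by the reference representation. For $U\in\Sub(S)$, write $\R(U):=\pi_{0U}\big(\overline{\AAA}(U)\big)''\subseteq B(H)$ for the bicommutant of the image of the vacuum. Two structural facts drive everything. First, $\perp$-commutativity of $\overline{\AAA}$ gives, for disjoint $U_1,U_2$, the inclusion $\pi_{0U_2}(\overline{\AAA}(U_2))\subseteq \pi_{0U_1}(\overline{\AAA}(U_1))'$ (using $U_2\subseteq U_1^\cc$); taking commutants yields $\R(U_1)\subseteq \pi_{0U_2}(\overline{\AAA}(U_2))'=\R(U_2)'$, so that $\R(U_1)$ and $\R(U_2)$ commute elementwise. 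This fact needs no Haag duality. Second --- and this is the role of Haag duality --- I would establish the following sub-lemma: for $U\in\CC(S)$ and any $\pi,\pi'\in\SSS_{(\AAA,\pi_0)}(U)$, one has $\pi_s(a)\in\R(U)$ for all $s\in U$ and $a\in\AAA(s)$, and every morphism $L:\pi\to\pi'$ lies in $\R(U)$.

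To prove the sub-lemma I would use that $\pi$ is a representation, so $\pi_s(a)$ commutes with $\pi_{s'}(a')$ whenever $s'\neq s$; for $s\in U$ and $s'\in U^\cc$ the strict localization gives $\pi_{s'}=\pi_{0s'}$, whence $\pi_s(a)$ commutes with every generator $\pi_{0s'}(a')$ of $\pi_{0U^\cc}(\overline{\AAA}(U^\cc))$ and therefore lies in its commutant, which by Haag duality for $U\in\CC(S)$ equals $\R(U)$. The same computation applied to the intertwining identity $L\,\pi_{s}(-)=\pi'_{s}(-)\,L$ (and $\pi_s=\pi'_s=\pi_{0s}$ on $U^\cc$) places $L$ in $\pi_{0U^\cc}(\overline{\AAA}(U^\cc))'=\R(U)$. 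Granting this, clause (1) is immediate: $\pi_{1s_1}(a_1)\in\R(U_1)$ and $\pi_{2s_2}(a_2)\in\R(U_2)$ commute by the first fact, which is exactly \eqref{eqn:assu11}. Clause (2) is equally quick: $L_1\in\R(U_1)$ and $L_2\in\R(U_2)$ commute, giving \eqref{eqn:assu12com}, while the diagrams \eqref{eqn:assu12diag} express that $L_2\in\R(U_2)$ commutes with $\pi_{1s_1}(a)\in\R(U_1)$ and that $L_1\in\R(U_1)$ commutes with $\pi_{2s_2}(a)\in\R(U_2)$, again by commutativity of $\R(U_1)$ and $\R(U_2)$.

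The step I expect to require the most care is clause (3), because there the transporter manifestly interacts only with the \emph{vacuum} on $U_1$, whereas the conclusion concerns $\pi_1$. I would take \emph{any} transporter $u_2:\pi_2\to\pi_2^{u_2}$ as in \eqref{eqn:transporter}, furnished by localizability of $\pi_2$ at $V$ and characterised by $u_2\,\pi_{2s}(-)=\pi_{0s}(-)\,u_2$ for all $s\in V^\cc$. Since $V\cap U_1=\varnothing$ gives $U_1\subseteq V^\cc$, and $U_1\cap U_2=\varnothing$ gives $\pi_{2s}=\pi_{0s}$ on $U_1$, this identity specialises on $U_1$ to $u_2\,\pi_{0s}(-)=\pi_{0s}(-)\,u_2$; hence $u_2$ commutes with every generator of $\pi_{0U_1}(\overline{\AAA}(U_1))$ and so $u_2\in\R(U_1)'$. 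The sub-lemma gives $\pi_{1s_1}(a)\in\R(U_1)$, so $u_2$ and $\pi_{1s_1}(a)$ commute, which is exactly the commutativity of \eqref{eqn:assu13}. In fact this argument shows that the desired property holds for \emph{every} choice of transporter; the conceptual point is that Haag duality is precisely what bridges the gap between the vacuum and $\pi_1$ on $U_1$, by pushing $\pi_{1s_1}(a)$ into the bicommutant $\R(U_1)$ even though $u_2$ is only seen to commute with $\pi_0$ there.

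Finally, with Assumption \ref{assu:1} established and Assumption \ref{assu:0}\,(3) assumed, the last sentence of the theorem is a direct invocation of Proposition \ref{prop:PFA}.
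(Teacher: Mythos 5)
Your proposal is correct and follows essentially the same route as the paper: the paper's proof likewise reduces all three clauses to commutant/bicommutant inclusions, with your sub-lemma being exactly the paper's key inclusions $\pi_{iU_i}\big(\overline{\AAA}(U_i)\big)\subseteq \pi_{0U_i}\big(\overline{\AAA}(U_i)\big)^{\prime\prime}$ and $L_i\in\pi_{0U_i}\big(\overline{\AAA}(U_i)\big)^{\prime\prime}$, and your ``two structural facts'' playing the same roles ($\perp$-commutativity without Haag duality, Haag duality to pass from the vacuum to $\pi$). The only difference is organizational (you isolate the sub-lemma up front and argue on generators $\pi_s(a)$ rather than the full algebra maps $\pi_U$), which does not change the substance.
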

\begin{proof}
Note that the family of commuting diagrams in \eqref{eqn:assu11} is equivalent to
the condition $\pi_{1U_1}\big(\overline{\AAA}(U_1)\big) \subseteq \pi_{2U_2}\big(\overline{\AAA}(U_2)\big)^\prime$.
We will check this equivalent condition in order to show that Assumption \ref{assu:1} (1) is satisfied.
The strict localization condition in \eqref{eqn:localizedrep} is equivalent to the identity
$\pi_{iU_i^\cc} = \pi_{0U_i^\cc} : \overline{\AAA}(U_i^\cc)\to B(H)$, for $i=1,2$. Using also
$\perp$-commutativity of $\overline{\AAA}\in\CastAQFT(S)$ and Haag duality for $U_i\in\CC(S)$, we obtain
\begin{flalign}\label{eqn:HDtmp1}
\pi_{iU_i}\big(\overline{\AAA}(U_i)\big)\,\subseteq\, \pi_{iU_i^\cc}\big(\overline{\AAA}(U_i^\cc)\big)^\prime \,=\,
\pi_{0U_i^\cc}\big(\overline{\AAA}(U_i^\cc)\big)^\prime\,=\, \pi_{0U_i}\big(\overline{\AAA}(U_i)\big)^{\prime\prime}~,
\end{flalign}
for $i=1,2$. Our claim then follows from
\begin{flalign}
\pi_{1U_1}\big(\overline{\AAA}(U_1)\big)\,\subseteq\, 
\pi_{0U_1}\big(\overline{\AAA}(U_1)\big)^{\prime\prime}\,\subseteq\,
\pi_{0U_2^\cc}\big(\overline{\AAA}(U_2^\cc)\big)^{\prime\prime} \,=\, 
\pi_{2U_2^\cc}\big(\overline{\AAA}(U_2^\cc)\big)^{\prime\prime}\,\subseteq\,
\pi_{2U_2}\big(\overline{\AAA}(U_2)\big)^{\prime}~.
\end{flalign}
The first step uses \eqref{eqn:HDtmp1} for $i=1$,
the second step follows from $U_1\subseteq U_2^\cc$
since $U_1\cap U_2=\varnothing$, the third step uses
the strict localization condition $\pi_{2U_2^\cc} = \pi_{0U_2^\cc}$,
and the last step follows by applying the commutant 
to the first inclusion in \eqref{eqn:HDtmp1} for $i=2$.
\sk

Concerning Assumption \ref{assu:1} (2), we start with observing that,
as a consequence of the strict localizations of the objects 
$\pi_1,\pi_1^\prime\in \SSS_{(\AAA,\pi_0)}(U_1)$
and $\pi_2,\pi_2^\prime\in \SSS_{(\AAA,\pi_0)}(U_2)$,
the property of the morphisms $L_1 : \pi_1\to \pi_1^\prime$
and $L_2: \pi_2\to \pi_2^\prime$ in \eqref{eqn:SSSmorphisms} 
implies that
\begin{flalign}\label{eqn:Llocalization}
L_1\,\in\,\pi_{0U_1^\cc}\big(\overline{\AAA}(U_1^\cc)\big)^\prime\,=\,
\pi_{0U_1}\big(\overline{\AAA}(U_1)\big)^{\prime\prime}~,\qquad
L_2\,\in\,\pi_{0U_2^\cc}\big(\overline{\AAA}(U_2^\cc)\big)^\prime\,=\,
\pi_{0U_2}\big(\overline{\AAA}(U_2)\big)^{\prime\prime}~,
\end{flalign}
where the equalities use Haag duality. Since
applying twice the commutant to the $\perp$-commutativity 
condition $\pi_{0U_1}\big(\overline{\AAA}(U_1)\big)\subseteq\pi_{0U_2}\big(\overline{\AAA}(U_2)\big)^{\prime}$,
for $U_1\cap U_2=\varnothing$, yields
$\pi_{0U_1}\big(\overline{\AAA}(U_1)\big)^{\prime\prime}\subseteq
\pi_{0U_2}\big(\overline{\AAA}(U_2)\big)^{\prime\prime\prime}$,
we find that $L_1\circ L_2 = L_2\circ L_1$ and hence 
the assumption in \eqref{eqn:assu12com} holds true.
The family of commutative diagrams in \eqref{eqn:assu12diag}
is equivalent to the conditions
$L_1\in \pi_{2U_2}\big(\overline{\AAA}(U_2)\big)^\prime$
and $L_2\in \pi_{1U_1}\big(\overline{\AAA}(U_1)\big)^\prime$.
Focusing on the case of $L_1$, we observe that
\begin{flalign}
\pi_{2U_2}\big(\overline{\AAA}(U_2)\big)\,\subseteq \, \pi_{0U_2^\cc}\big(\overline{\AAA}(U_2^\cc)\big)^\prime
\,\subseteq\,\pi_{0U_1}\big(\overline{\AAA}(U_1)\big)^\prime~,
\end{flalign}
where the first step is given by \eqref{eqn:HDtmp1} and the second step follows from $U_1\subseteq U_2^\cc$.
Applying the commutant to this inclusion, we obtain $\pi_{0U_1}\big(\overline{\AAA}(U_1)\big)^{\prime\prime}\subseteq
\pi_{2U_2}\big(\overline{\AAA}(U_2)\big)^\prime $, hence our claim that
$L_1\in \pi_{2U_2}\big(\overline{\AAA}(U_2)\big)^\prime$ follows from \eqref{eqn:Llocalization}.
Exchanging the labels $1$ and $2$, the same argument shows that $L_2\in \pi_{1U_1}\big(\overline{\AAA}(U_1)\big)^\prime$.
\sk

Concerning Assumption \ref{assu:1} (3), we observe that,
as a consequence of the strict localization properties
of the domain and codomain of the unitaries \eqref{eqn:transporter}, it follows that
\begin{flalign}\label{eqn:u2local}
u_2\,\in\, \pi_{0(U_2^\cc\cap U^\cc)}\big(\overline{\AAA}(U_2^\cc\cap U^\cc)\big)^\prime\,
\subseteq\, \pi_{0U_1}\big(\overline{\AAA}(U_1)\big)^{\prime}\,=\,
\pi_{0U_1^\cc}\big(\overline{\AAA}(U_1^\cc)\big)^{\prime\prime}~,
\end{flalign}
where the first subset inclusion follows from $U_1\subseteq U_2^\cc \cap U^\cc$.
The last equality follows from Haag duality
$\pi_{0U_1^\cc}\big(\overline{\AAA}(U_1^\cc)\big)^{\prime} = 
\pi_{0U_1}\big(\overline{\AAA}(U_1)\big)^{\prime\prime}$ for $U_1\in \CC(S)$
by taking once more the commutant $\pi_{0U_1}\big(\overline{\AAA}(U_1)\big)^{\prime\prime\prime}
=\pi_{0U_1^\cc}\big(\overline{\AAA}(U_1^\cc)\big)^{\prime\prime}$ 
and using that $(-)^{\prime\prime\prime} = (-)^\prime$.
The family of commutative diagrams \eqref{eqn:assu13} is equivalent to the condition
$u_2\in \pi_{1U_1}\big(\overline{\AAA}(U_1)\big)^\prime$. By the same arguments as in the
paragraphs above, we have that
\begin{flalign}
\pi_{1U_1}\big(\overline{\AAA}(U_1)\big) \,\subseteq\, \pi_{1U_1^\cc}\big(\overline{\AAA}(U_1^\cc)\big)^\prime
\,=\, \pi_{0U_1^\cc}\big(\overline{\AAA}(U_1^\cc)\big)^\prime~.
\end{flalign}
Applying the commutant to this inclusion we get $\pi_{0U_1^\cc}\big(\overline{\AAA}(U_1^\cc)\big)^{\prime\prime}
\subseteq \pi_{1U_1}\big(\overline{\AAA}(U_1)\big)^\prime$, hence our claim that
$u_2\in \pi_{1U_1}\big(\overline{\AAA}(U_1)\big)^\prime$ follows from \eqref{eqn:u2local}.
\end{proof}

\subsection{\label{subsec:monoidal}Object-wise monoidal structure}
We will show later in Section \ref{sec:lattice} that the locally constant $C^\ast$-categorical
prefactorization algebra $\SSS_{(\AAA,\pi_0)} : \P_{\CC(S)^\perp}\to \CastCat$
from Proposition \ref{prop:PFA} describes only a certain
part of the structure of the braided monoidal $C^\ast$-categories
of superselection sectors for lattice AQFTs on $S=\bbZ^2$ exhibited in 
\cite{Naaijkens1,Naaijkens2,NaaijkensChapter,FiedlerNaaijkens,Ogata}. The other part
is an (independent) monoidal structure on each $C^\ast$-category
$\SSS_{(\AAA,\pi_0)} (U)\in\CastCat$ which is compatible with this prefactorization algebra structure.
The construction of such object-wise monoidal structures is facilitated by Haag duality 
and it goes back to the early works of Doplicher, Haag and Roberts \cite{DHR} and 
Buchholz and Fredenhagen \cite{BuchholzFredenhagen}. 
Let us briefly recall the main tools entering these constructions. 
\begin{lem}\label{lem:Haagconsequences}
Suppose that the extended $\overline{\AAA}\in \CastAQFT(S)$ in \eqref{eqn:Aoverline}
satisfies Haag duality in the reference $\ast$-representation $(H,\pi_0)$, for all $U\in \CC(S)$.
\begin{itemize}
\item[(1)] Given any $\pi\in \SSS_{(\AAA,\pi_0)}(U)$, the $\ast$-homomorphism
$\pi_U : \overline{\AAA}(U) \to B(H)$ factorizes 
\begin{equation}\label{eqn:pifactorization}
\begin{tikzcd}
\ar[dr,"\pi_U"'] \overline{\AAA}(U) \ar[rr,"\pi_U"] && B(H)\\
 & \pi_{0U}\big(\overline{\AAA}(U)\big)^{\prime\prime} \ar[ru, hookrightarrow] 
\end{tikzcd}
\end{equation}
through the bicommutant $\pi_{0U}\big(\overline{\AAA}(U)\big)^{\prime\prime}\subseteq B(H)$.

\item[(2)] Suppose that the geometric Assumption \ref{assu:0} (1) is satisfied.
Given any $\pi\in \SSS_{(\AAA,\pi_0)}(U)$, the $\ast$-homomorphism $\pi_U : \overline{\AAA}(U)\to 
\pi_{0U}\big(\overline{\AAA}(U)\big)^{\prime\prime}$ in the factorization \eqref{eqn:pifactorization} extends uniquely 
along $\pi_{0U}: \overline{\AAA}(U)\to \pi_{0U}\big(\overline{\AAA}(U)\big)^{\prime\prime}$ 
to a weakly continuous $\ast$-endomorphism
\begin{equation}\label{eqn:piendo}
\begin{tikzcd}
& \overline{\AAA}(U) \ar[dl,"\pi_{0U}"'] \ar[dr,"\pi_U"] \\
\pi_{0U}\big(\overline{\AAA}(U)\big)^{\prime\prime} \ar[rr,dashed, "\rho_U"'] & 
 & \pi_{0U}\big(\overline{\AAA}(U)\big)^{\prime\prime}
\end{tikzcd}
\end{equation}
of the bicommutant $\pi_{0U}\big(\overline{\AAA}(U)\big)^{\prime\prime}\subseteq B(H)$.

\item[(3)] Suppose that the geometric Assumption \ref{assu:0} (1) is satisfied.
Given any morphism $U\subseteq V$ in $\CC(S)$ and any
$\pi\in \SSS_{(\AAA,\pi_0)}(U)\subseteq \SSS_{(\AAA,\pi_0)}(V)$,
then applying the construction in item (2) to both $U$ and $V$ 
yields the commutative diagram
\begin{equation}
\begin{tikzcd}
\pi_{0U}\big(\overline{\AAA}(U)\big)^{\prime\prime} \ar[r,"\rho_U"] \ar[d, hookrightarrow] & \pi_{0U}\big(\overline{\AAA}(U)\big)^{\prime\prime}\ar[d,hookrightarrow]\\
\pi_{0V}\big(\overline{\AAA}(V)\big)^{\prime\prime} \ar[r,"\rho_V"'] & \pi_{0V}\big(\overline{\AAA}(V)\big)^{\prime\prime}
\end{tikzcd}
\end{equation}
of $\ast$-homomorphisms.

\item[(4)] Any morphism $L : \pi\to \pi^\prime$ in $\SSS_{(\AAA,\pi_0)}(U)$
defines an element $L\in \pi_{0U}\big(\overline{\AAA}(U)\big)^{\prime\prime}\subseteq B(H)$.

\item[(5)] Given any $\pi\in \SSS_{(\AAA,\pi_0)}(U)$ and any $V\in\CC(S)$, 
the unitary isomorphism $u: \pi\to \pi^u$ from \eqref{eqn:transporter} to the associated 
$\pi^{u}\in \SSS_{(\AAA,\pi_0)}(V)$ strictly localized in $V$ defines an element
$u\in  \pi_{0W}\big(\overline{\AAA}(W)\big)^{\prime\prime}\subseteq B(H)$,
for every $W\in\CC(S)$ such that $U\cup V\subseteq W$.
\end{itemize}
\end{lem}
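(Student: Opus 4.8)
The plan is to derive all five items from Haag duality (Definition~\ref{def:HaagDuality}), reusing the commutant bookkeeping already employed in the proof of Theorem~\ref{theo:HaagDualityPFA}. Items (1), (4) and (5) are the most direct: each amounts to locating an operator inside a suitable commutant and then invoking Haag duality to rewrite that commutant as a bicommutant. For item (1), I would observe that strict localization of $\pi\in\SSS_{(\AAA,\pi_0)}(U)$ gives $\pi_{U^\cc}=\pi_{0U^\cc}$, so $\perp$-commutativity of $\overline{\AAA}$ forces $\pi_U(\overline{\AAA}(U))\subseteq\pi_{0U^\cc}(\overline{\AAA}(U^\cc))'$; Haag duality then rewrites the right-hand side as $\pi_{0U}(\overline{\AAA}(U))''$, which is exactly the asserted factorization through the bicommutant (this is precisely the computation \eqref{eqn:HDtmp1}).

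Item (4) is identical in spirit: a morphism $L:\pi\to\pi^\prime$ in $\SSS_{(\AAA,\pi_0)}(U)$ satisfies $L\circ\pi_{0s}(-)=\pi_{0s}(-)\circ L$ for all $s\in U^\cc$ by \eqref{eqn:SSSmorphisms} and strict localization, hence $L\in\pi_{0U^\cc}(\overline{\AAA}(U^\cc))'=\pi_{0U}(\overline{\AAA}(U))''$. For item (5), the transporter $u:\pi\to\pi^u$ of \eqref{eqn:transporter} intertwines $\pi$ and $\pi^u$; since $W\supseteq U\cup V$ we have $W^\cc\subseteq U^\cc\cap V^\cc$, so on $\overline{\AAA}(W^\cc)$ both $\pi$ and $\pi^u$ coincide with $\pi_0$, and the intertwining relation collapses to $u\in\pi_{0W^\cc}(\overline{\AAA}(W^\cc))'=\pi_{0W}(\overline{\AAA}(W))''$, using Haag duality for $W\in\CC(S)$.

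The main work is item (2), and the obstacle there is weak continuity: defining $\rho_U$ on the weakly dense subalgebra by $\rho_U(\pi_{0U}(a)):=\pi_U(a)$ is unambiguous (as $\pi_{0U}$ is injective) but gives no a priori control over its extension to $\pi_{0U}(\overline{\AAA}(U))''$. My plan is to bypass this by realizing $\pi_U$ spatially. Using Assumption~\ref{assu:0}~(1), I would choose $V\in\CC(S)$ with $V\subseteq U^\cc$ and transport $\pi$ to some $\pi^u\in\SSS_{(\AAA,\pi_0)}(V)$ via a charge transporter $u:\pi\to\pi^u$ as in \eqref{eqn:transporter}. Since $U\subseteq V^\cc$, strict localization of $\pi^u$ in $V$ gives $\pi^u_U=\pi_{0U}$, whence the intertwining relation yields the key identity $\pi_U(a)=u^\ast\,\pi_{0U}(a)\,u$ for all $a\in\overline{\AAA}(U)$. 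Thus $\rho_U:=\Ad_{u^\ast}$ restricted to $\pi_{0U}(\overline{\AAA}(U))''$ is the desired map: it is manifestly a weakly continuous $\ast$-homomorphism satisfying $\rho_U\circ\pi_{0U}=\pi_U$, and because $\Ad_{u^\ast}$ is a weak homeomorphism, its image is the weak closure of $\pi_U(\overline{\AAA}(U))$, which by item (1) lies in the weakly closed algebra $\pi_{0U}(\overline{\AAA}(U))''$; hence $\rho_U$ is an endomorphism of the bicommutant. Uniqueness is immediate, since any weakly continuous extension is pinned down on the weakly dense subalgebra $\pi_{0U}(\overline{\AAA}(U))$.

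Finally, item (3) reduces to checking commutativity of the square on the weakly dense subalgebra $\pi_{0U}(\overline{\AAA}(U))$ and invoking weak continuity of $\rho_U$ and $\rho_V$. Both composites send $\pi_{0U}(a)$ to $\pi_U(a)$: along one path one applies $\rho_U$ and then the inclusion $\pi_{0U}(\overline{\AAA}(U))''\subseteq\pi_{0V}(\overline{\AAA}(V))''$; along the other one writes $\pi_{0U}(a)=\pi_{0V}(\iota(a))$ for the structure map $\iota:\overline{\AAA}(U)\to\overline{\AAA}(V)$ and then applies $\rho_V$, obtaining $\pi_V(\iota(a))=\pi_U(a)$ by functoriality of $\pi$ (the commuting triangles of Remark~\ref{rem:SSStraditional}). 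Weak continuity then propagates the equality from the dense subalgebra to the entire bicommutant, completing the proof.
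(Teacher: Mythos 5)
Your proposal is correct and follows essentially the same route as the paper's proof: items (1), (4), (5) via the identical commutant/Haag-duality rewriting, item (2) via the same spatial realization $\pi_U=\mathrm{Ad}_{u^\ast}\circ\pi_{0U}$ using a charge transporter to a cone $V\subseteq U^\cc$ supplied by Assumption~\ref{assu:0}~(1), and item (3) by checking the square on the weakly dense subalgebra. The only (welcome) difference is that you spell out slightly more explicitly why the image of $\mathrm{Ad}_{u^\ast}$ lands back in the bicommutant.
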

\begin{proof}
Item (1) follows from 
\begin{flalign}
\pi_U\big(\overline{\AAA}(U)\big)\,\subseteq\, \pi_{U^\cc}\big(\overline{\AAA}(U^\cc)\big)^\prime
\,=\,  \pi_{0U^\cc}\big(\overline{\AAA}(U^\cc)\big)^\prime\,=\, \pi_{0U}\big(\overline{\AAA}(U)\big)^{\prime\prime}~,
\end{flalign}
where the first step uses $\perp$-commutativity, the second step uses the strict localization of $\pi$
in $U$ and the third step uses Haag duality.
\sk

To show item (2), we choose any $V\in \CC(S)$ such that $V\subseteq U^\cc$,
which exists by Assumption \ref{assu:0} (1).
Using the localizability condition from \eqref{eqn:localizablerep},
there exists a unitary operator $u:H\to H$ such that 
$u\circ \pi_{s}(-) = \pi_{0s}(-)\circ u$, for all $s\in V^\cc$. Since by hypothesis
$U\subseteq V^\cc$, this implies that $\pi_{U}(-) =u^\ast\circ \pi_{0U}(-)\circ u : \overline{\AAA}(U)\to
B(H)$. Using the $\ast$-isomorphism $\pi_{0U}: \overline{\AAA}(U)\to \pi_{0U}\big(\overline{\AAA}(U)\big)$
to identify $\overline{\AAA}(U)$ with $\pi_{0U}\big(\overline{\AAA}(U)\big)$,
we define the $\ast$-homomorphism
\begin{flalign}
\rho_U\,:=\, \pi_U\, \pi_{0U}^{-1}\,:\,  \pi_{0U}\big(\overline{\AAA}(U)\big)~\xrightarrow{\;\quad\;}~
\pi_{0U}\big(\overline{\AAA}(U)\big)^{\prime\prime}~,~~
a~\xmapsto{\;\quad\;}~ \rho_U(a) \,=\, u^\ast \circ a\circ u ~.
\end{flalign}
Since adjoint actions by unitaries are weakly continuous, $\rho_U$ admits a unique continuous
extension to the desired $\ast$-endomorphism on the weak 
completion $\pi_{0U}\big(\overline{\AAA}(U)\big)^{\prime\prime}$ of $\pi_{0U}\big(\overline{\AAA}(U)\big)\subseteq B(H)$.
\sk

To show item (3), note that by weak continuity it suffices to pre-compose this diagram with
the dense map $\pi_{0U} : \overline{\AAA}(U)\to \pi_{0U}\big(\overline{\AAA}(U)\big)^{\prime\prime}$.
The result then follows from the diagram
\begin{equation}
\begin{tikzcd}
\overline{\AAA}(U) \ar[r,"\pi_{0U}"]\ar[dd] & \pi_{0U}\big(\overline{\AAA}(U)\big)^{\prime\prime} \ar[r,"\rho_U"]\ar[dd,hookrightarrow] & \pi_{0U}\big(\overline{\AAA}(U)\big)^{\prime\prime} \ar[rd, hookrightarrow]\ar[dd, hookrightarrow]\\[-5mm]
& & & B(H)\ar[ld, hookleftarrow]\\[-5mm]
\overline{\AAA}(V) \ar[r,"\pi_{0V}"'] & \pi_{0V}\big(\overline{\AAA}(V)\big)^{\prime\prime} \ar[r,"\rho_V"'] & \pi_{0V}\big(\overline{\AAA}(V)\big)^{\prime\prime}
\end{tikzcd}~,
\end{equation}
which commutes as a consequence of $\pi_U$ and $\pi_V$ satisfying commutative triangles
as in \eqref{eqn:trianglerep}.
\sk

To show item (4), note that the strict localization of 
$\pi$ and $\pi^\prime$ in $U$ and the 
definition of morphisms in \eqref{eqn:SSSmorphisms}
implies that $L\circ \pi_{0U^\cc}(-) = L\circ \pi_{U^\cc}(-) = \pi^\prime_{U^\cc}(-)\circ L = 
\pi_{0U^\cc}(-)\circ L$. Hence, $L\in\pi_{0U^\cc}\big(\overline{\AAA}(U^\cc)\big)^\prime = 
\pi_{0U}\big(\overline{\AAA}(U)\big)^{\prime\prime}$ by Haag duality.
\sk

To show item (5), note that the strict localization of 
$\pi$ in $U$ and of $\pi^u$ in $V$ 
implies that $u\circ \pi_{0(U^\cc\cap V^\cc)}(-) = u\circ \pi_{U^\cc\cap V^\cc}(-) = 
\pi^u_{U^\cc\cap V^\cc}(-)\circ u = \pi_{0(U^\cc\cap V^\cc)}(-)\circ u$. 
Since $U\cup V\subseteq W$ implies $W^\cc \subseteq U^\cc\cap V^\cc$, it follows
that $u\in\pi_{0W^\cc}\big(\overline{\AAA}(W^\cc)\big)^\prime = 
\pi_{0W}\big(\overline{\AAA}(W)\big)^{\prime\prime}$ by Haag duality.
\end{proof}

Using these tools we can now define the object-wise monoidal structure.
\begin{lem}\label{lem:monoidal}
Suppose that the geometric Assumptions \ref{assu:0} (1) and (2) are satisfied, and that 
$\overline{\AAA}\in \CastAQFT(S)$ satisfies Haag duality
in the reference $\ast$-representation $(H,\pi_0)$, 
for all $U\in \CC(S)$. For each $U\in \CC(S)$, consider the assignment
from the algebraic tensor product
\begin{subequations}\label{eqn:monoidal}
\begin{flalign}
\cdiamond \,:\, \SSS_{(\AAA,\pi_0)}(U)\otimes \SSS_{(\AAA,\pi_0)}(U)~&\xrightarrow{\;\quad\;}~\SSS_{(\AAA,\pi_0)}(U)~,\\
\nn (\pi,\dot{\pi})~&\xmapsto{\;\quad\;}~\pi\cdiamond\dot{\pi}~,\\
\nn L \otimes \dot{L}~&\xmapsto{\;\quad\;}~L\cdiamond\dot{L}
\end{flalign}
given by
\begin{flalign}\label{eqn:monoidalformula}
(\pi\cdiamond \dot{\pi})_s\,:=\,\begin{cases}
\rho_U \, \dot{\rho}_U\,\pi_{0s} &~~\text{if } s\in U\,,\\
\pi_{0s} &~~\text{if }s\in U^\cc\,,
\end{cases}
\end{flalign}
for all $s\in S$, and
\begin{flalign}\label{eqn:monoidalformulamorphism}
L\cdiamond \dot{L}\,:=\, L\circ\rho_U(\dot{L})~,
\end{flalign}
\end{subequations}
where $\rho_U, \dot{\rho}_U:\pi_{0U}\big(\overline{\AAA}(U)\big)^{\prime\prime}\to 
\pi_{0U}\big(\overline{\AAA}(U)\big)^{\prime\prime}$ denote the $\ast$-endomorphisms
associated uniquely to $\pi$ and $\dot{\pi}$ via Lemma \ref{lem:Haagconsequences}.
Then \eqref{eqn:monoidal} is a well-defined $\ast$-functor.
\end{lem}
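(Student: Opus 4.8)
The plan is to establish, for fixed $U\in\CC(S)$, that $\pi\cdiamond\dot\pi$ is an object of $\SSS_{(\AAA,\pi_0)}(U)$ and that $L\cdiamond\dot L$ is a morphism, and then to deduce functoriality, $\bbC$-bilinearity and $\ast$-preservation. Writing $\mathcal{N}_U:=\pi_{0U}\big(\overline{\AAA}(U)\big)^{\prime\prime}$, I will use repeatedly that $\perp$-commutativity together with the bicommutant theorem yields $\mathcal{N}_U\subseteq\pi_{0U^\cc}\big(\overline{\AAA}(U^\cc)\big)^{\prime}$, so that every element of $\mathcal{N}_U$ commutes with all $\pi_{0s}\big(\AAA(s)\big)$, $s\in U^\cc$. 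The workhorse of the whole argument is the following reformulation of morphisms, which I establish first: since $\pi_s=\rho_U\,\pi_{0s}$ for $s\in U$, the images $\pi_{0s}\big(\AAA(s)\big)$, $s\in U$, are weakly dense in $\pi_{0U}\big(\overline{\AAA}(U)\big)$, and $\rho_U$ is weakly continuous, a morphism $L:\pi\to\pi^\prime$ in $\SSS_{(\AAA,\pi_0)}(U)$ is precisely a bounded operator $L\in\mathcal{N}_U$ (Lemma \ref{lem:Haagconsequences}~(4)) satisfying the intertwining relation $L\,\rho_U(a)=\rho^\prime_U(a)\,L$ for all $a\in\mathcal{N}_U$, where $\rho^\prime_U$ is the endomorphism associated to $\pi^\prime$.

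For $\pi\cdiamond\dot\pi$ to define an object, each $(\pi\cdiamond\dot\pi)_s$ is visibly a $\ast$-homomorphism and strict localization $(\pi\cdiamond\dot\pi)_s=\pi_{0s}$ for $s\in U^\cc$ is immediate from \eqref{eqn:monoidalformula}. Mutual commutativity is checked in three cases: for $s,s^\prime\in U^\cc$ it is the commutativity of $\pi_0$; for distinct $s,s^\prime\in U$ it follows from the fact that the $\ast$-homomorphism $\rho_U\dot\rho_U$ preserves the commutativity of $\pi_{0s}$ and $\pi_{0s^\prime}$; and in the mixed case $s\in U$, $s^\prime\in U^\cc$ it follows because $(\pi\cdiamond\dot\pi)_s$ has image in $\mathcal{N}_U\subseteq\pi_{0U^\cc}\big(\overline{\AAA}(U^\cc)\big)^{\prime}$.

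The main obstacle is localizability of $\pi\cdiamond\dot\pi$ in an arbitrary $U^\prime\in\CC(S)$. By Remark \ref{rem:smalllocalizableisenough} it suffices to localize in a smaller region, so using Assumption \ref{assu:0}~(2) I choose $V^\prime\subseteq U^\prime$ and $W\in\CC(S)$ with $U\cup V^\prime\subseteq W$, and regard $\pi,\dot\pi$ as localized in $W$ via Lemma \ref{lem:Haagconsequences}~(3), so that their endomorphisms $\rho_W,\dot\rho_W$ of $\mathcal{N}_W$ satisfy $\rho_W\dot\rho_W\,\pi_{0s}=(\pi\cdiamond\dot\pi)_s$ on $W$. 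Transporting $\pi$ and $\dot\pi$ into $V^\prime$ by charge transporters $u:\pi\to\pi^u$ and $\dot u:\dot\pi\to\dot\pi^{\dot u}$ with $\pi^u,\dot\pi^{\dot u}\in\SSS_{(\AAA,\pi_0)}(V^\prime)$, Lemma \ref{lem:Haagconsequences}~(5) gives $u,\dot u\in\mathcal{N}_W$, and the morphism reformulation gives the intertwiners $u\,\rho_W(a)=\alpha(a)\,u$ and $\dot u\,\dot\rho_W(a)=\beta(a)\,\dot u$, where $\alpha,\beta$ are the endomorphisms associated to $\pi^u,\dot\pi^{\dot u}$. I then take the unitary $w:=\alpha(\dot u)\,u\in\mathcal{N}_W$ and verify directly, using these two relations and multiplicativity of $\alpha$, that $w\,(\rho_W\dot\rho_W)(a)=(\alpha\beta)(a)\,w$ for all $a\in\mathcal{N}_W$. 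Since $\alpha\beta$ is the endomorphism of $\pi^u\cdiamond\dot\pi^{\dot u}$, which is strictly localized in $V^\prime$, and since $w\in\mathcal{N}_W$ commutes with all $\pi_{0s}$, $s\in W^\cc$, the operator $w$ is a charge transporter witnessing localizability of $\pi\cdiamond\dot\pi$ in $V^\prime$, hence in $U^\prime$.

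Finally, the morphism-level statements all reduce to the intertwining relation. That $L\cdiamond\dot L=L\,\rho_U(\dot L)$ is a morphism $\pi\cdiamond\dot\pi\to\pi^\prime\cdiamond\dot\pi^\prime$ is checked against \eqref{eqn:SSSmorphisms}: on $U^\cc$ both sides express commutation with $\pi_{0s}$, valid since $L\,\rho_U(\dot L)\in\mathcal{N}_U$; on $U$ one substitutes $(\pi\cdiamond\dot\pi)_s=\rho_U\dot\rho_U\,\pi_{0s}$ and pushes $\dot L$ across $\dot\rho_U$ and $L$ across $\rho_U$ via their respective intertwining relations. Functoriality follows from $(M\cdiamond\dot M)(L\cdiamond\dot L)=M\,\rho^\prime_U(\dot M)\,L\,\rho_U(\dot L)=M\,L\,\rho_U(\dot M)\,\rho_U(\dot L)=(ML)\,\rho_U(\dot M\dot L)=(ML)\cdiamond(\dot M\dot L)$, where the second equality moves $L$ past $\rho^\prime_U(\dot M)$ by $\rho^\prime_U(\dot M)\,L=L\,\rho_U(\dot M)$; identities are preserved since $\rho_U$ is unital. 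Bilinearity is clear from linearity of $\rho_U$ and of composition, so $\cdiamond$ descends to the algebraic tensor product, and $\ast$-preservation $(L\cdiamond\dot L)^\ast=L^\ast\cdiamond\dot L^\ast$ follows from $\rho_U(\dot L)^\ast=\rho_U(\dot L^\ast)$ together with the adjoint intertwining relation $\rho_U(a)\,L^\ast=L^\ast\,\rho^\prime_U(a)$.
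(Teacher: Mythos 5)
Your proof is correct and follows essentially the same route as the paper's: you reduce localizability via Assumption \ref{assu:0}~(2) and Remark \ref{rem:smalllocalizableisenough} to a region $V^\prime$ admitting $W$ with $U\cup V^\prime\subseteq W$, lift the endomorphisms to $\pi_{0W}\big(\overline{\AAA}(W)\big)^{\prime\prime}$, and take the composite charge transporter, where your $w=\alpha(\dot u)\,u$ coincides with the paper's $\tilde u = u\circ\rho_W(\dot u)$ precisely by the intertwining relation you establish. The only presentational difference is that you package all verifications through the reformulation of morphisms as intertwiners $L\,\rho_U(a)=\rho^\prime_U(a)\,L$ on the whole bicommutant (extended by weak density), whereas the paper checks the same identities directly on the generators $\pi_{0s}(a)$; both yield identical computations.
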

\begin{proof}
The family of $\ast$-homomorphisms \eqref{eqn:monoidalformula} 
is mutually commuting: For $s,s^\prime\in U$ or $s,s^\prime\in U^\cc$ with 
$s\neq s^\prime$, this follows from the fact that 
$\pi_{0s}$ and $\pi_{0s^\prime}$ are mutually commuting.
For $s\in U$ and $s^\prime\in U^\cc$, this follows from 
$\pi_{0s^\prime}\big(\AAA(s^\prime)\big) \subseteq \pi_{0U}\big(\overline{\AAA}(U)\big)^{\prime}=
\pi_{0U}\big(\overline{\AAA}(U)\big)^{\prime\prime\prime}$,
where the first step uses $\perp$-commutativity. Note that
$(\pi\cdiamond \dot{\pi})_s = \pi_{0s}$, for all $s\in U^\cc$,
is by definition strictly localized in $U$. We claim 
that it satisfies also the localizability condition from Definition \ref{def:SSSloc}, for all other 
$V\in\CC(S)$. Using Assumption \ref{assu:0} (2) and Remark 
\ref{rem:smalllocalizableisenough}, it suffices to prove this claim  
only for those objects $V\in\CC(S)$ such that 
there exists $W\in\CC(S)$ with $U\cup V\subseteq W$.
Since $\pi,\dot{\pi}\in \SSS_{(\AAA,\pi_0)}(U)$, there exist by definition unitary operators
$u,\dot{u} : H\to H$ such that $\pi^u,\dot{\pi}^{\dot{u}}\in \SSS_{(\AAA,\pi_0)}(V)$ are
strictly localized in $V$. Using further that $U\cup V\subseteq W$, 
we obtain from Lemma \ref{lem:Haagconsequences} (5) that
$u,\dot{u}\in \pi_{0W}\big(\overline{\AAA}(W)\big)^{\prime\prime}\subseteq B(H)$.
We claim that the unitary operator $\tilde{u} := u\circ \rho_W(\dot{u}): H\to H$,
with $\rho_W$ described in Lemma \ref{lem:Haagconsequences} (3), defines a unitary isomorphism
$\tilde{u} : \pi\cdiamond \dot{\pi}\to (\pi\cdiamond\dot{\pi})^{\tilde{u}}$ 
with $(\pi\cdiamond\dot{\pi})^{\tilde{u}}$ 
localized strictly in $V$, i.e.\
\begin{subequations}
\begin{flalign}\label{eqn:monoidalcheckT}
\tilde{u}\circ (\pi\cdiamond \dot{\pi})_s(-)\,=\, \pi_{0s}(-)\circ \tilde{u}~,
\end{flalign}
for all $s\in V^\cc$. In order to verify these identities, it is convenient to
observe that Lemma \ref{lem:Haagconsequences} (3) for $U\subseteq W$ allows us to 
rewrite \eqref{eqn:monoidalformula} equivalently as
\begin{flalign}
(\pi\cdiamond \dot{\pi})_s\,=\,\begin{cases}
\rho_W \, \dot{\rho}_W\,\pi_{0s} ~&~~\text{if } s\in W\,,\\
\pi_{0s}~&~~\text{if }s\in W^\cc\,.
\end{cases}
\end{flalign}
For $s\in V^\cc\cap W^\cc$, one then verifies \eqref{eqn:monoidalcheckT} by
\begin{flalign}
u\circ \rho_W(\dot{u})\circ \pi_{0s}(-)\,=\, \pi_{0s}(-)\circ u\circ \rho_W(\dot{u})~,
\end{flalign}
where we use $\perp$-commutativity in the form $\pi_{0s}\big(\AAA(s)\big)\subseteq
\pi_{0W}\big(\overline{\AAA}(W)\big)^\prime = \pi_{0W}\big(\overline{\AAA}(W)\big)^{\prime\prime\prime}$
together with Lemma \ref{lem:Haagconsequences} (5). For $s\in V^\cc\cap W$, we compute,
for all $a\in \AAA(s)$,
\begin{flalign}
\nn u\circ \rho_W(\dot{u})\circ \rho_W\big(\dot{\rho}_W\big(\pi_{0s}(a)\big)\big)
\,&=\,u\circ \rho_W(\dot{u})\circ \rho_W\big(\dot{\pi}_{s}(a)\big)
\,=\, u\circ \rho_W\big(\dot{u}\circ \dot{\pi}_{s}(a)\big)\big) \\
\nn \,&=\, u\circ \rho_W\big(\pi_{0s}(a)\circ \dot{u}\big)\,=\, 
u\circ \rho_W\big(\pi_{0s}(a)\big)\circ \rho_W(\dot{u}) \\
\,&=\, u\circ \pi_{s}(a)\circ \rho_W(\dot{u}) \,=\, \pi_{0s}(a)\circ u\circ \rho_W(\dot{u})~.
\end{flalign}
\end{subequations}
Steps one and five use Lemma \ref{lem:Haagconsequences} (2), 
steps two and four use that the $\rho$'s are $\ast$-endomorphisms,
and steps three and six use the intertwining properties of the $u$'s.
\sk

To verify that $L\cdiamond \dot{L} :\pi\cdiamond\dot{\pi}\to \pi^\prime\cdiamond \dot{\pi}^\prime$
defines a morphism in $\SSS_{(\AAA,\pi_0)}(U)$, we have to show that
\begin{subequations}
\begin{flalign}
(L\cdiamond \dot{L}) \circ (\pi\cdiamond \dot{\pi})_s(-)\,=\, 
(\pi^\prime\cdiamond \dot{\pi}^\prime)_s(-)\circ (L\cdiamond \dot{L}) ~,
\end{flalign}
for all $s\in S$. For $s\in U^\cc$, this follows directly from Lemma \ref{lem:Haagconsequences} (4).
For $s\in U$, we compute, for all $a\in \AAA(s)$,
\begin{flalign}
L\circ \rho_U(\dot{L})\circ \rho_U\big(\dot{\rho}_U\big(\pi_{0s}(a)\big)\big)
\,&=\,L\circ \rho_U\big(\dot{L}\circ \dot{\rho}_U\big(\pi_{0s}(a)\big)\big)
\,=\, L\circ \rho_U\big(\dot{\rho}^\prime_U\big(\pi_{0s}(a)\big)\circ \dot{L}\big)\\
\,&=\,   L\circ \rho_U\big(\dot{\rho}^\prime_U\big(\pi_{0s}(a)\big)\big)\circ \rho_U(\dot{L})
\,=\,  \rho^\prime_U\big(\dot{\rho}^\prime_U\big(\pi_{0s}(a)\big)\big)\circ L\circ \rho_U(\dot{L})~. \nn
\end{flalign}
\end{subequations}
Steps one and three use that the $\rho$'s are $\ast$-endomorphisms.
Steps two and four use the intertwining properties \eqref{eqn:SSSmorphisms} 
between the $L$'s and the $\pi$'s, which imply via Lemma \ref{lem:Haagconsequences} (2)
analogous intertwining properties between the $L$'s and the $\rho$'s.
\sk

The assignment \eqref{eqn:monoidal} clearly 
preserves the identity morphisms. To show that it preserves compositions, 
we have to check the interchange law
\begin{subequations}
\begin{flalign}
(L^\prime\circ L)\cdiamond (\dot{L}^\prime\circ\dot{L}) \,=\,
(L^\prime\cdiamond \dot{L}^\prime) \circ (L\cdiamond  \dot{L})\,:\, \pi\cdiamond \dot{\pi}~\xrightarrow{\;\quad\;}~
\pi^{\prime\prime}\cdiamond \dot{\pi}^{\prime\prime} ~,
\end{flalign}
for any four morphisms
$L : \pi \to \pi^\prime$, $L^\prime : \pi^\prime\to \pi^{\prime\prime}$,
$\dot{L} : \dot{\pi} \to \dot{\pi}^\prime$ and 
$\dot{L}^\prime : \dot{\pi}^\prime\to \dot{\pi}^{\prime\prime}$.
This is shown by the following computation
\begin{flalign}
\nn (L^\prime\circ L)\cdiamond (\dot{L}^\prime\circ\dot{L})\,&=\,
L^\prime\circ L\circ \rho_U(\dot{L}^\prime)\circ\rho_U(\dot{L})
\,=\, L^\prime\circ \rho^\prime_U(\dot{L}^\prime) \circ L \circ\rho_U(\dot{L})\\
\,&=\,(L^\prime\cdiamond \dot{L}^\prime) \circ (L\cdiamond  \dot{L})~,
\end{flalign}
\end{subequations}
where the second step uses the intertwining property \eqref{eqn:SSSmorphisms} 
of $L : \pi \to \pi^\prime$.
One also easily checks that this linear functor
preserves the $\ast$-involutions 
\begin{flalign}
(L\cdiamond \dot{L})^\ast \,=\, 
(L\circ \rho_U(\dot{L}))^\ast
\,=\, (\rho^\prime_U(\dot{L})\circ L)^\ast \,=\, 
L^\ast\circ\rho^\prime_U(\dot{L}^\ast) \,=\,
L^\ast\cdiamond \dot{L}^\ast~.\qedhere
\end{flalign}
\end{proof}

\begin{propo}\label{prop:monoidal}
Suppose that the geometric Assumptions \ref{assu:0} (1) and (2) are satisfied, and that 
$\overline{\AAA}\in \CastAQFT(S)$ satisfies Haag duality in the 
reference $\ast$-representation $(H,\pi_0)$, for all $U\in \CC(S)$. 
Then the $\ast$-functor from Lemma \ref{lem:monoidal}
and the distinguished object \eqref{eqn:pointing} endow $\SSS_{(\AAA,\pi_0)}(U)\in\CastCat$ with 
the structure of a strict monoidal $C^\ast$-category, for all $U\in\CC(S)$.
\end{propo}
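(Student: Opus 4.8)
The plan is to recognise that, under the identification of strictly localized sectors with $\ast$-endomorphisms of the von Neumann algebra $\pi_{0U}\big(\overline{\AAA}(U)\big)^{\prime\prime}$ supplied by Lemma \ref{lem:Haagconsequences} (2), the product $\cdiamond$ is nothing but composition of $\ast$-endomorphisms, which is strictly associative and strictly unital. Since Lemma \ref{lem:monoidal} already provides a well-defined $\ast$-functor $\cdiamond$ (in particular bifunctoriality and the interchange law), and since by the universal property \eqref{eqn:maxtensoralgtensor} it suffices to test the monoidal axioms on objects and pure tensors of morphisms coming from the algebraic tensor product, only the strict associativity and unit axioms remain. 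The single structural fact underlying everything is: the $\ast$-endomorphism associated via Lemma \ref{lem:Haagconsequences} (2) to the product $\pi\cdiamond\dot{\pi}$ equals the composite $\rho_U\circ\dot{\rho}_U$ of the endomorphisms of $\pi$ and $\dot{\pi}$, and the endomorphism associated to the distinguished object $\pi_0$ from \eqref{eqn:pointing} is the identity $\id$. The first claim follows by uniqueness: by \eqref{eqn:monoidalformula} one has $(\pi\cdiamond\dot{\pi})_U=\rho_U\circ\dot{\rho}_U\circ\pi_{0U}$ on $U$, the composite $\rho_U\circ\dot{\rho}_U$ is weakly continuous, and it extends $(\pi\cdiamond\dot{\pi})_U\circ\pi_{0U}^{-1}$, so by the uniqueness of the weakly continuous extension in Lemma \ref{lem:Haagconsequences} (2) it is the associated endomorphism. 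The second claim is immediate, since $(\pi_0)_U=\pi_{0U}$ forces its endomorphism to be $\pi_{0U}\circ\pi_{0U}^{-1}=\id$.

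Granting this, associativity on objects is immediate from associativity of composition of endomorphisms: for $s\in U^\cc$ both $\big((\pi\cdiamond\dot{\pi})\cdiamond\ddot{\pi}\big)_s$ and $\big(\pi\cdiamond(\dot{\pi}\cdiamond\ddot{\pi})\big)_s$ equal $\pi_{0s}$, while for $s\in U$ both equal $\rho_U\circ\dot{\rho}_U\circ\ddot{\rho}_U\circ\pi_{0s}$. Associativity on morphisms, i.e.\ $(L\cdiamond\dot{L})\cdiamond\ddot{L}=L\cdiamond(\dot{L}\cdiamond\ddot{L})$, follows from \eqref{eqn:monoidalformulamorphism} together with the fact that $\rho_U$ is a $\ast$-homomorphism: the left-hand side equals $L\circ\rho_U(\dot{L})\circ\rho_U\big(\dot{\rho}_U(\ddot{L})\big)=L\circ\rho_U\big(\dot{L}\circ\dot{\rho}_U(\ddot{L})\big)=L\circ\rho_U(\dot{L}\cdiamond\ddot{L})$, which is exactly the right-hand side. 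Here we used that the first factor's endomorphism in $(L\cdiamond\dot{L})\cdiamond\ddot{L}$ is the one of the domain $\pi\cdiamond\dot{\pi}$, namely $\rho_U\circ\dot{\rho}_U$ by the structural fact above.

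For the unit axioms, recall that the identity morphism on any object is the identity operator $\id_H$ and that the endomorphism of $\pi_0$ is $\id$. On objects, using the factorization $\rho_U\circ\pi_{0s}=\pi_s$ from Lemma \ref{lem:Haagconsequences} (2), one gets for $s\in U$ that $(\pi_0\cdiamond\pi)_s=\id\circ\rho_U\circ\pi_{0s}=\pi_s$ and $(\pi\cdiamond\pi_0)_s=\rho_U\circ\id\circ\pi_{0s}=\pi_s$, while for $s\in U^\cc$ both reduce to $\pi_{0s}=\pi_s$ by strict localization; hence $\pi_0\cdiamond\pi=\pi=\pi\cdiamond\pi_0$. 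On morphisms, $\id_{\pi_0}\cdiamond L=\id_H\circ\id(L)=L$, and $L\cdiamond\id_{\pi_0}=L\circ\rho_U(\id_H)=L$ because $\rho_U$ is unital, indeed $\rho_U(\id_H)=u^\ast\,\id_H\,u=\id_H$ for the implementing unitary $u$ of Lemma \ref{lem:Haagconsequences} (2). This establishes that $\cdiamond$ with unit $\pi_0$ satisfies the strict monoidal axioms in $\CastCat$.

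The only genuine subtlety, and thus the main point to get right, is the identification of the endomorphism of $\pi\cdiamond\dot{\pi}$ with the composite $\rho_U\circ\dot{\rho}_U$; this rests on the uniqueness of the weakly continuous extension in Lemma \ref{lem:Haagconsequences} (2) and on weak continuity of composites of normal maps. Everything else is a formal consequence of the strictness of composition of $\ast$-endomorphisms and of $\rho_U$ being a unital $\ast$-homomorphism, so no nontrivial coherence isomorphisms arise and the resulting monoidal structure is strict rather than merely weak.
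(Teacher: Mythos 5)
Your proof is correct and follows essentially the same route as the paper's: a direct check of strict associativity and unitality on objects and morphisms using the formulas \eqref{eqn:monoidal}, with unitality resting on $\rho_{0U}=\id$. The only difference is that you make explicit (via uniqueness of the weakly continuous extension in Lemma \ref{lem:Haagconsequences} (2)) the identification of the endomorphism of $\pi\cdiamond\dot{\pi}$ with $\rho_U\circ\dot{\rho}_U$, which the paper uses implicitly; this is a worthwhile clarification but not a different argument.
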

\begin{proof}
This is a simple check using the definitions in \eqref{eqn:monoidal}.
To check associativity on objects $\pi\cdiamond (\dot{\pi}\cdiamond \ddot{\pi})
=(\pi\cdiamond \dot{\pi})\cdiamond \ddot{\pi}$, we observe that the
$s\in U^\cc$ components of both sides give $\pi_{0s}$, while for $s\in U$ we have
\begin{flalign}
\big(\pi \cdiamond (\dot{\pi} \cdiamond \ddot{\pi})\big)_s
\,=\,\rho_U\,\dot{\rho}_U\,\ddot{\rho}_U\,\pi_{0s}
\,=\, \big((\pi \cdiamond \dot{\pi})\cdiamond \ddot{\pi}\big)_s~.
\end{flalign}
To check associativity on morphisms, we compute
\begin{flalign}
L \cdiamond (\dot{L} \cdiamond \ddot{L})\,=\,
L\circ \rho_U(\dot{L}) \circ \rho_U\big(\dot{\rho}_U(\ddot{L})\big) \, =\,
(L \cdiamond \dot{L}) \cdiamond \ddot{L}~.
\end{flalign}
Unitality  $\pi\cdiamond \pi_0 =\pi= \pi_0\cdiamond\pi$ is obvious
since the $\ast$-endomorphism corresponding to $\pi_0$ via Lemma \ref{lem:Haagconsequences}
is given by the identity $\rho_{0U} = \id$.
\end{proof}

\subsection{\label{subsec:compatibility}Compatibility of the two structures}
We will now show that the prefactorization algebra and object-wise monoidal 
structures from the previous two subsections are compatible with each other. 
This leads to the following main result.
\begin{theo}\label{theo:PFAinMonCat}
Suppose that the geometric Assumption \ref{assu:0} is satisfied
and that the extended $\overline{\AAA}\in \CastAQFT(S)$ in \eqref{eqn:Aoverline}
satisfies Haag duality in the reference $\ast$-representation $(H,\pi_0)$, for all $U\in \CC(S)$. 
Then the object-wise monoidal structures from Proposition \ref{prop:monoidal}
are compatible with the prefactorization algebra structure from Theorem \ref{theo:HaagDualityPFA}
in the sense that they define a locally constant prefactorization algebra
\begin{subequations}
\begin{flalign}
\SSS_{(\AAA,\pi_0)}\,:\,\P_{\CC(S)^\perp}~\xrightarrow{\;\quad\;}~\Alg_{\mathsf{uAs}}\big(\CastCat\big)
\end{flalign}
with values in the symmetric monoidal category $\Alg_{\mathsf{uAs}}\big(\CastCat\big)$
of strict monoidal $C^\ast$-categories and strict monoidal $\ast$-functors. 
This is equivalent to the datum of a unital associative algebra
\begin{flalign}
\SSS_{(\AAA,\pi_0)}\,\in\,\Alg_{\mathsf{uAs}}\Big(C^\ast\mathbf{PFA}_{\CC(S)^\perp}\Big)
\end{flalign}
\end{subequations}
in the category of $C^\ast$-categorical prefactorization algebras $C^\ast\mathbf{PFA}_{\CC(S)^\perp}$
from Definition \ref{def:PFA} (endowed with the object-wise symmetric monoidal structure), 
whose underlying prefactorization algebra is locally constant.
\end{theo}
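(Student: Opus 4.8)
The statement decomposes into two tasks. The substantive one is to verify that each factorization product $\bullet$ of Lemma~\ref{lem:PFAstructure} is a \emph{strict monoidal} $\ast$-functor for the object-wise monoidal structures $\cdiamond$ of Proposition~\ref{prop:monoidal}, where the domain $\bigboxtimes_{i=1}^n\SSS_{(\AAA,\pi_0)}(U_i)$ carries the product monoidal structure (component-wise $\cdiamond$) as a $\boxtimes$-product of monoids; the second, formal, task is the operadic interchange identifying the resulting object with a monoid in $C^\ast\mathbf{PFA}_{\CC(S)^\perp}$. Since the forgetful functor $\Alg_{\mathsf{uAs}}(\CastCat)\to\CastCat$ is faithful and the prefactorization operad axioms are equalities of underlying $\ast$-functors, the operad axioms and local constancy are inherited for free from Proposition~\ref{prop:PFA} once strict monoidality of the structure maps is established. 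So the plan is to concentrate on the latter, treating separately the unit, the objects, and the morphisms.

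Fix an operation $(U_1,\dots,U_n)\to V$ in $\P_{\CC(S)^\perp}$, write $\rho^{\pi_i}_{U_i}$ for the $\ast$-endomorphism attached to $\pi_i\in\SSS_{(\AAA,\pi_0)}(U_i)$ by Lemma~\ref{lem:Haagconsequences}~(2), and set $\Pi:=\pi_1\bullet\cdots\bullet\pi_n\in\SSS_{(\AAA,\pi_0)}(V)$ with attached endomorphism $\rho^{\Pi}_V$. The key technical input I would isolate is the \emph{restriction property}
\[
\rho^{\Pi}_V\big\vert_{\pi_{0U_i}(\overline{\AAA}(U_i))^{\prime\prime}}\,=\,\rho^{\pi_i}_{U_i}
\]
as maps into $\pi_{0V}(\overline{\AAA}(V))^{\prime\prime}$, for each $i$. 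Both sides are weakly continuous $\ast$-homomorphisms on the von~Neumann subalgebra $\pi_{0U_i}(\overline{\AAA}(U_i))^{\prime\prime}\subseteq\pi_{0V}(\overline{\AAA}(V))^{\prime\prime}$, and they agree on the generators $\pi_{0s}(\AAA(s))$, $s\in U_i$, because $\rho^{\Pi}_V(\pi_{0s}(a))=\Pi_s(a)=\pi_{is}(a)=\rho^{\pi_i}_{U_i}(\pi_{0s}(a))$ by \eqref{eqn:PFAstructureformula} and Lemma~\ref{lem:Haagconsequences}~(2); hence they agree first on the generated $C^\ast$-algebra and then, by weak continuity, on its bicommutant, by an argument analogous to Lemma~\ref{lem:Haagconsequences}~(3). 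Granting this, the interchange on objects, $(\pi_1\cdiamond\dot\pi_1)\bullet\cdots\bullet(\pi_n\cdiamond\dot\pi_n)=(\pi_1\bullet\cdots\bullet\pi_n)\cdiamond(\dot\pi_1\bullet\cdots\bullet\dot\pi_n)$, follows by comparing components through the case distinctions in \eqref{eqn:PFAstructureformula} and \eqref{eqn:monoidalformula}: for $s\in U_i$ both sides evaluate to $\rho^{\pi_i}_{U_i}\,\rho^{\dot\pi_i}_{U_i}\,\pi_{0s}$ (using the restriction property to rewrite $\rho^{\Pi}_V\rho^{\dot\Pi}_V$ on the image of $\pi_{0s}$), while for $s\in V\setminus\bigcup_iU_i$ and for $s\in V^\cc$ both give $\pi_{0s}$. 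Unit preservation is immediate since the endomorphism attached to $\pi_0$ is the identity, whence $\pi_0\bullet\cdots\bullet\pi_0=\pi_0$; this also handles the arity-zero structure map out of the unit monoid $\B\bbC$.

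For morphisms I would use that, by Lemma~\ref{lem:Haagconsequences}~(4), every $L_i,\dot L_i$ lies in $\pi_{0U_i}(\overline{\AAA}(U_i))^{\prime\prime}$, and that for disjoint regions the algebras $\pi_{0U_i}(\overline{\AAA}(U_i))^{\prime\prime}$ mutually commute, which is the double-commutant consequence of $\perp$-commutativity already exploited in the proof of Theorem~\ref{theo:HaagDualityPFA}. Inserting \eqref{eqn:PFAstructureformulamorphism} and \eqref{eqn:monoidalformulamorphism}, the interchange identity for morphisms reduces to the identity of bounded operators
\[
\big(L_1\circ\rho^{\pi_1}_{U_1}(\dot L_1)\big)\circ\cdots\circ\big(L_n\circ\rho^{\pi_n}_{U_n}(\dot L_n)\big)\,=\,\big(L_1\circ\cdots\circ L_n\big)\circ\rho^{\Pi}_V\big(\dot L_1\circ\cdots\circ\dot L_n\big)\,.
\]
On the right, multiplicativity of $\rho^{\Pi}_V$ together with the restriction property rewrites $\rho^{\Pi}_V(\dot L_1\circ\cdots\circ\dot L_n)=\rho^{\pi_1}_{U_1}(\dot L_1)\circ\cdots\circ\rho^{\pi_n}_{U_n}(\dot L_n)$, while on the left the factors $\rho^{\pi_i}_{U_i}(\dot L_i)$ and $L_j$ with $i\neq j$ commute, so the interleaved product reorders into $(L_1\circ\cdots\circ L_n)\circ(\rho^{\pi_1}_{U_1}(\dot L_1)\circ\cdots\circ\rho^{\pi_n}_{U_n}(\dot L_n))$. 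This establishes that $\bullet$ is a strict monoidal $\ast$-functor, hence a morphism in $\Alg_{\mathsf{uAs}}(\CastCat)$, completing the first task.

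It remains to identify this structure with a unital associative algebra in $C^\ast\mathbf{PFA}_{\CC(S)^\perp}$. Here the point is purely operadic: the prefactorization operad $\P_{\CC(S)^\perp}$ of Definition~\ref{def:Poperad} and the associative operad $\mathsf{uAs}$ are both $\Set$-valued, hence canonically cocommutative Hopf operads via the diagonals $\und{f}\mapsto(\und{f},\und{f})$ (every object of $\Set$ being a cocommutative comonoid). Consequently $\Alg_{\mathsf{uAs}}(\CastCat)$ and $\Alg_{\P_{\CC(S)^\perp}}(\CastCat)=C^\ast\mathbf{PFA}_{\CC(S)^\perp}$ both acquire object-wise symmetric monoidal structures, and forming algebras for one operad inside algebras for the other may be carried out in either order, yielding the natural identification $\Alg_{\P_{\CC(S)^\perp}}\big(\Alg_{\mathsf{uAs}}(\CastCat)\big)\simeq\Alg_{\mathsf{uAs}}\big(C^\ast\mathbf{PFA}_{\CC(S)^\perp}\big)$. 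Concretely, unravelling the right-hand side shows that a unital associative algebra structure on $\SSS_{(\AAA,\pi_0)}$ in $C^\ast\mathbf{PFA}_{\CC(S)^\perp}$ is exactly the object-wise monoidal structures of Proposition~\ref{prop:monoidal} together with the requirement that every factorization product be a monoid morphism, which is precisely the compatibility proved above. I expect the main obstacle to lie entirely in the morphism interchange of the previous paragraph, where the restriction property of $\rho^{\Pi}_V$ and the commutativity of the disjoint-region von~Neumann algebras must be combined; by contrast, the object-level identity, the unit, and the final operadic interchange are comparatively formal.
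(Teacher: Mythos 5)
Your proposal is correct and follows essentially the same route as the paper's proof: the same reduction to strict monoidality of the structure maps $\bullet$, the same key compatibility $\rho^{\Pi}_V\vert_{\pi_{0U_i}(\overline{\AAA}(U_i))^{\prime\prime}}=\rho^{\pi_i}_{U_i}$ (which the paper derives in-line from $\pi_{iU_i}=\rho_{iU_i}\,\pi_{0U_i}=\rho_V\,\pi_{0U_i}$ and weak continuity), the same case distinction on $s$ for the object-level interchange, and the same use of Lemma \ref{lem:Haagconsequences} (4) together with the commutation of the disjoint-region bicommutants to reorder the morphism-level products. The only addition is your explicit Hopf-operad/interchange justification of the final identification $\Alg_{\P_{\CC(S)^\perp}}\big(\Alg_{\mathsf{uAs}}(\CastCat)\big)\simeq\Alg_{\mathsf{uAs}}\big(C^\ast\mathbf{PFA}_{\CC(S)^\perp}\big)$, which the paper treats as formal and leaves implicit.
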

\begin{proof}
We have to show that the structure maps 
\begin{flalign}
\bullet \,:\, \bigotimes_{i=1}^n\SSS_{(\AAA,\pi_0)}(U_i)~\xrightarrow{\;\quad\;}~
\SSS_{(\AAA,\pi_0)}(V)
\end{flalign} 
in \eqref{eqn:PFAstructure} are strict monoidal $\ast$-functors, for all operations
$\und{U} = (U_1,\dots,U_n)\to V$ in the operad $\P_{\CC(S)^\perp}$. For arity $n=0$ 
and $n=1$ operations this is evident, so we consider the case of $n\geq 2$.
As a direct consequence of its definition in \eqref{eqn:PFAstructureformula},
it follows that $\bullet$ preserves the monoidal units 
$\pi_0\bullet\cdots\bullet\pi_0 = \pi_0$. Concerning the monoidal product
on objects, we have to verify the interchange law
\begin{flalign}\label{eqn:interchangecompatibility}
\big((\pi_1\cdiamond \dot{\pi}_1)\bullet \cdots\bullet(\pi_n\cdiamond \dot{\pi}_n)\big)_s
\, =\,\big((\pi_1\bullet \cdots\bullet\pi_n)\cdiamond(\dot{\pi}_1\bullet \cdots\bullet\dot{\pi}_n)\big)_s~,
\end{flalign}
for all $s\in S$. This is obvious for $s\in V^\cc$.
For $s\in V$, the right-hand side of 
\eqref{eqn:interchangecompatibility} reads as
\begin{flalign}
\big((\pi_1\bullet \cdots\bullet\pi_n)\cdiamond(\dot{\pi}_1\bullet \cdots\bullet\dot{\pi}_n)\big)_s
\,=\, \rho_V\,\dot{\rho}_V\,\pi_{0s}~,
\end{flalign}
where $\rho_V$ and $\dot{\rho}_V$ denote the $\ast$-endomorphisms
corresponding to $\pi_1\bullet \cdots\bullet\pi_n$ and $
\dot{\pi}_1\bullet \cdots\bullet\dot{\pi}_n$ respectively.
These are defined uniquely according to Lemma \ref{lem:Haagconsequences} (2) by 
the factorizations $(\pi_1\bullet \cdots\bullet\pi_n)_V = \rho_V\,\pi_{0V}$
and $(\dot{\pi}_1\bullet \cdots\bullet\dot{\pi}_n)_V = \dot{\rho}_V\,\pi_{0V}$.
Given any $s\in V\cap U_1^\cc\cap \cdots\cap U_n^\cc$, restricting along $\AAA(s)\to \overline{\AAA}(V)$
yields the identities $\pi_{0s} = \rho_V\,\pi_{0s}$ and $\pi_{0s}=\dot{\rho}_V\,\pi_{0s}$,
hence $\rho_V\,\dot{\rho}_V\,\pi_{0s} = \pi_{0s}$, which implies that \eqref{eqn:interchangecompatibility}
holds true also for all  $s\in V\cap U_1^\cc\cap \cdots\cap U_n^\cc$.
By restricting along  $\overline{\AAA}(U_i)\to \overline{\AAA}(V)$ we obtain
the identities $\pi_{iU_i}=\rho_{iU_i}\,\pi_{0U_i} = \rho_V\,\pi_{0U_i}$ and $\dot{\pi}_{iU_i}=
\dot{\rho}_{iU_i}\,\pi_{0U_i}=\dot{\rho}_V\,\pi_{0U_i}$. Hence, for all $s\in U_i$, 
we have that $\rho_V\,\dot{\rho}_V\,\pi_{0s}
=\rho_{V}\,\dot{\rho}_{iU_i}\,\pi_{0s}
=\rho_{i U_i}\,\dot{\rho}_{iU_i}\,\pi_{0s}$, because the image of
$\dot{\rho}_{iU_i}$ lies in $\pi_{0U_i}\big(\overline{\AAA}(U_i)\big)^{\prime\prime}$. This implies that 
\eqref{eqn:interchangecompatibility} holds true also for all  $s\in U_i$, for $i=1,\dots,n$.
This exhausts all possible cases, so we can conclude that \eqref{eqn:interchangecompatibility} holds true
for all $s\in S$.
\sk

Concerning the monoidal product on morphisms, we have to verify that
\begin{flalign}\label{eqn:interchangecompatibilitymorphism}
(L_1\cdiamond \dot{L}_1)\bullet \cdots\bullet(L_n\cdiamond \dot{L}_n)
\, =\,(L_1\bullet \cdots\bullet L_n)\cdiamond(\dot{L}_1\bullet \cdots\bullet\dot{L}_n)~.
\end{flalign}
Since $L_1\bullet \cdots\bullet L_n : \pi_1\bullet \cdots\bullet\pi_n\to \pi^\prime_1\bullet \cdots\bullet\pi^\prime_n$
is a morphism from $\pi_1\bullet \cdots\bullet\pi_n\in \SSS_{(\AAA,\pi_0)}(V)$,
the right-hand side of \eqref{eqn:interchangecompatibilitymorphism} 
is defined according to \eqref{eqn:monoidalformulamorphism} and \eqref{eqn:PFAstructureformulamorphism} by
\begin{flalign}\label{eqn:interchangecompatibilitymorphismtmp}
\nn (L_1\bullet \cdots\bullet L_n)\cdiamond(\dot{L}_1\bullet \cdots\bullet\dot{L}_n)
\,&=\,L_1\circ \cdots\circ L_n \circ\rho_V\big(\dot{L}_1\circ \cdots\circ \dot{L}_n\big)\\
\,&=\,L_1\circ \cdots\circ L_n \circ\rho_V(\dot{L}_1)\circ \cdots\circ \rho_V(\dot{L}_n)~,
\end{flalign}
where $\rho_V$ is the $\ast$-endomorphism corresponding to $\pi_1\bullet \cdots\bullet\pi_n
\in \SSS_{(\AAA,\pi_0)}(V)$ discussed in the paragraph above.
Since $\dot{L}_i\in \pi_{0U_i}\big(\overline{\AAA}(U_i)\big)^{\prime\prime}$ 
by Lemma \ref{lem:Haagconsequences} (4),
we can  write $\rho_V(\dot{L}_i) = \rho_{iU_i}(\dot{L}_i)$.
Using also $\perp$-commutativity and the fact that $U_i\cap U_j=\varnothing$ are disjoint, for all $i\neq j$,
we can reorder the factors in \eqref{eqn:interchangecompatibilitymorphismtmp} according to  
\begin{flalign}
(L_1\bullet \cdots\bullet L_n)\cdiamond(\dot{L}_1\bullet \cdots\bullet\dot{L}_n)
\,=\,L_1\circ \rho_{1U_1}(\dot{L}_1)\circ \cdots\circ L_n\circ \rho_{nU_n}(\dot{L}_n)~,
\end{flalign}
which implies that the identity \eqref{eqn:interchangecompatibilitymorphism} holds true.
\end{proof}

One can rephrase the result of Theorem \ref{theo:PFAinMonCat} 
in the language of categorified AQFTs introduced in \cite{BPSWcategorified}.
The key ingredient for this reinterpretation is \cite[Theorem 2.9]{BPSWcategorified},
which shows that the Boardman-Vogt tensor product 
$\P_{\CC(S)^\perp}\otimes_{\mathrm{BV}}^{} \mathsf{uAs}
\cong \O_{\CC(S)^\perp}$ of the prefactorization operad from Definition \ref{def:Poperad}
and the unital associative operad $\mathsf{uAs}$ is equivalent to the AQFT operad
$\O_{\CC(S)^\perp}$ from \cite{BSWoperad}. As a direct consequence,
we obtain the following result.
\begin{cor}\label{cor:2AQFT}
Suppose that the geometric Assumption \ref{assu:0} is satisfied
and that the extended $\overline{\AAA}\in \CastAQFT(S)$ in \eqref{eqn:Aoverline}
satisfies Haag duality in the reference $\ast$-representation $(H,\pi_0)$, for all $U\in \CC(S)$. 
Then the locally constant prefactorization algebra from Theorem \ref{theo:PFAinMonCat} is equivalent
to the datum of a  locally constant categorified AQFT
\begin{flalign}
\SSS_{(\AAA,\pi_0)}\,:\,\O_{\CC(S)^\perp}~\xrightarrow{\;\quad\;}~\CastCat
\end{flalign}
over the orthogonal category $\CC(S)^\perp$ with values in the symmetric monoidal category
$\CastCat$ of $C^\ast$-categories and $\ast$-functors.
\end{cor}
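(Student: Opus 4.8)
The plan is to deduce the statement purely formally from the interchange property of the Boardman--Vogt tensor product, combined with the operad identification $\P_{\CC(S)^\perp}\otimes_{\mathrm{BV}}^{}\mathsf{uAs}\cong\O_{\CC(S)^\perp}$ recalled from \cite[Theorem 2.9]{BPSWcategorified}. Recall that, by Definition \ref{def:PFA}, the category of $C^\ast$-categorical prefactorization algebras is $C^\ast\mathbf{PFA}_{\CC(S)^\perp}=\Alg_{\P_{\CC(S)^\perp}}\big(\CastCat\big)$, and that Theorem \ref{theo:PFAinMonCat} upgrades $\SSS_{(\AAA,\pi_0)}$ to an object of $\Alg_{\mathsf{uAs}}\big(C^\ast\mathbf{PFA}_{\CC(S)^\perp}\big)$, i.e.\ to a unital associative algebra internal to $\P_{\CC(S)^\perp}$-algebras, where $C^\ast\mathbf{PFA}_{\CC(S)^\perp}$ carries its object-wise symmetric monoidal structure inherited from $(\CastCat,\boxtimes,\B\bbC)$.

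First I would invoke the universal property of the Boardman--Vogt tensor product, which for any two coloured symmetric operads $\O_1,\O_2$ and any bicomplete closed symmetric monoidal target $\V$ furnishes a natural equivalence
\begin{flalign}
\Alg_{\O_2}\big(\Alg_{\O_1}(\V)\big)\,\simeq\,\Alg_{\O_1\otimes_{\mathrm{BV}}^{}\O_2}\big(\V\big)~,
\end{flalign}
where the outer algebras on the left use the object-wise symmetric monoidal structure on $\Alg_{\O_1}(\V)$. Specialising to $\O_1=\P_{\CC(S)^\perp}$, $\O_2=\mathsf{uAs}$ and $\V=(\CastCat,\boxtimes,\B\bbC)$, which is bicomplete and closed symmetric monoidal by Theorem \ref{theo:CastCatmodel} (a), and then substituting the identification $\P_{\CC(S)^\perp}\otimes_{\mathrm{BV}}^{}\mathsf{uAs}\cong\O_{\CC(S)^\perp}$, I obtain
\begin{flalign}
\Alg_{\mathsf{uAs}}\big(C^\ast\mathbf{PFA}_{\CC(S)^\perp}\big)\,\simeq\,\Alg_{\O_{\CC(S)^\perp}}\big(\CastCat\big)~.
\end{flalign}
The right-hand side is, by the operadic description of AQFTs recalled in Remark \ref{rem:AQFT} but now one categorical level up (replacing the target $\Vec_\bbC$ by $\CastCat$), precisely the category of $\CastCat$-valued categorified AQFTs over $\CC(S)^\perp$ in the sense of \cite{BPSWcategorified}. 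Transporting $\SSS_{(\AAA,\pi_0)}$ along this equivalence therefore yields the claimed categorified AQFT $\SSS_{(\AAA,\pi_0)}:\O_{\CC(S)^\perp}\to\CastCat$.

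It then remains to transport local constancy across the equivalence. The $1$-ary operations of $\P_{\CC(S)^\perp}$ are exactly the morphisms $U\subseteq V$ of $\CC(S)$, and these persist unchanged as the unary non-permutative part of $\O_{\CC(S)^\perp}$ after forming $\otimes_{\mathrm{BV}}^{}\mathsf{uAs}$; hence the $\ast$-functor that the categorified AQFT assigns to an inclusion $U\subseteq V$ coincides with the one assigned by the underlying prefactorization algebra, which is a unitary equivalence by Theorem \ref{theo:PFAinMonCat}. I would conclude that the resulting categorified AQFT is locally constant. The only genuinely delicate point is bookkeeping rather than analysis: one must verify that the object-wise symmetric monoidal structure on $C^\ast\mathbf{PFA}_{\CC(S)^\perp}$ used in Theorem \ref{theo:PFAinMonCat} is exactly the one entering the Boardman--Vogt interchange equivalence, and that the unary operations match on both sides. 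Since these are precisely what \cite[Theorem 2.9]{BPSWcategorified} and the object-wise formalism guarantee, no further input beyond Haag duality (already used to produce the two compatible structures) is required.
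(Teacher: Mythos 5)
Your proposal is correct and follows essentially the same route as the paper: the corollary is deduced directly from the identification $\P_{\CC(S)^\perp}\otimes_{\mathrm{BV}}^{}\mathsf{uAs}\cong\O_{\CC(S)^\perp}$ of \cite[Theorem 2.9]{BPSWcategorified} together with the interchange property of the Boardman--Vogt tensor product applied to the target $(\CastCat,\boxtimes,\B\bbC)$. Your additional bookkeeping on the transfer of local constancy via the matching of unary operations is exactly the (implicit) content of the paper's ``direct consequence'' claim.
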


\begin{rem}\label{rem:2AQFT}
We believe that this corollary is quite curious and interesting.
It shows that applying superselection theory \cite{DHR,BuchholzFredenhagen},
in the specific form presented above, to an ordinary AQFT $\overline{\AAA}$ 
results in a categorified AQFT $\SSS_{(\AAA,\pi_0)}$. This categorified
AQFT is locally constant, which means that it describes only (some part of) the topological content
of the original not necessarily topological AQFT $\overline{\AAA}$.
Hence, one may interpret the theory of superselection sectors as a 
concrete and powerful tool to ``extract the topological content of a not necessarily topological AQFT''. 
This perspective also matches perfectly the way how superselection theory is
used in practice for the description of topological order in (non-topological) quantum
lattice models, see e.g.\ \cite{Naaijkens1,Naaijkens2,NaaijkensChapter,FiedlerNaaijkens,Ogata}.
\end{rem}


\section{\label{sec:lattice}Quantum systems on the lattice $\mathbb{Z}^n$}
We will now specialize the results from Section \ref{sec:PFA}
to the case where the set $S=\bbZ^n$ is the $n$-dimensional lattice.
Motivated by earlier works on topological order in $2$-dimensional quantum spin 
systems \cite{Naaijkens1,Naaijkens2,NaaijkensChapter,FiedlerNaaijkens,Ogata},
we consider as localization regions the category of cone-shaped subsets of $\bbZ^n$,
which we define by intersecting open cones in the Euclidean space $\bbR^n$
with the lattice $\bbZ^n\subset \bbR^n$. A convenient way to parametrize
open cones in $\bbR^n$ is by their apex $p\in \bbR^n$,
normalized center direction $t\in\bbS^{n-1}\subset \bbR^n$
and opening angle $\alpha\in (0,\pi)$. Explicitly, the open cone associated
to such data is given by the open subset
\begin{flalign}
\nn C_{(p,t,\alpha)}\,&:=\, \big\{x\in\bbR^n\backslash\{p\}\,:\, \mathrm{angle}(x-p,t)\,<\,\alpha \big\}\\[3pt]
\,&\phantom{:}=\,\big\{x \in\bbR^n\,:\,(x-p)\cdot t\,>\,\norm{x-p}\,\cos(\alpha)  \big\}\,\subseteq\,\bbR^n~,\label{eqn:cone}
\end{flalign}
where $\cdot$ and $\norm{\cdot}$ denote the standard inner product and its
associated norm on $\bbR^n$. For an opening angle $\alpha\neq \frac{\pi}{2}$, the tuple of data
$(p,t,\alpha)\in \bbR^n\times \bbS^{n-1}\times (0,\pi)$ specifies open cones in $\bbR^n$ faithfully, i.e.\
$C_{(p,t,\alpha)} = C_{(p^\prime,t^\prime,\alpha^\prime)}$
if and only if $(p^\prime,t^\prime,\alpha^\prime)=(p,t,\alpha)$.
For $\alpha = \frac{\pi}{2}$, i.e.\ in the case where open cones are half-spaces,
there are some mild degeneracies in this parametrization, namely
$C_{(p,t,\frac{\pi}{2})} = C_{(p^\prime,t^\prime,\alpha^\prime)}$
if and only if $\alpha^\prime=\frac{\pi}{2}$, $t^\prime=t$ and
$p^\prime = p + q$ for some $q\in \bbR^n$ with $q\cdot t=0$.
\begin{defi}\label{def:cones}
The \textit{category of cone-shaped subsets of $\bbZ^n\subset \bbR^n$} is defined
as the full subcategory $\Cone(\bbZ^n)\subseteq \Sub(\bbZ^n)$ whose objects
are of the form $U = C_{(p,t,\alpha)}\cap \bbZ^n\subseteq \bbZ^n$, for some 
open cone $C_{(p,t,\alpha)}\subseteq \bbR^n$ as in \eqref{eqn:cone}.
\end{defi}

\begin{propo}\label{prop:cones}
The full subcategory $\Cone(\bbZ^n)\subseteq\Sub(\bbZ^n)$ from Definition \ref{def:cones}
satisfies the geometric Assumption \ref{assu:0}.
\end{propo}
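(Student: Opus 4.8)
The plan is to work throughout with the \emph{directional cap} of a cone: to the open cone $C_{(p,t,\alpha)}$ I attach the open spherical cap $\kappa(t,\alpha):=\{u\in\bbS^{n-1}:\angle(u,t)<\alpha\}$ of its asymptotic directions. The single most useful fact, which I would isolate as a lemma, is that \emph{lattice-disjoint cones have disjoint directional caps}: if some direction $d$ lies in $\kappa(t_i,\alpha_i)\cap\kappa(t_j,\alpha_j)$, then far out both $C_i$ and $C_j$ contain an entire solid neighbourhood of the ray through $d$, hence common balls of growing radius and therefore common lattice points, contradicting $C_i\cap C_j\cap\bbZ^n=\varnothing$. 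Two elementary monotonicity facts will be used repeatedly: widening the angle enlarges a cone ($C_{(p,t,\alpha)}\subseteq C_{(p,t,\beta)}$ for $\alpha\leq\beta$) and pushing the apex forward shrinks it ($C_{(p+st,t,\alpha)}\subseteq C_{(p,t,\alpha)}$ for $s\geq 0$); and a narrow cone $C_{(z,t^*,\epsilon)}$ with $t^*$ strictly interior to $\kappa(t,\alpha)$, $\epsilon$ small, and apex $z=c+Rt^*$ ($c$ fixed, $R$ large) is contained in $C_{(p,t,\alpha)}$, since every one of its points has direction from $p$ within $O(\epsilon)+O(1/R)$ of $t^*$.

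For Assumption \ref{assu:0} (1) I would take the antipodal cone with the same apex, $V:=C_{(p,-t,\pi-\alpha)}\cap\bbZ^n$ for $U=C_{(p,t,\alpha)}\cap\bbZ^n$: since $\angle(x-p,t)+\angle(x-p,-t)=\pi$ for $x\neq p$, the two open cones are disjoint as $\alpha+(\pi-\alpha)\leq\pi$, so $V\subseteq U^\cc$, and $V$ is non-empty because $\pi-\alpha\in(0,\pi)$ and every open cone meets $\bbZ^n$. For Assumption \ref{assu:0} (2), given $U=C_{(p_1,t_1,\alpha_1)}\cap\bbZ^n$ and $V=C_{(p_2,t_2,\alpha_2)}\cap\bbZ^n$, the idea is to \emph{widen} $U$ and then \emph{carve} $V'$ out of the overlap. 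I set $W:=C_{(p_1,t_1,\beta)}\cap\bbZ^n$ with $\beta\in[\alpha_1,\pi)$ close enough to $\pi$ that $\kappa(t_1,\beta)\cap\kappa(t_2,\alpha_2)\neq\varnothing$; this is always achievable since $\angle(t_1,t_2)\leq\pi$ while $\beta+\alpha_2$ can be driven above $\pi$. Then $U\subseteq W$ by angle-monotonicity. Choosing $t^*$ in the open, non-empty intersection $\kappa(t_1,\beta)\cap\kappa(t_2,\alpha_2)$ and a deep apex $z:=p_1+Rt^*$, the narrow cone $V':=C_{(z,t^*,\epsilon)}\cap\bbZ^n$ satisfies both $V'\subseteq V$ and $V'\subseteq W$ by the last fact, whence $U\cup V'\subseteq W$ and $V'\subseteq V$.

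For Assumption \ref{assu:0} (3), let $U=C_{(p_0,t_0,\alpha_0)}\cap\bbZ^n$ and let $U_1,\dots,U_m$ be mutually disjoint cones. By the key lemma their open caps $\kappa(t_i,\alpha_i)$ are pairwise disjoint, so their boundaries are finitely many sets of dimension $n-2$; for $n\geq 2$ these cannot cover the open cap $\kappa(t_0,\alpha_0)$, so I may pick a generic $t^*\in\kappa(t_0,\alpha_0)$ lying on no $\partial\kappa(t_i,\alpha_i)$. Such a $t^*$ lies in the interior of at most one cap $\kappa(t_{i_0},\alpha_{i_0})$ and strictly outside the closed caps of all other $U_i$. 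Taking $V:=C_{(z,t^*,\epsilon)}\cap\bbZ^n$ with $z:=p_0+Rt^*$ deep and $\epsilon$ small gives $V\subseteq U$; and for every $i\neq i_0$ the strict margin $\angle(t^*,t_i)>\alpha_i$ forces all points of $V$ (which have direction within $O(\epsilon)+O(1/R)$ of $t^*$) outside $U_i$ once $R$ is large, so $V\cap U_i=\varnothing$. Hence $V$ meets at most $U_{i_0}$. (The case $n=1$, where cones are half-lines, is immediate, since disjointness already permits at most one cone of each of the two directions.)

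The main obstacle is the wide-cone case of part (2): when $\alpha_1$ is near $\pi$ (so $U$ is almost all of $\bbR^n$) and $t_2$ points nearly opposite to $t_1$, no half-space contains $U$, and any attempt to enclose $U\cup V'$ in a single cone fails unless $V$ is first shrunk into the narrow directional overlap of $W$ and $V$; the crux is realizing that widening $W$ toward angle $\pi$ automatically produces this overlap. The enabling observation for part (3) — that mere lattice-disjointness already forces the directional caps to be disjoint — is the other point requiring care, and I would establish it first, before assembling the three constructions.
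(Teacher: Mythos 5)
Your proof is correct for $n\geq 2$ and rests on the same elementary cone geometry as the paper's proof --- the complementary cone $C_{(p,-t,\pi-\alpha)}$ for item (1), and angle-widening plus ``deep narrow cones'' (the paper's facts (G2) and (G3)) for items (2) and (3) --- but it organizes the latter two items differently. For item (2), you push the opening angle $\beta$ of $W$ towards $\pi$ so that the directional caps of $W$ and $V$ overlap automatically; this absorbs the antipodal case $t_2=-t_1$, which the paper instead removes by a preliminary tilting step via (G2) before choosing a $\gamma$ that need not be close to $\pi$. This is a mild streamlining. For item (3), the paper perturbs the direction of $U$ to ensure $\mathrm{angle}(t,t_i)\neq\alpha_i$ and then argues by cases (either $V$ can be placed in the intersection of all complementary cones, or $V$ can be placed inside some $U_i$ and is then disjoint from the other $U_j$ by their mutual disjointness), whereas you first establish a cap-disjointness lemma --- lattice-disjoint cones have disjoint open directional caps, proved via balls of linearly growing radius around a common asymptotic ray --- and then pick a generic $t^*\in\kappa(t_0,\alpha_0)$ avoiding the finitely many $(n-2)$-dimensional cap boundaries. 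Your route carries slightly more overhead (the lemma plus the dimension count), but it makes transparent that disjointness of cones is governed by their directions at infinity, and the cap-disjointness lemma is reusable. One caveat, shared with the paper: for $n=1$ cones degenerate to half-lines, item (2) of Assumption \ref{assu:0} fails for two oppositely directed half-lines, and neither your construction nor the paper's tilting step applies there; your parenthetical on $n=1$ covers only item (3).
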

\begin{proof}
We make use of the following geometric facts about open cones in $\bbR^n$:
\begin{itemize}
\item[(G1)] Given any open cone $C_{(p,t,\alpha)}\subseteq \bbR^n$, its complement
in $\bbR^n$ is given by the closure
\begin{flalign}
(C_{(p,t,\alpha)})^\cc\,=\,\overline{C_{(p,-t,\pi-\alpha)}}\,\subseteq\,\bbR^n
\end{flalign}
of the open cone $C_{(p,-t,\pi-\alpha)}\subseteq\bbR^n$ which is
associated with the complementary parameters $(p,-t,\pi-\alpha)\in\bbR^{n}\times\bbS^{n-1}\times(0,\pi)$.

\item[(G2)] Given any open cone $C_{(p,t,\alpha)}\subseteq \bbR^n$,
any point $q\in \overline{C_{(p,t,\alpha)}}\subseteq\bbR^n$ in its closure and 
any normalized direction $u\in\bbS^{n-1}$ such that $\mathrm{angle}(t,u)<\alpha$,
then $C_{(q,u,\beta)}\subseteq C_{(p,t,\alpha)}$ for all $\beta< \alpha-\mathrm{angle}(t,u)$.

\item[(G3)] Given any open cone $C_{(p,t,\alpha)}\subseteq \bbR^n$,
any point $q\in \bbR^n$ and any normalized direction $u\in\bbS^{n-1}$ such
that $\mathrm{angle}(t,u)<\alpha$, then the half-line 
$\bbR_{\geq 0}\ni \lambda\mapsto q+\lambda\,u\in\bbR^n$ is contained eventually in 
$C_{(p,t,\alpha)}\subseteq\bbR^n$, i.e.\ there exists $\lambda_\ast \in\bbR_{\geq 0}$ such that 
$q+\lambda\,u\in C_{(p,t,\alpha)}$ for all $\lambda\geq\lambda_\ast$.
\end{itemize}

Item (1): Given any $U = C_{(p,t,\alpha)}\cap \bbZ^n\in \Cone(\bbZ^n)$,
we use the geometric fact (G1) to define $V = C_{(p,-t,\pi-\alpha)}\cap \bbZ^n\in \Cone(\bbZ^n)$
which satisfies $V\subseteq U^\cc$.
\sk

Item (2): Note that it suffices to consider the 
case where $U = C_{(p,t,\alpha)}\cap \bbZ^n\in \Cone(\bbZ^n)$
and $V = C_{(q,u,\beta)}\cap \bbZ^n\in \Cone(\bbZ^n)$ are such that 
the normalized directions are not antipodal $t\neq -u$. 
Indeed, if $t=-u$, we can apply the geometric fact (G2) to obtain a 
smaller open cone $C_{(q,\tilde{u},\tilde{\beta})}\subseteq C_{(q,u,\beta)}$
with $t\neq -\tilde{u}$ and prove item (2) by exhibiting cone-shaped 
subsets $V^\prime\subseteq \tilde{V}\subseteq V$ which are contained in 
$\tilde{V} = C_{(q,\tilde{u},\tilde{\beta})}\cap \bbZ^n\in\Cone(\bbZ^n)$.
\sk

Therefore, assuming without loss of generality that $t\neq -u$, there exists
an opening angle $\gamma \in (0,\pi)$ with $\gamma>\alpha$ and $\gamma > \mathrm{angle}(t,u)$.
Using (G2) and (G3), this implies that $C_{(p,t,\alpha)}\subseteq C_{(p,t,\gamma)}$
and that the half-line $\bbR_{\geq 0}\ni \lambda\mapsto q+\lambda\,u\in\bbR^n$
is contained eventually in $C_{(p,t,\gamma)}\subseteq \bbR^n$. We set
\begin{flalign}
W\,:=\, C_{(p,t,\gamma)}\cap \bbZ^n\,\in\,\Cone(\bbZ^n)
\end{flalign}
and observe that $U\subseteq W$ holds true by construction.
In order to exhibit $V^\prime \in\Cone(\bbZ^n)$ of the form $V^{\prime}=  C_{(q^\prime,u^\prime,\beta^\prime)}\cap \bbZ^n$,
satisfying $V^\prime\subseteq V$ and $V^\prime\subseteq W$, we use the eventual containedness property of
the half-line and choose any $q^\prime\in C_{(p,t,\gamma)}\cap C_{(q,u,\beta)}$ such that 
$q^\prime + \lambda\,u\in C_{(p,t,\gamma)}\cap C_{(q,u,\beta)}$, for all $\lambda\geq 0$.
Choosing $u^\prime = u$ and any $\beta^\prime\in(0,\pi)$ such that 
$\beta^\prime<\beta$ and $\beta^\prime< \gamma - \mathrm{angle}(t,u)$,
we obtain from (G2) that $C_{(q^\prime,u,\beta^\prime)}\subseteq  C_{(q,u,\beta)}$
and $C_{(q^\prime,u,\beta^\prime)}\subseteq  C_{(p,t,\gamma)}$, hence 
$V^\prime\subseteq V$ and $V^\prime\subseteq W$.
\sk

Item (3): Note that it suffices to consider the 
case where $U = C_{(p,t,\alpha)}\cap \bbZ^n\in \Cone(\bbZ^n)$
and $U_i = C_{(p_i,t_i,\alpha_i)}\cap \bbZ^n\in \Cone(\bbZ^n)$
are such that $\mathrm{angle}(t,t_i)\neq \alpha_i$, for all $i=1,\dots,n$. This follows as above
by applying the geometric fact (G2) in order to shrink the open cone representing
$U$ to a smaller cone with a slightly altered normalized direction which satisfies this property. 
To carry out the proof, we make a case distinction.
In the case where $\mathrm{angle}(t,t_i)>\alpha_i$, for all $i=1,\dots,n$,
one obtains from (G3) that the half-line $\bbR_{\geq 0}\ni \lambda\mapsto p+\lambda\,t\in\bbR^n$ is 
contained eventually in the intersection $\bigcap_{i=1}^n C_{(p_i,-t_i,\pi-\alpha_i)}\subseteq \bbR^n$
of the complementary open cones from (G1). Choosing any 
$q\in \bigcap_{i=1}^n C_{(p_i,-t_i,\pi-\alpha_i)}\cap C_{(p,t,\alpha)}$
such that $q+\lambda\,t\in  \bigcap_{i=1}^n C_{(p_i,-t_i,\pi-\alpha_i)}\cap C_{(p,t,\alpha)}$, for all $\lambda\geq 0$,
and any $\beta\in(0,\pi)$ such that $\beta<\alpha$ and $\beta<\mathrm{angle}(t,t_i)-\alpha_i$, for 
all $i=1,\dots,n$, we obtain from (G2) that $C_{(q,t,\beta)}\subseteq C_{(p,t,\alpha)}$
and $C_{(q,t,\beta)}\cap C_{(p_i,t_i,\alpha_i)}=\varnothing$, for all $i=1,\dots,n$.
Hence, $V = C_{(q,t,\beta)}\cap \bbZ^n\in \Cone(\bbZ^n)$ satisfies
$V\subseteq U$ and $V\cap U_i=\varnothing$, for all $i=1,\dots,n$.
\sk

Consider now the case where $\mathrm{angle}(t,t_i)<\alpha_i$ 
for some $i\in \{1,\dots,n\}$. Then (G3) implies that the 
half-line $\bbR_{\geq 0}\ni \lambda\mapsto p+\lambda\,t\in\bbR^n$ is contained eventually in 
$C_{(p_i,t_i,\alpha_i)}\subseteq \bbR^n$, so we can choose
$q\in C_{(p_i,t_i,\alpha_i)}\cap C_{(p,t,\alpha)}$ such that 
$q+\lambda\,t\in C_{(p_i,t_i,\alpha_i)}\cap C_{(p,t,\alpha)}$,
for all $\lambda\geq 0$. Choosing further any $\beta\in(0,\pi)$ 
such that $\beta<\alpha$ and $\beta<\alpha_i-\mathrm{angle}(t,t_i)$,
we obtain from (G2) that that $C_{(q,t,\beta)}\subseteq C_{(p,t,\alpha)}$
and $C_{(q,t,\beta)}\subseteq C_{(p_i,t_i,\alpha_i)}$. Hence, 
$V = C_{(q,t,\beta)}\cap \bbZ^n\in \Cone(\bbZ^n)$ satisfies
$V\subseteq U$ and $V\subseteq U_i$, which via mutual disjointness 
of $(U_1,\dots,U_n)$ implies that $V\cap U_j=\varnothing$, for all $j\neq i$.
\end{proof}

The following result is a direct consequence of Theorem \ref{theo:PFAinMonCat}
and Proposition \ref{prop:cones}.
\begin{cor}\label{cor:latticePFA}
Let $\AAA = \{\AAA(x)\}_{x\in \bbZ^n}\in\CastAlg^{\bbZ^n}$ be any family of 
$C^\ast$-algebras which is indexed by the lattice $\bbZ^n$ and let $(H,\pi_0)\in {}^\ast\mathbf{Rep}_{\AAA}$
be any faithful $\ast$-representation such that the extended $\overline{\AAA}\in \CastAQFT(\bbZ^n)$ 
in \eqref{eqn:Aoverline} satisfies Haag duality, for all cone-shaped subsets $U\in \Cone(\bbZ^n)$. 
Then the $C^\ast$-categories of localized superselection sectors
carry via Theorem \ref{theo:PFAinMonCat} the structure of a unital associative algebra
\begin{flalign}
\SSS_{(\AAA,\pi_0)}\,\in\,\Alg_{\mathsf{uAs}}\Big(C^\ast\mathbf{PFA}_{\Cone(\bbZ^n)^\perp}\Big)
\end{flalign}
in the category of $C^\ast$-categorical prefactorization algebras 
$C^\ast\mathbf{PFA}_{\Cone(\bbZ^n)^\perp}$ from Definition \ref{def:PFA}, 
whose underlying prefactorization algebra is locally constant.
\end{cor}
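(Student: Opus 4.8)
The plan is to obtain this statement as a direct specialization of Theorem \ref{theo:PFAinMonCat} to the geometric setting $S=\bbZ^n$ with category of localization regions $\CC(S)=\Cone(\bbZ^n)$. First I would record that $\Cone(\bbZ^n)\subseteq\Sub(\bbZ^n)$ is a full subcategory and therefore inherits an orthogonality relation by restriction, defining a full orthogonal subcategory $\Cone(\bbZ^n)^\perp\subseteq\Sub(\bbZ^n)^\perp$ exactly as required by the framework of Section \ref{sec:PFA}; this is precisely the content of Definition \ref{def:cones}, so every construction from that section applies verbatim with the abstract $\CC(S)$ replaced by $\Cone(\bbZ^n)$.

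Next I would verify that the two hypotheses of Theorem \ref{theo:PFAinMonCat} are met in this specialization. The geometric hypothesis, namely Assumption \ref{assu:0}, is supplied by Proposition \ref{prop:cones}, which establishes properties (1)--(3) for cone-shaped subsets of $\bbZ^n$. The algebraic hypothesis, Haag duality of $\overline{\AAA}$ in the reference representation $(H,\pi_0)$ for every $U\in\Cone(\bbZ^n)$, is assumed outright in the statement of the corollary. Here the faithfulness assumption on $(H,\pi_0)$ is understood in the sense of Definition \ref{def:SSSloc}, i.e.\ injectivity of $\pi_{0U}\colon\overline{\AAA}(U)\to B(H)$ for all $U$, which is exactly what makes the bicommutant formulation of Haag duality in Definition \ref{def:HaagDuality} meaningful.

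With both hypotheses in place, I would simply invoke Theorem \ref{theo:PFAinMonCat} for $\CC(S)=\Cone(\bbZ^n)$. This produces the locally constant prefactorization algebra $\SSS_{(\AAA,\pi_0)}\colon\P_{\Cone(\bbZ^n)^\perp}\to\Alg_{\mathsf{uAs}}\big(\CastCat\big)$, equivalently the unital associative algebra $\SSS_{(\AAA,\pi_0)}\in\Alg_{\mathsf{uAs}}\big(C^\ast\mathbf{PFA}_{\Cone(\bbZ^n)^\perp}\big)$ whose underlying prefactorization algebra is locally constant, which is precisely the asserted conclusion.

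Since all the substantive work has already been carried out in Theorem \ref{theo:PFAinMonCat} (together with the results it rests on, Theorem \ref{theo:HaagDualityPFA} and Proposition \ref{prop:monoidal}) and in Proposition \ref{prop:cones}, there is no genuine obstacle in this corollary; the only point demanding care is the bookkeeping that the hypotheses of the general theorem line up exactly with the data specialized to the lattice. In particular I would confirm that the \emph{entirety} of Assumption \ref{assu:0} is both needed and supplied: parts (1) and (2) enter through the object-wise monoidal structure of Proposition \ref{prop:monoidal}, while part (3) enters through the prefactorization structure of Lemma \ref{lem:PFAstructure}, and Proposition \ref{prop:cones} indeed verifies all three for $\Cone(\bbZ^n)$.
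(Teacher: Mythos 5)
Your proposal is correct and coincides with the paper's own argument: the corollary is obtained exactly by specializing Theorem \ref{theo:PFAinMonCat} to $\CC(S)=\Cone(\bbZ^n)$, with the geometric Assumption \ref{assu:0} supplied by Proposition \ref{prop:cones} and Haag duality assumed in the hypotheses. The additional bookkeeping you include about which parts of Assumption \ref{assu:0} feed into which intermediate results is accurate and consistent with the paper.
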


\begin{ex}\label{ex:latticePFA}
A concrete example satisfying the hypotheses of Corollary \ref{cor:latticePFA}
is given by Kitaev's quantum double model \cite{Kitaev} on the $2$-dimensional lattice $\bbZ^2$
for a finite group $G$. In this example one considers the matrix algebra
$\AAA(s)=\mathrm{Mat}_{\vert G\vert}(\bbC)\in\CastAlg$, 
for all $s\in\bbZ^2$, and $(H,\pi_0)\in {}^\ast\mathbf{Rep}_{\AAA}$
the GNS representation of the (unique) translation invariant ground state of the Kitaev model Hamiltonian.
Haag duality for this model, which is crucial for Corollary \ref{cor:latticePFA}, 
has been proven in \cite{FiedlerNaaijkens} for Abelian $G$ and more recently in \cite{OgataHD}
for general $G$. Furthermore, an explicit description of the superselection
sectors of Kitaev's quantum double model is presented in \cite{Naaijkens1,FiedlerNaaijkens} 
for Abelian $G$ and in \cite{NaaijkensChapter,Bols} for general finite groups $G$.
\end{ex}

\subsection{\label{subsec:infty}An $\infty$-categorical analysis of the algebraic structures}
In this subsection we explain how from our Corollary \ref{cor:latticePFA} one can recover the results
from \cite{Naaijkens1,Naaijkens2,NaaijkensChapter,FiedlerNaaijkens,Ogata}
that superselection sectors of lattice $C^\ast$-AQFTs on $\bbZ^2$ form 
a braided monoidal $C^\ast$-category.
Our approach will also provide a generalization of these results to lattices $\bbZ^n$
of arbitrary dimension $n\in\bbZ_{\geq 1}$ and identify new algebraic structures 
which do not seem to have been observed in the literature before, 
even for the $2$-dimensional lattice $\bbZ^2$.
\sk

In order to state and prove our results, it will be convenient to use some aspects of
Lurie's theory of $\infty$-categories and $\infty$-operads \cite{LurieHA}.
This is undeniably a vast and highly technical subject, but fortunately the reader does not necessarily 
require a background on these techniques in order to follow our arguments.
Despite the fact that the target $\CastCat$ in our present scenario 
is only a $2$-categorical object, we prefer to give more general 
$\infty$-categorical arguments in order to accommodate the possibility
of higher-categorical lattice models arising in future works, 
i.e.\ models whose representation categories are higher categories or even dg-categories. 
The key facts about $\infty$-categories and 
$\infty$-operads which are used below are as follows:
\begin{itemize}
\item[(F1)] The category $\CastCat$ of $C^\ast$-categories and $\ast$-functors carries the structure
of a combinatorial simplicial symmetric monoidal model
category, see \cite{CastCat,Bunke} and also Theorem~\ref{theo:CastCatmodel} for a review.
Hence, it defines a presentably symmetric monoidal $\infty$-category $\CCastCat_\bbC$, see e.g.\ \cite{NikolausSagave}.
Since our results below do not depend on specific details of
this $\infty$-category, we will often replace
$\CCastCat_\bbC$ by an arbitrary presentably 
symmetric monoidal $\infty$-category $\V$.

\item[(F2)] The prefactorization operads $\P_{\CC^\perp}$ from Definition \ref{def:Poperad}
can be regarded, according to \cite[Example 2.1.1.21]{LurieHA}, as $\infty$-operads, 
which we denote with abuse of notation by the same symbols.
We denote by $\AAlg_{\P_{\CC^\perp}}(\V)$ the $\infty$-category of
$\V$-valued algebras over the $\infty$-operad $\P_{\CC^\perp}$, 
see e.g.\ \cite[Definition 2.1.2.7]{LurieHA}. 
In the case where $\V = \CCastCat_\bbC$, or more generally
when $\V$ is presented by a symmetric monoidal model category, 
every strict prefactorization algebra as in Definition \ref{def:PFA} defines
an object in this $\infty$-category, see e.g.\ \cite{Haugseng}.

\item[(F3)] Given any orthogonal category $\CC^\perp$ and any 
subset $W\subseteq \mathrm{Mor}(\CC)$ of morphisms, one can consider the full $\infty$-subcategory
\begin{flalign}
\AAlg_{\P_{\CC^\perp}}^{W\text{-}\mathrm{l.c.}}(\V)\,\subseteq\,\AAlg_{\P_{\CC^\perp}}(\V)
\end{flalign}
spanned, in the sense of \cite[Section 1.2.11]{LurieHTT}, by the 
 $\P_{\CC^\perp}$-algebras $\FFF : \P_{\CC^\perp}\to \V$
which send every $1$-ary operation $(f:U\to V)\in W$ to an equivalence in $\V$. 
By definition, our prefactorization algebra from Proposition~\ref{prop:PFA} yields an object
\begin{flalign}
\SSS_{(\AAA,\pi_0)}\,\in\, \AAlg_{\P_{\CC(S)^\perp}}^{\,\mathrm{l.c.}}\big(\CCastCat_\bbC\big)~,
\end{flalign}
where we use the abbreviated superscript 
$\mathrm{l.c.}:=\mathrm{Mor}(\CC(S))\text{-}\mathrm{l.c.}$ (locally constant) 
in the case where $W=\mathrm{Mor}(\CC(S))$ is the set of all morphisms.

\item[(F4)] Every orthogonal functor $F : \CC^\perp\to \DD^{\perp}$,
i.e.\ $F(f_1)\perp_\DD F(f_2)$ for all $f_1\perp_\CC f_2$, defines
a morphism $\P_{\CC^\perp}\to \P_{\DD^\perp}$ of operads, see e.g.\ \cite{BPSWcategorified},
and hence of $\infty$-operads. By precomposition, the latter 
induces a pullback $\infty$-functor
\begin{flalign}
F^\ast\,: \, \AAlg_{\P_{\DD^\perp}}(\V)~\xrightarrow{\;\quad\;}~\AAlg_{\P_{\CC^\perp}}(\V)~.
\end{flalign} 
Given further any subsets $W_\CC\subseteq \mathrm{Mor}(\CC)$ and $W_\DD\subseteq \mathrm{Mor}(\DD)$ 
such that $F(W_\CC)\subseteq W_\DD$, then this pullback $\infty$-functor restricts
to an $\infty$-functor 
\begin{flalign}
F^\ast\,: \, \AAlg_{\P_{\DD^\perp}}^{W_\DD\text{-}\mathrm{l.c.}}(\V)
~\xrightarrow{\;\quad\;}~\AAlg_{\P_{\CC^\perp}}^{W_\CC\text{-}\mathrm{l.c.}}(\V)
\end{flalign}
between the corresponding full $\infty$-subcategories of locally constant objects.

\item[(F5)] Regarding the unital associative operad $\mathsf{uAs}$ as an $\infty$-operad,
the result in \cite[Theorem 5.4.5.9]{LurieHA} identifies its $\infty$-category of algebras 
\begin{flalign}
\AAlg_{\mathsf{uAs}}(\V)\,\simeq\, \AAlg_{\P_{\Disk(\bbR^1)^\perp}}^{\,\mathrm{l.c.}}(\V)
\end{flalign}
with locally constant prefactorization algebras over 
the orthogonal category $\Disk(\bbR^1)^\perp$
of all open disks in $\bbR^1$ (i.e.\ open intervals $I\subseteq \bbR^1$)
and orthogonality relation given by disjointness of disks. Consequently, our strict 
unital associative algebra in prefactorization algebras from Theorem \ref{theo:PFAinMonCat} 
defines an object
\begin{flalign}
\SSS_{(\AAA,\pi_0)}\,\in\, \AAlg_{\P_{\Disk(\bbR^1)^\perp}}^{\,\mathrm{l.c.}}\Big(
\AAlg_{\P_{\CC(S)^\perp}}^{\,\mathrm{l.c.}}\big(\CCastCat_\bbC\big)\Big)~.
\end{flalign}
\end{itemize}

As a first step towards identifying the algebraic structure underlying the prefactorization 
algebra from Corollary \ref{cor:latticePFA}, we apply the construction in (F4)
to the diagram of orthogonal functors
\begin{equation}\label{eqn:latticetosphereorthogonal}
\begin{tikzcd}
\Cone(\bbZ^n)^\perp \ar[rr, hookleftarrow] && \Cone_0^{}(\bbZ^n)^\perp \ar[rr,"(-)\cap\, \bbS^{n-1}"] && \Disk(\bbS^{n-1})^\perp
\end{tikzcd}~.
\end{equation}
Here $\Cone_0^{}(\bbZ^n)^\perp \subseteq \Cone(\bbZ^n)^\perp$ denotes the full orthogonal subcategory
of cone-shaped subsets $U = C_{(0,t,\alpha)}\cap \bbZ^n\subseteq \bbZ^n$ which
are represented by open cones in $\bbR^n$ whose apex is the origin $0\in\bbR^n$.
(Note that the representing open cone in $\bbR^n$ is unique in this case,
i.e.\ $C_{(0,t,\alpha)}\cap \bbZ^n = C_{(0,t^\prime,\alpha^\prime)}\cap \bbZ^n$
if and only if $t^\prime = t$ and $\alpha^\prime=\alpha$.) We further denote
by $\Disk(\bbS^{n-1})^\perp$ the orthogonal category of open disks in the ${(n{-}1)}$-dimensional
sphere $\bbS^{n-1}$, i.e.\ objects are all open subsets $U\subseteq \bbS^{n-1}$ which are diffeomorphic
$U\cong \bbR^{n-1}$ to the ${(n{-}1)}$-dimensional Euclidean space, morphisms are subset inclusions
and the orthogonality relation is given by disjointness of disks.
The orthogonal functor $(-)\cap\, \bbS^{n-1} : \Cone_0^{}(\bbZ^n)^\perp\to  \Disk(\bbS^{n-1})^\perp$
sends each object $U = C_{(0,t,\alpha)}\cap \bbZ^n\in \Cone_0^{}(\bbZ^n)^\perp$
to the intersection $C_{(0,t,\alpha)}\cap \bbS^{n-1}\in \Disk(\bbS^{n-1})^\perp$
of its (unique) representing open cone in $\bbR^n$ with the unit sphere $\bbS^{n-1}\subseteq\bbR^n$.
\begin{propo}\label{prop:latticetospherePFA}
The orthogonal functors in \eqref{eqn:latticetosphereorthogonal} induce via
pullback as in (F4) a diagram of $\infty$-functors
\begin{equation}\label{eqn:latticetospherePFA1}
\begin{tikzcd}
\AAlg_{\P_{\Cone(\bbZ^n)^\perp}}^{\,\mathrm{l.c.}}(\V) \ar[r]&
\AAlg_{\P_{\Cone_0^{}(\bbZ^n)^\perp}}^{\,\mathrm{l.c.}}(\V) &\ar[l,"\sim"'] 
\AAlg_{\P_{\Disk(\bbS^{n-1})^\perp}}^{\,\mathrm{l.c.}}(\V)
\end{tikzcd}
\end{equation}
between the corresponding $\infty$-categories of locally constant prefactorization algebras.
The left-pointing $\infty$-functor is an equivalence of $\infty$-categories,
hence one obtains an $\infty$-functor
\begin{equation}\label{eqn:latticetospherePFA2}
\begin{tikzcd}
\AAlg_{\P_{\Cone(\bbZ^n)^\perp}}^{\,\mathrm{l.c.}}(\V) \ar[r] &
\AAlg_{\P_{\Disk(\bbS^{n-1})^\perp}}^{\,\mathrm{l.c.}}(\V)
\end{tikzcd}
\end{equation}
from locally constant prefactorization algebras over $\Cone(\bbZ^n)^\perp$
to locally constant prefactorization algebras over $\Disk(\bbS^{n-1})^\perp$.
\end{propo}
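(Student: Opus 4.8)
The plan is to check that the two orthogonal functors in \eqref{eqn:latticetosphereorthogonal} are genuine orthogonal functors preserving all morphisms, so that the pullback construction (F4) applies, and then to concentrate the entire argument on showing that the pullback along $\kappa := (-)\cap\,\bbS^{n-1}$ is an equivalence. Write $\iota : \Cone_0^{}(\bbZ^n)^\perp \hookrightarrow \Cone(\bbZ^n)^\perp$ for the full orthogonal inclusion. Then $\iota$ is fully faithful, and $\kappa$ sends a disjoint pair of apex-$0$ cones to a disjoint pair of caps: disjointness of two apex-$0$ cones in $\bbZ^n$ is equivalent to disjointness of the associated open cones in $\bbR^n$ (any nonempty open cone meets $\bbZ^n$, since it contains a rational ray and hence lattice points), which in turn is equivalent to disjointness of the two caps on $\bbS^{n-1}$. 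Thus $\iota$ and $\kappa$ are orthogonal functors carrying morphisms to morphisms, and by (F4) they induce the pullback $\infty$-functors in \eqref{eqn:latticetospherePFA1}, each restricting to the full subcategories of locally constant objects. The left-hand functor $\iota^\ast$ is restriction along a full inclusion and is not asserted to be an equivalence; the whole claim reduces to the right-hand functor $\kappa^\ast$.

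First I would factor $\kappa$ through round caps. Let $\mathbf{Cap}(\bbS^{n-1})^\perp \subseteq \Disk(\bbS^{n-1})^\perp$ be the full orthogonal subcategory whose objects are the open geodesic caps $C_{(0,t,\alpha)}\cap\bbS^{n-1}$. The parametrization rigidity recorded after \eqref{eqn:cone} (an apex-$0$ cone is determined by $(t,\alpha)$, and the lattice cone $C_{(0,t,\alpha)}\cap\bbZ^n$ already determines $(t,\alpha)$) shows that $\kappa$ factors as $\Cone_0^{}(\bbZ^n)^\perp \xrightarrow{\ \bar\kappa\ } \mathbf{Cap}(\bbS^{n-1})^\perp \xhookrightarrow{\ j\ } \Disk(\bbS^{n-1})^\perp$ with $\bar\kappa$ a bijection on objects. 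Since inclusions and disjointness of apex-$0$ lattice cones correspond exactly to inclusions and disjointness of the associated caps, $\bar\kappa$ is an isomorphism of orthogonal categories, so $\bar\kappa^\ast$ is an equivalence. As $\kappa^\ast = \bar\kappa^\ast\circ j^\ast$, the proposition is reduced to proving that $j^\ast$ is an equivalence on locally constant prefactorization algebras.

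The heart of the matter, and the main obstacle, is this last reduction: the round geodesic caps form a basis of the topology of $\bbS^{n-1}$ whose disjointness relation is the restriction of disk-disjointness, and one must show that restricting locally constant prefactorization algebras along such a basis of disks is an equivalence. I would construct a quasi-inverse as the operadic left Kan extension $j_!$ along $j$, which exists using the presentability of $\V$ from (F1). The essential computation is that, for any open disk $D\subseteq\bbS^{n-1}$, the value $(j_!A)(D)$ is computed by the factorization homology $\int_D$ of the underlying disk-algebra datum of $A$; since $D\cong\bbR^{n-1}$ and $A$ is locally constant, Lurie's identification of the factorization homology of a disk with the underlying object \cite{LurieHA} (see also \cite{AyalaFrancis}) shows that $(j_!A)(D)$ recovers the underlying disk-algebra object, i.e.\ the common value of $A$ on caps (all such being identified by the local-constancy equivalences). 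Hence $j_!A$ is again locally constant and $j^\ast j_! A\simeq A$. I expect this step to be delicate precisely because a non-round disk $D$ need not be exhausted by a nested family of caps — the poset of caps contained in a ``dumbbell'' $D$ is not filtered — so the correct object is the full factorization-homology colimit over the $\infty$-category of finite disjoint unions of caps inside $D$, not the naive colimit over single caps; this is exactly where the $\infty$-operadic input of \cite{LurieHA} is indispensable.

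Finally, the reverse identity $j_! j^\ast\simeq\id$ on locally constant algebras follows from the same computation applied object-wise (on each cap $c$ one has $\int_c\simeq A(c)$), yielding full faithfulness of $j^\ast$ together with the essential surjectivity obtained above. Combining the two reductions, $\kappa^\ast = \bar\kappa^\ast\circ j^\ast$ is an equivalence, which is the left-pointing equivalence in \eqref{eqn:latticetospherePFA1}; composing $\iota^\ast$ with a chosen quasi-inverse of $\kappa^\ast$ then produces the $\infty$-functor \eqref{eqn:latticetospherePFA2}, as claimed.
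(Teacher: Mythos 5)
Your proposal is correct in outline and performs the same first reduction as the paper: both arguments observe that $(-)\cap\,\bbS^{n-1}$ factors as an isomorphism of orthogonal categories onto the round caps (your $\mathbf{Cap}(\bbS^{n-1})^\perp$ is the paper's $\Disk_{\mathrm{eucl}}(\bbS^{n-1})^\perp$) followed by the full orthogonal inclusion $j$ into all disks, so that everything hinges on $j^\ast$ being an equivalence on locally constant algebras. For that last step you take a genuinely different route. The paper compares both sides to $\AAlg_{\mathbb{E}_{\bbS^{n-1}}}(\V)$ via a commutative triangle: the arm landing in all disks is an equivalence by \cite[Theorem 5.4.5.9]{LurieHA}, and the arm landing in round caps is shown to be one by re-running that proof, for which the paper isolates precisely two geometric inputs (cofilteredness of the poset of caps containing a given point, and of the poset of tuples of disjoint caps in a given cap containing finitely many prescribed points); two-out-of-three then finishes. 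You instead construct an explicit quasi-inverse as the operadic left Kan extension $j_!$, identified with factorization homology computed from the factorizing basis of caps --- the Costello--Gwilliam/Karlsson--Scheimbauer--Walde extension-from-a-basis strategy, which the paper itself invokes elsewhere (footnote on \cite[Proposition 4.18]{Karlsson-Scheimbauer-Walde}). Your route buys an explicit formula for the inverse, but be aware that the step you correctly flag as delicate --- that the Weiss-type colimit over tuples of caps inside a non-round disk $D$ collapses to the value on a single cap --- is not a formal consequence of ``factorization homology of a disk is the underlying object'' (Lurie proves that for the full disk category); it needs exactly the contractibility/cofinality input that the paper's two cofilteredness facts supply, so the geometric content is relocated rather than removed. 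One further small imprecision: in your last paragraph the counit $j_!j^\ast B\to B$ must be checked on an arbitrary disk $D$, where it reduces to the same colimit computation, not merely ``object-wise on each cap''; as stated, your argument there only re-proves the unit equivalence.
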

\begin{proof}
The image of the orthogonal functor  
$(-)\cap \bbS^{n-1} : \Cone_0^{}(\bbZ^n)^\perp\to  \Disk(\bbS^{n-1})^\perp$
is given by the full orthogonal subcategory 
$\Disk_{\mathrm{eucl}}(\bbS^{n-1})^\perp\subseteq \Disk(\bbS^{n-1})^\perp$ 
of all non-dense Euclidean open disks in $\bbS^{n-1}$, 
i.e.\ open subsets of the form $U_{(t,\alpha)} = 
\big\{x\in\bbS^{n-1}\,:\, \mathrm{angle}(x,t)<\alpha\big\}\subseteq \bbS^{n-1}$
for some $t\in\bbS^{n-1}$ and $\alpha\in(0,\pi)$.
This induces a factorization of the orthogonal functor
\begin{subequations}
\begin{equation}
\begin{tikzcd}
(-)\cap\, \bbS^{n-1} \,:\, \Cone_0^{}(\bbZ^n)^\perp \ar[r,"\cong"] &
\Disk_{\mathrm{eucl}}(\bbS^{n-1})^\perp \ar[r, hookrightarrow] &
\Disk(\bbS^{n-1})^\perp
\end{tikzcd}
\end{equation}
into an isomorphism of orthogonal categories followed by a full orthogonal subcategory inclusion.
Passing to the $\infty$-categories of locally constant prefactorization algebras via pullback,
we obtain
\begin{equation}\label{eqn:EuclDiskchain}
\begin{tikzcd}
\AAlg_{\P_{\Disk(\bbS^{n-1})^\perp}}^{\,\mathrm{l.c.}}(\V) \ar[r] &
\AAlg_{\P_{\Disk_{\mathrm{eucl}}(\bbS^{n-1})^\perp}}^{\,\mathrm{l.c.}}(\V)  \ar[r,"\sim"] &
\AAlg_{\P_{\Cone_0^{}(\bbZ^n)^\perp}}^{\,\mathrm{l.c.}}(\V)
\end{tikzcd}~,
\end{equation}
\end{subequations}
where the second arrow is an equivalence of $\infty$-categories because the underlying orthogonal functor
is an isomorphism of orthogonal categories. To show that the first arrow is an equivalence of 
$\infty$-categories, we use \cite[Definition 5.4.5.1 and Theorem 5.4.5.9]{LurieHA} 
to obtain the commutative diagram
\begin{equation}\label{eqn:comdiagramEsphere}
\begin{tikzcd}
\AAlg_{\P_{\Disk(\bbS^{n-1})^\perp}}^{\,\mathrm{l.c.}}(\V) \ar[rr] && 
\AAlg_{\P_{\Disk_{\mathrm{eucl}}(\bbS^{n-1})^\perp}}^{\,\mathrm{l.c.}}(\V)\\
& \ar[lu,"\sim"' sloped] \AAlg_{\mathbb{E}_{\bbS^{n-1}}}(\V)\ar[ru,"\sim"' sloped]&
\end{tikzcd}~.
\end{equation}
It is shown in \cite[Theorem 5.4.5.9]{LurieHA} that the 
upward-left pointing arrow is an equivalence. 
To prove that the upward-right pointing arrow is an equivalence too, 
one mimicks the proof of \cite[Theorem 5.4.5.9]{LurieHA} while taking into account the following basic 
geometric facts about disks in $\Disk_{\mathrm{eucl}}(\bbS^{n-1})$: 
\begin{enumerate}[1.)]
\item For every point $x\in\bbS^{n-1}$, the poset 
\begin{flalign}
\Big\{U\in \Disk_{\mathrm{eucl}}(\bbS^{n-1})\,:\,x\in U\Big\}
\end{flalign}
of Euclidean open disks containing $x$ is cofiltered. 
\item For every $U\in \Disk_{\mathrm{eucl}}(\bbS^{n-1})$
and every finite family $x_1,\dots,x_m\in U$ of distinct points, i.e.\ $x_i\neq x_j$ for $i\neq j$, the poset 
\begin{flalign}
\Big\{\und{V}\in \Disk_{\mathrm{eucl}}(\bbS^{n-1})^{\times m}\,:\,(\und{V}\to U)\in\P_{\Disk_{\mathrm{eucl}}(\bbS^{n-1})^\perp}\text{ and } x_i\,\in\,V_i~\forall i \Big\}
\end{flalign}
of tuples of mutually disjoint Euclidean open disks in $U$ containing the $x_i$ is cofiltered.
\end{enumerate}
It then follows that the horizontal arrow in \eqref{eqn:comdiagramEsphere}, and hence
the first arrow in \eqref{eqn:EuclDiskchain}, is an equivalence too.
The $\infty$-functor in \eqref{eqn:latticetospherePFA2} is obtained by choosing any quasi-inverse
for the equivalence of $\infty$-categories given by the left-pointing $\infty$-functor 
in \eqref{eqn:latticetospherePFA1}.
\end{proof}

\begin{rem}
In simpler words, the result of Proposition \ref{prop:latticetospherePFA}
states that every locally constant prefactorization algebra over
$\Cone(\bbZ^n)^\perp$ has an underlying locally constant prefactorization algebra
over $\Disk(\bbS^{n-1})^\perp$ which is obtained by applying the 
$\infty$-functor \eqref{eqn:latticetospherePFA2}. In particular,
forgetting for the moment the unital associative algebra structure
in Corollary \ref{cor:latticePFA}, we can assign to our
locally constant prefactorization algebra
of localized superselection sectors $\SSS_{(\AAA_,\pi_0)} \in 
\AAlg_{\P_{\Cone(\bbZ^n)^\perp}}^{\,\mathrm{l.c.}}\big(\CCastCat_\bbC\big) $
over $\Cone(\bbZ^n)^\perp$ a locally constant prefactorization algebra over $\Disk(\bbS^{n-1})^\perp$.
Note that this passage from cone-shaped subsets in $\bbZ^n$ to open disks in $\bbS^{n-1}$ might be
forgetful because the right-pointing $\infty$-functor in \eqref{eqn:latticetospherePFA1}, and hence
also the composite $\infty$-functor in \eqref{eqn:latticetospherePFA2}, are a priori not equivalences.
Answering the question of whether or not this construction is forgetful is conceptually 
interesting, but not needed here to 
link our result in Corollary \ref{cor:latticePFA} to the earlier works in
\cite{Naaijkens1,Naaijkens2,NaaijkensChapter,FiedlerNaaijkens,Ogata}. 
\end{rem}

The second step towards identifying the algebraic structure underlying the 
prefactorization algebra from Corollary \ref{cor:latticePFA} is concerned
with the additional unital associative algebra structure. For this we use
item (F5) from above and results about additivity of locally constant
prefactorization algebras.
\begin{propo}\label{prop:latticetospherePFAuAs}
The $\infty$-functor in \eqref{eqn:latticetospherePFA2} induces an $\infty$-functor
\begin{flalign}\label{eqn:latticetospherePFAuAs}
\AAlg_{\P_{\Disk(\bbR^1)^\perp}}^{\,\mathrm{l.c.}}\big(
\AAlg_{\P_{\Cone(\bbZ^n)^\perp}}^{\,\mathrm{l.c.}}(\V)\big)  ~\xrightarrow{\;\quad\;}~
\AAlg_{\P_{\Disk(\bbR^1)^\perp}}^{\,\mathrm{l.c.}}\big(
\AAlg_{\P_{\Disk(\bbS^{n-1})^\perp}}^{\,\mathrm{l.c.}}(\V)\big)\,\simeq\, 
\AAlg_{\P_{\Disk(\bbR^1\times \bbS^{n-1})^\perp}}^{\,\mathrm{l.c.}}(\V)
\end{flalign}
to the $\infty$-category of locally constant prefactorization algebras over 
$\Disk(\bbR^1\times \bbS^{n-1})^\perp$, i.e.\ the orthogonal category
of open disks in the cylinder $\bbR^1\times\bbS^{n-1}$.
\end{propo}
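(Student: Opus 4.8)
The plan is to establish the statement in two independent pieces: the construction of the left-hand $\infty$-functor in \eqref{eqn:latticetospherePFAuAs}, and the terminal equivalence of $\infty$-categories. For the first piece, I would note that all the $\infty$-categories of locally constant prefactorization algebras appearing here carry an object-wise (pointwise) presentably symmetric monoidal structure, inherited from $\V$ exactly as in the object-wise structure used in Theorem~\ref{theo:PFAinMonCat}, and that the locally constant full $\infty$-subcategories are closed under it since a tensor product of equivalences is an equivalence. The $\infty$-functor \eqref{eqn:latticetospherePFA2} is symmetric monoidal for these structures: it is assembled in Proposition~\ref{prop:latticetospherePFA} from pullback $\infty$-functors $F^\ast$ along orthogonal functors (which are strictly symmetric monoidal for the object-wise tensor products, since $F^\ast\FFF = \FFF\circ F$) together with a quasi-inverse of the symmetric monoidal equivalence given by the left-pointing arrow in \eqref{eqn:latticetospherePFA1}, and a quasi-inverse of a symmetric monoidal equivalence is again symmetric monoidal. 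Applying $\AAlg_{\P_{\Disk(\bbR^1)^\perp}}^{\,\mathrm{l.c.}}(-)$ to this symmetric monoidal $\infty$-functor then produces the desired left arrow, which preserves local constancy because symmetric monoidal $\infty$-functors preserve equivalences.

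For the terminal equivalence I would proceed through a chain of identifications. Writing $\mathcal{W}:=\AAlg_{\P_{\Disk(\bbS^{n-1})^\perp}}^{\,\mathrm{l.c.}}(\V)$, which is again presentably symmetric monoidal for the object-wise structure, item (F5) gives $\AAlg_{\P_{\Disk(\bbR^1)^\perp}}^{\,\mathrm{l.c.}}(\mathcal{W})\simeq \AAlg_{\mathsf{uAs}}(\mathcal{W})\simeq \AAlg_{\mathbb{E}_{\bbR^1}}(\mathcal{W})$. The equivalence $\mathcal{W}\simeq \AAlg_{\mathbb{E}_{\bbS^{n-1}}}(\V)$ established in the proof of Proposition~\ref{prop:latticetospherePFA} through \cite[Theorem~5.4.5.9]{LurieHA} is symmetric monoidal, so it transports unital associative algebras and yields $\AAlg_{\mathbb{E}_{\bbR^1}}(\mathcal{W})\simeq \AAlg_{\mathbb{E}_{\bbR^1}}\big(\AAlg_{\mathbb{E}_{\bbS^{n-1}}}(\V)\big)$. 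I would then invoke the manifold-level generalization of Dunn--Lurie additivity, in its iterated-algebra form $\AAlg_{\mathbb{E}_M}\big(\AAlg_{\mathbb{E}_N}(\V)\big)\simeq \AAlg_{\mathbb{E}_{M\times N}}(\V)$, specialized to $M=\bbR^1$ and $N=\bbS^{n-1}$, and finally re-apply \cite[Theorem~5.4.5.9]{LurieHA} to identify $\AAlg_{\mathbb{E}_{\bbR^1\times \bbS^{n-1}}}(\V)\simeq \AAlg_{\P_{\Disk(\bbR^1\times \bbS^{n-1})^\perp}}^{\,\mathrm{l.c.}}(\V)$. Composing the left arrow with this equivalence gives \eqref{eqn:latticetospherePFAuAs}.

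The hard part will be the additivity step, since it is the only input that genuinely mixes the two factors. Geometrically it is a Fubini-type pushforward statement for locally constant factorization algebras along the trivial fibration $\bbR^1\times \bbS^{n-1}\to \bbS^{n-1}$: the $\bbR^1$-direction transverse to the fibers contributes the $\mathbb{E}_1\simeq\mathsf{uAs}$-structure, precisely the role played by the separate $\Disk(\bbR^1)^\perp$-layer on the source side. Rather than citing additivity as a black box, I would prove the equivalence $\AAlg_{\P_{\Disk(\bbR^1)^\perp}}^{\,\mathrm{l.c.}}(\mathcal{W})\simeq \AAlg_{\P_{\Disk(\bbR^1\times \bbS^{n-1})^\perp}}^{\,\mathrm{l.c.}}(\V)$ directly by mimicking the argument of \cite[Theorem~5.4.5.9]{LurieHA}, now using the product orthogonal functor $\Disk(\bbR^1)^\perp\times \Disk(\bbS^{n-1})^\perp\to \Disk(\bbR^1\times \bbS^{n-1})^\perp$, $(I,D)\mapsto I\times D$. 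The two geometric inputs needed are that such product disks $I\times D$ form a basis of $\Disk(\bbR^1\times \bbS^{n-1})$ and that, for any point (resp.\ any finite family of distinct points) inside a product disk, the poset of product disks containing it (resp.\ of tuples of mutually disjoint product disks containing the chosen points) is cofiltered — the direct analogues of the two cofilteredness facts recorded in the proof of Proposition~\ref{prop:latticetospherePFA}, here obtained by combining the corresponding facts for open intervals in $\bbR^1$ and Euclidean disks in $\bbS^{n-1}$ coordinate-wise. Verifying this cofinality and cofilteredness, and checking that the resulting comparison is compatible with the $\mathsf{uAs}$-layer, is where the real work lies.
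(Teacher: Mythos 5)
Your proposal is correct and follows essentially the same route as the paper: the first arrow is obtained from the symmetric monoidality of the $\infty$-functor \eqref{eqn:latticetospherePFA2} (inherited from the pullback $\infty$-functors in \eqref{eqn:latticetospherePFA1}), and the terminal equivalence is the chain $\AAlg_{\P_{\Disk(\bbR^1)^\perp}}^{\,\mathrm{l.c.}}\big(\AAlg_{\P_{\Disk(\bbS^{n-1})^\perp}}^{\,\mathrm{l.c.}}(\V)\big)\simeq \AAlg_{\mathbb{E}_{\bbR^1}}\big(\AAlg_{\mathbb{E}_{\bbS^{n-1}}}(\V)\big)\simeq \AAlg_{\mathbb{E}_{\bbR^1\times\bbS^{n-1}}}(\V)\simeq \AAlg_{\P_{\Disk(\bbR^1\times\bbS^{n-1})^\perp}}^{\,\mathrm{l.c.}}(\V)$ via \cite[Theorem 5.4.5.9]{LurieHA} and additivity. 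The only divergence is that you offer to re-prove the additivity step by hand via product disks, whereas the paper simply cites \cite[Example 5.4.5.5]{LurieHA}, which suffices.
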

\begin{proof}
Both pullback $\infty$-functors in \eqref{eqn:latticetospherePFA1} are symmetric monoidal
$\infty$-functors, hence so is \eqref{eqn:latticetospherePFA2}. 
This implies that the first arrow in \eqref{eqn:latticetospherePFAuAs} is well-defined.
The equivalence in the second step of \eqref{eqn:latticetospherePFAuAs} 
follows from additivity of locally constant prefactorization algebras in the form
of \cite[Example 5.4.5.5]{LurieHA}. More explicitly, we have a chain of equivalences
of $\infty$-categories
\begin{flalign}
\nn \AAlg_{\P_{\Disk(\bbR^1)^\perp}}^{\,\mathrm{l.c.}}\big(
\AAlg_{\P_{\Disk(\bbS^{n-1})^\perp}}^{\,\mathrm{l.c.}}(\V)\big)
\,&\simeq\, \AAlg_{\mathbb{E}_{\bbR^1}}\big(\AAlg_{\mathbb{E}_{\bbS^{n-1}}}(\V)\big)\\
\,&\simeq\,\AAlg_{\mathbb{E}_{\bbR^1\times \bbS^{n-1}}}(\V)
\,\simeq\,
\AAlg_{\P_{\Disk(\bbR^1\times \bbS^{n-1})^\perp}}^{\,\mathrm{l.c.}}(\V)~,
\end{flalign}
where the first and last equivalence are from \cite[Theorem 5.4.5.9]{LurieHA}
and the middle equivalence is from \cite[Example 5.4.5.5]{LurieHA}.
\end{proof}

Applying this construction to Corollary \ref{cor:latticePFA}, we obtain the following result.
\begin{cor}\label{cor:latticePFAcylinder}
Let $\AAA = \{\AAA(x)\}_{x\in \bbZ^n}\in\CastAlg^{\bbZ^n}$ be any family of $C^\ast$-algebras
which is indexed by the lattice $\bbZ^n$ and let $(H,\pi_0)\in {}^\ast\mathbf{Rep}_{\AAA}$
be any faithful $\ast$-representation such that the extended $\overline{\AAA}\in \CastAQFT(\bbZ^n)$ 
in \eqref{eqn:Aoverline} satisfies Haag duality, for all $U\in \Cone(\bbZ^n)$. 
Then the locally constant prefactorization algebra of localized superselection sectors
\begin{flalign}
\SSS_{(\AAA,\pi_0)}\,\in\, \AAlg_{\P_{\Disk(\bbR^1)^\perp}}^{\,\mathrm{l.c.}}\big(
\AAlg_{\P_{\Cone(\bbZ^n)^\perp}}^{\,\mathrm{l.c.}}\big(\CCastCat_\bbC\big)\big) 
\end{flalign}
from Corollary \ref{cor:latticePFA} has an underlying locally constant prefactorization algebra
\begin{flalign}
\und{\SSS}_{(\AAA,\pi_0)}\,\in\, 
\AAlg_{\P_{\Disk(\bbR^1\times \bbS^{n-1})^\perp}}^{\,\mathrm{l.c.}}\big(\CCastCat_\bbC\big)
\end{flalign}
over the orthogonal category $\Disk(\bbR^1\times \bbS^{n-1})^\perp$
of open disks in the cylinder $\bbR^1\times\bbS^{n-1}$.
\end{cor}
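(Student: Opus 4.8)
The plan is to obtain $\und{\SSS}_{(\AAA,\pi_0)}$ simply by feeding the object $\SSS_{(\AAA,\pi_0)}$ produced by Corollary \ref{cor:latticePFA} into the $\infty$-functor \eqref{eqn:latticetospherePFAuAs} constructed in Proposition \ref{prop:latticetospherePFAuAs}, specialised to the presentably symmetric monoidal $\infty$-category $\V=\CCastCat_\bbC$. At the level of the corollary the argument is therefore formal: all of the substantive mathematical content has already been discharged in the preceding Propositions \ref{prop:latticetospherePFA} and \ref{prop:latticetospherePFAuAs}, which rest on Lurie's identification (F5) of $\mathsf{uAs}$-algebras with locally constant prefactorization algebras over $\Disk(\bbR^1)^\perp$, on the passage from cone-shaped subsets to Euclidean disks on the sphere, and on the additivity equivalence for locally constant prefactorization algebras.

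Concretely, I would first recall that, under the assumed Haag duality for all $U\in\Cone(\bbZ^n)$, Corollary \ref{cor:latticePFA} supplies $\SSS_{(\AAA,\pi_0)}$ as a unital associative algebra in the category $C^\ast\mathbf{PFA}_{\Cone(\bbZ^n)^\perp}$ whose underlying prefactorization algebra is locally constant. Using the equivalence $\AAlg_{\mathsf{uAs}}(\V)\simeq\AAlg_{\P_{\Disk(\bbR^1)^\perp}}^{\,\mathrm{l.c.}}(\V)$ from (F5), applied with $\V=\AAlg_{\P_{\Cone(\bbZ^n)^\perp}}^{\,\mathrm{l.c.}}(\CCastCat_\bbC)$, together with the fact (F3) that a strict prefactorization algebra presents an object of the associated algebra $\infty$-category, this datum is reinterpreted as
\begin{flalign}
\SSS_{(\AAA,\pi_0)}\,\in\,\AAlg_{\P_{\Disk(\bbR^1)^\perp}}^{\,\mathrm{l.c.}}\big(\AAlg_{\P_{\Cone(\bbZ^n)^\perp}}^{\,\mathrm{l.c.}}(\CCastCat_\bbC)\big)~,
\end{flalign}
which is exactly the source of the $\infty$-functor \eqref{eqn:latticetospherePFAuAs}.

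Applying that $\infty$-functor then yields the desired $\und{\SSS}_{(\AAA,\pi_0)}\in\AAlg_{\P_{\Disk(\bbR^1\times\bbS^{n-1})^\perp}}^{\,\mathrm{l.c.}}(\CCastCat_\bbC)$, with local constancy over the cylinder built into the target of Proposition \ref{prop:latticetospherePFAuAs}. I do not expect any genuine obstacle at the level of this corollary: the only thing to check is the bookkeeping that matches the output type of Corollary \ref{cor:latticePFA} with the source $\infty$-category of Proposition \ref{prop:latticetospherePFAuAs}, which is precisely the (F5)-reinterpretation recorded above. The hard work — proving that $(-)\cap\bbS^{n-1}$ induces an equivalence on locally constant prefactorization algebras and invoking $\AAlg_{\mathbb{E}_{\bbR^1}}\big(\AAlg_{\mathbb{E}_{\bbS^{n-1}}}(\V)\big)\simeq\AAlg_{\mathbb{E}_{\bbR^1\times\bbS^{n-1}}}(\V)$ — lives entirely in those two propositions, so here it may simply be cited.
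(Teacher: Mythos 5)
Your proposal is correct and matches the paper's argument exactly: the corollary is obtained by reinterpreting the output of Corollary \ref{cor:latticePFA} via (F5) as an object of $\AAlg_{\P_{\Disk(\bbR^1)^\perp}}^{\,\mathrm{l.c.}}\big(\AAlg_{\P_{\Cone(\bbZ^n)^\perp}}^{\,\mathrm{l.c.}}(\CCastCat_\bbC)\big)$ and then applying the $\infty$-functor of Proposition \ref{prop:latticetospherePFAuAs}. The paper likewise treats this as an immediate consequence with all substantive work residing in the preceding propositions.
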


To relate this result to \cite{Naaijkens1,Naaijkens2,NaaijkensChapter,FiedlerNaaijkens,Ogata},
we observe that there exists an open embedding $\bbR^n\to \bbR^1\times\bbS^{n-1}$ which is determined
by removing one point of the sphere $\bbR^1\times (\bbS^{n-1}\setminus\pt)\cong \bbR^n$. 
This induces an orthogonal functor
\begin{flalign}
\Disk(\bbR^n)^\perp ~\xrightarrow{\;\quad\;}~ \Disk(\bbR^1\times \bbS^{n-1})^\perp~,
\end{flalign}
and hence via (F4) a pullback $\infty$-functor between the corresponding
$\infty$-categories of locally constant prefactorization algebras. Using
further the identification between locally constant $\P_{\Disk(\bbR^n)^\perp}$-algebras and 
$\mathbb{E}_n$-algebras from \cite[Theorem 5.4.5.9]{LurieHA}, we obtain
an $\infty$-functor
\begin{flalign}\label{eqn:forgetEn}
\AAlg_{\P_{\Disk(\bbR^1\times \bbS^{n-1})^\perp}}^{\,\mathrm{l.c.}}(\V)
~\xrightarrow{\;\quad\;}~
\AAlg_{\P_{\Disk(\bbR^n)^\perp}}^{\,\mathrm{l.c.}}(\V)\,\simeq\,\AAlg_{\mathbb{E}_n}(\V)
\end{flalign}
to the $\infty$-category of $\mathbb{E}_n$-algebras in $\V$.
\begin{cor}\label{cor:latticeEn}
The locally constant prefactorization algebra 
from Corollary \ref{cor:latticePFAcylinder} has 
an underlying $\mathbb{E}_n$-monoidal $C^\ast$-category
\begin{flalign}
\und{\und{\SSS}}_{(\AAA,\pi_0)}\,\in\,
\AAlg_{\mathbb{E}_n}\big(\CCastCat_\bbC\big)
\end{flalign}
which is obtained by applying the $\infty$-functor \eqref{eqn:forgetEn}.
More concretely, this yields
a monoidal $C^\ast$-category for the $1$-dimensional lattice $\bbZ^1$,
a braided monoidal $C^\ast$-category for the $2$-dimensional lattice $\bbZ^2$,
and a symmetric monoidal $C^\ast$-category for the $(n\geq 3)$-dimensional lattice $\bbZ^{n}$.
\end{cor}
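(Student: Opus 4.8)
The plan is to read off $\und{\und{\SSS}}_{(\AAA,\pi_0)}$ as the image of the locally constant prefactorization algebra $\und{\SSS}_{(\AAA,\pi_0)}$ from Corollary \ref{cor:latticePFAcylinder} under the $\infty$-functor \eqref{eqn:forgetEn}. Since \eqref{eqn:forgetEn} is built by pulling back along the orthogonal functor $\Disk(\bbR^n)^\perp \to \Disk(\bbR^1\times\bbS^{n-1})^\perp$ and then invoking the equivalence $\AAlg_{\P_{\Disk(\bbR^n)^\perp}}^{\,\mathrm{l.c.}}(\V)\simeq \AAlg_{\mathbb{E}_n}(\V)$ of \cite[Theorem 5.4.5.9]{LurieHA}, the resulting object automatically lands in $\AAlg_{\mathbb{E}_n}(\CCastCat_\bbC)$. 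This already proves the first assertion; no further work beyond Corollary \ref{cor:latticePFAcylinder} and the preceding construction is needed.

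For the second, concrete assertion I would unwind what an $\mathbb{E}_n$-algebra amounts to in the specific target $\CCastCat_\bbC$. The key input is that the $\infty$-category $\CCastCat_\bbC$ presented by the model structure of Theorem \ref{theo:CastCatmodel} is a $(2,1)$-category: its mapping spaces are $1$-groupoids, with $\pi_0$ recording unitary-equivalence classes of $\ast$-functors and $\pi_1$ recording unitary natural isomorphisms, while all higher homotopy vanishes. Consequently, every coherence datum of an $\mathbb{E}_n$-structure factors through the $1$-truncation of the operad spaces $\mathbb{E}_n(m)\simeq\mathrm{Conf}_m(\bbR^n)$. Focusing on the binary operations, where $\mathbb{E}_n(2)\simeq\bbS^{n-1}$, one finds: for $n=1$ the space $\bbS^0$ yields two unrelated products, i.e.\ a bare monoidal structure; for $n=2$ the generator of $\pi_1(\bbS^1)\cong\bbZ$ supplies a braiding that need not square to the identity; and for $n\geq 3$ the sphere $\bbS^{n-1}$ is simply connected, so its $1$-truncation is contractible and the braiding is forced to be a symmetry. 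This reproduces exactly the monoidal, braided monoidal and symmetric monoidal structures claimed in the statement.

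The only non-formal step is the truncation argument of the second paragraph, i.e.\ the assertion that in the $(2,1)$-category $\CCastCat_\bbC$ the notions of $\mathbb{E}_n$-algebra and $\mathbb{E}_\infty$-algebra agree for all $n\geq 3$. I expect this to be the main (and essentially only) obstacle, and I would handle it by the standard stabilization reasoning: since the target has $2$-truncated mapping spaces, maps out of $\mathrm{Conf}_m(\bbR^n)$ see only $\tau_{\leq 1}\,\mathrm{Conf}_m(\bbR^n)$, and the configuration spaces $\mathrm{Conf}_m(\bbR^n)$ are simply connected for $n\geq 3$, so no structure beyond a symmetric monoidal one can be encoded. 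This is precisely the $1$-categorical degeneration flagged in the introduction, and it requires no input specific to superselection sectors.
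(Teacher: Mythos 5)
Your proposal is correct and takes essentially the same route as the paper, which states this corollary without any explicit proof: the first assertion is immediate from applying the $\infty$-functor \eqref{eqn:forgetEn} together with \cite[Theorem 5.4.5.9]{LurieHA}, and the concrete identification (monoidal/braided/symmetric) is the standard degeneration of $\mathbb{E}_n$-structures in a target with $1$-truncated mapping spaces, which the paper only flags in a footnote of the introduction. Your truncation argument, that maps out of $\mathbb{E}_n(m)\simeq\mathrm{Conf}_m(\bbR^n)$ into the $1$-truncated mapping spaces of $\CCastCat_\bbC$ only see $\tau_{\leq 1}\,\mathrm{Conf}_m(\bbR^n)$, which is trivial for $n\geq 3$, is the standard and correct way to fill in the folklore details the paper omits.
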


\begin{rem}
We would like to emphasize that the emergence of an underlying 
braided or symmetric monoidal $C^\ast$-category is \textit{not} a general 
consequence of our geometric Assumption~\ref{assu:0}, but 
it is linked to more specific features of the category $\Cone(\bbZ^n)$
of cone-shaped subsets. We expect that the geometric axioms (GA0--GA3) of~\cite{SSS},
which are stronger than our Assumption~\ref{assu:0} and also satisfied 
for cones (see \cite[Appendix A]{SSS}), could be the key to answer the following
interesting question: Under which additional conditions on the orthogonal category
of localization regions $\CC(S)^\perp$ has the locally constant prefactorization algebra
$\SSS_{(\AAA,\pi_0)}$ of superselection sectors an underlying braided or symmetric monoidal category?
\end{rem}

\begin{rem}
In applications to topological order, one of the motivations behind our work,
the category of superselection sectors typically has more structure, namely
that of a modular tensor category~\cite[Appendix E]{Kitaev06}. This richer
structure is recognized \textit{a posteriori} in concrete models. Under an
additional assumption, namely that the von Neumann algebras $\pi_0(\AAA(V))^{\prime\prime}$
for $V \in \Cone(\mathbb{Z}^d)$ are properly infinite factors, it can be shown
that $\und{\und{\SSS}}_{(\AAA,\pi_0)}$ has direct sums and subobjects (i.e., we can project
onto invariant subspaces). If one further restricts to dualizable objects,
which physically have the interpretation of conjugate charges, this implies
semi-simplicity~\cite{LongoRoberts}. The category is then modular if there are
only finitely many simple objects, and in addition the braiding is
non-degenerate. As far as we are aware, for the lattice systems we are
interested in, there are no known general criteria which imply dualizability,
finiteness of sectors, or non-degeneracy of the braiding, but these properties
can be verified in many concrete models. (Note however that in algebraic
quantum field theory~\cite{GuidoLongo} or for conformal nets~\cite{KLM}, this
question is better understood.) Since these properties do not have their origin
in the prefactorization algebra structure, we refer the interested reader to
the references cited above for details.
\end{rem}

It is important to stress that the $\infty$-functor \eqref{eqn:forgetEn} is forgetful,
which implies that the underlying $\mathbb{E}_n$-monoidal $C^\ast$-category from Corollary 
\ref{cor:latticeEn} does not capture the entire algebraic structure 
of the locally constant prefactorization algebra over $\Disk(\bbR^1\times \bbS^{n-1})^\perp$
from Corollary \ref{cor:latticePFAcylinder}. The additional algebraic structures are related
to the homotopy groups of the sphere $\bbS^{n-1}$, see e.g.\ \cite[Remark 27]{Ginot}, 
and they are in general difficult to determine concretely.
In the case of the $2$-dimensional lattice $\bbZ^2$, which is most relevant
for applications to topological order \cite{Naaijkens1,Naaijkens2,NaaijkensChapter,FiedlerNaaijkens,Ogata},
one can explicitly characterize these additional algebraic structures 
by using \cite[Corollary 4]{Ginot}.\footnote{Note that Ginot's 
result is stated only in the context of cochain complexes, 
but it is valid for factorization algebras valued in any 
presentably symmetric monoidal $\infty$-category. Indeed, the proof 
relies on extension constructions from a factorizing basis,
and such tools were developed in full generality in 
\cite[Proposition 4.18]{Karlsson-Scheimbauer-Walde}.}
\begin{cor}\label{cor:selfequivalence}
For the $2$-dimensional lattice $\bbZ^2$, the datum of the 
locally constant prefactorization algebra 
$\und{\SSS}_{(\AAA,\pi_0)}\in
\AAlg_{\P_{\Disk(\bbR^1\times \bbS^{1})^\perp}}^{\,\mathrm{l.c.}}\big(\CCastCat_\bbC\big)$
from Corollary \ref{cor:latticePFAcylinder} is equivalent 
to its underlying braided monoidal $C^\ast$-category from Corollary \ref{cor:latticeEn}
together with a self-equivalence
\begin{equation}
\begin{tikzcd}
T \,:\, \und{\und{\SSS}}_{(\AAA,\pi_0)}\ar[r,"\sim"] & \und{\und{\SSS}}_{(\AAA,\pi_0)}
\end{tikzcd}
\end{equation} 
of braided monoidal $C^\ast$-categories. 
\end{cor}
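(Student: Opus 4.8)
The plan is to isolate the angular circle $\bbS^1$ inside the cylinder $\bbR^1\times\bbS^1$ and to show that the point-removal underlying the forgetful $\infty$-functor \eqref{eqn:forgetEn} discards exactly a monodromy automorphism. Write $\V=\CCastCat_\bbC$ and recall from \cite[Theorem 5.4.5.9]{LurieHA} the identification $\AAlg_{\P_{\Disk(M)^\perp}}^{\,\mathrm{l.c.}}(\V)\simeq\AAlg_{\mathbb{E}_M}(\V)$ for a manifold $M$, so that Corollary \ref{cor:latticePFAcylinder} supplies an object $\und{\SSS}_{(\AAA,\pi_0)}\in\AAlg_{\mathbb{E}_{\bbR^1\times\bbS^1}}(\V)$. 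The first step is to apply \cite[Corollary 4]{Ginot}, in the target-general form justified via the factorizing-basis extension results of \cite[Proposition 4.18]{Karlsson-Scheimbauer-Walde}, in order to cut the circle at a basepoint. Exhibiting $\bbS^1$ as an interval with its two ends glued, this yields an equivalence of $\infty$-categories
\[
\AAlg_{\mathbb{E}_{\bbR^1\times\bbS^1}}(\V)~\simeq~\Big\{(A,\theta)~:~A\in\AAlg_{\mathbb{E}_{\bbR^1\times\bbR^1}}(\V)\,,~\theta\in\mathrm{Aut}(A)\Big\}~,
\]
where $\theta$ is the monodromy self-equivalence obtained by transporting the factorization datum once around $\bbS^1$, the factor $\bbR^1$ (the analytic line) being carried along unchanged by the cut.

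The second step is to interpret the right-hand side. Since $\bbR^1\times\bbR^1\cong\bbR^2$, another application of \cite[Theorem 5.4.5.9]{LurieHA} gives $\AAlg_{\mathbb{E}_{\bbR^1\times\bbR^1}}(\V)\simeq\AAlg_{\mathbb{E}_2}(\V)$, so $A$ is an $\mathbb{E}_2$-monoidal $C^\ast$-category, i.e.\ a braided monoidal $C^\ast$-category in the $1$-categorical truncation. Crucially, $\theta$ is an automorphism in the $\infty$-category $\AAlg_{\mathbb{E}_2}(\V)$ of $\mathbb{E}_2$-algebras, hence a self-equivalence compatible with the entire $\mathbb{E}_2$-structure, and in particular with the braiding and not merely with the underlying tensor product; in the truncation $\V=\CCastCat_\bbC$ this is precisely a self-equivalence $T$ of braided monoidal $C^\ast$-categories. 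It then remains to identify the underlying $\mathbb{E}_2$-algebra $A$ with $\und{\und{\SSS}}_{(\AAA,\pi_0)}$ of Corollary \ref{cor:latticeEn}: under the equivalence above, $\und{\SSS}_{(\AAA,\pi_0)}$ is sent to the pair whose first component is its restriction along the open embedding $\bbR^1\times(\bbS^1\setminus\pt)\cong\bbR^2\hookrightarrow\bbR^1\times\bbS^1$ that removes the cut point. This restriction is exactly the one implementing the forgetful $\infty$-functor \eqref{eqn:forgetEn}, whence $A\simeq\und{\und{\SSS}}_{(\AAA,\pi_0)}$ and the pair is $(\und{\und{\SSS}}_{(\AAA,\pi_0)},T)$, as asserted.

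I expect the main obstacle to lie in the first step: one must know that the monodromy $\theta$ is a genuine \emph{automorphism} of the $\mathbb{E}_2$-algebra $A$, rather than the a priori weaker datum of an invertible $A$-bimodule (a Picard element) that a naive ``local system on $\bbS^1$'' heuristic would produce. This is exactly what \cite[Corollary 4]{Ginot} secures, the framing of the angular circle $\bbS^1$ trivializing the potential ambiguity and forcing the gluing datum to be a self-equivalence. The only residual technical point is to confirm that the target-generalization applies to the presentably symmetric monoidal $\infty$-category attached to $\CCastCat_\bbC$ via Theorem \ref{theo:CastCatmodel}, which uses nothing beyond its presentability and is therefore routine.
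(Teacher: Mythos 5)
Your proposal is correct and follows essentially the same route as the paper: the paper likewise obtains the statement by applying Ginot's Corollary 4 (generalized to an arbitrary presentably symmetric monoidal $\infty$-category target via the factorizing-basis extension results of Karlsson--Scheimbauer--Walde) to cut the angular circle, and identifies the resulting $\mathbb{E}_2$-algebra with the one from Corollary \ref{cor:latticeEn} via restriction along $\bbR^1\times(\bbS^1\setminus\pt)\cong\bbR^2$. Your additional remarks on why the monodromy is a genuine $\mathbb{E}_2$-automorphism rather than a weaker bimodule datum are a sound gloss on what the cited result secures.
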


\begin{rem}
To the best of our knowledge, the additional structure of a self-equivalence
on the braided monoidal $C^\ast$-categories of superselection sectors in cone-shaped
subsets of $\bbZ^2$ has not yet been observed in the literature. We will explain in the
next subsection how to compute it in terms of 
a kind of holonomy which arises by rotating cones around their apex. 
\sk

It is important to stress that this self-equivalence is of a 
topological origin, rooted in the fact that our prefactorization algebra
is intrinsically defined on the category of cone-shaped subsets of $\bbZ^2$, 
hence it can not be detected when one `punctures the circle' or `introduces a forbidden direction',
as usually done in the literature \cite{BuchholzFredenhagen,Naaijkens1,Ogata},
with a notable exception given by the recent work \cite{SSS}. 
In this restricted context, one has only access to the $\mathbb{E}_2$-monoidal 
$C^\ast$-category $\und{\und{\SSS}}_{(\AAA,\pi_0)}\in \AAlg_{\mathbb{E}_2}\big(\CCastCat_\bbC\big)$ 
from Corollary \ref{cor:latticeEn}, but one loses the datum of the
self-equivalence $T$ from Corollary \ref{cor:selfequivalence}. 
Once this datum is lost, it clearly cannot be restored through universal constructions or `tricks',
such as taking the factorization homology $\int_{\bbR^1\times \bbS^1}\und{\und{\SSS}}_{(\AAA,\pi_0)}$ 
on the cylinder (endowed with a choice of framing) 
of the underlying $\mathbb{E}_2$-monoidal $C^\ast$-category, see e.g.\ \cite{AyalaFrancis}
and also \cite{FHTopOrder} for the context of topological order. 
Indeed, the locally constant prefactorization algebra 
$\int_{\bbR^1\times \bbS^1}\und{\und{\SSS}}_{(\AAA,\pi_0)}\in 
\AAlg_{\P_{\Disk(\bbR^1\times \bbS^{n-1})^\perp}}^{\,\mathrm{l.c.}}\big(\CCastCat_\bbC\big)$ 
obtained from factorization homology has necessarily a trivial self-equivalence, 
hence it will in general not be equivalent to our intrinsically 
constructed one $\und{\SSS}_{(\AAA,\pi_0)}\in
\AAlg_{\P_{\Disk(\bbR^1\times \bbS^{n-1})^\perp}}^{\,\mathrm{l.c.}}\big(\CCastCat_\bbC\big)$
from Corollary \ref{cor:latticePFAcylinder}.
\end{rem}

\subsection{\label{subsec:explicit}Concrete computations for $\bbZ^2$}
In this last subsection we will specialize to the case of the $2$-dimensional 
lattice $\bbZ^2$ and explain how the braiding and self-equivalence from 
Corollary \ref{cor:selfequivalence} can be computed geometrically by using 
suitable cone configurations. Since the prefactorization algebra
\begin{flalign}\label{eqn:PFAfinalsubsection}
\SSS_{(\AAA,\pi_0)}\, \in \,\Alg_{\mathsf{uAs}}\Big(C^\ast\mathbf{PFA}_{\Cone(\bbZ^2)^\perp}\Big)
\end{flalign}
from Corollary \ref{cor:latticePFA} is locally constant, one can realize 
the braided monoidal structure and self-equivalence from Corollary \ref{cor:selfequivalence}
on any of the unitarily equivalent $C^\ast$-categories $\SSS_{(\AAA,\pi_0)}(V)$. 
(Note that this is completely analogous to the observations in \cite{SSS}.)
In order to obtain a model which is manifestly independent of the choice of 
cone-shaped subset $V\in\Cone(\bbZ^2)$, we can also realize this structure
on the unitarily equivalent $C^\ast$-category 
\begin{flalign}\label{eqn:undundSSS}
\und{\und{\SSS}}_{(\AAA,\pi_0)}\,:=\,\SSS_{(\AAA,\Cone(\bbZ^2),\pi_0)}\,\in\,\CastCat
\end{flalign}
of localizable (but not necessarily strictly localized) superselection sectors
from Definition \ref{def:SSSloc}.
\sk

In order to transfer the algebraic structures from the locally constant prefactorization algebra
$\SSS_{(\AAA,\pi_0)}\colon \P_{\Disk(\bbR^1\times \bbS^{1})^\perp}\to \CCastCat_\bbC$ to the single $C^\ast$-category \eqref{eqn:undundSSS}, we choose
quasi-inverses for the unitary equivalences induced by the inclusion $\ast$-functors
$\iota_V^{} : \SSS_{(\AAA,\pi_0)}(V)\to \und{\und{\SSS}}_{(\AAA,\pi_0)}$ and 
$\iota_U^V :\SSS_{(\AAA,\pi_0)}(U)\to \SSS_{(\AAA,\pi_0)}(V)$,
for all $V\in\Cone_0(\bbZ^2)$ and all morphisms $U\subseteq V$ in $\Cone_0(\bbZ^2)$.
Up to equivalence, the result of this transfer does not depend on these choices.
Convenient models for such quasi-inverses can be obtained by choosing
unitary charge transporters as in Remark \ref{rem:transporter}.
Explicitly, to define a quasi-inverse for 
$\iota_V^{} : \SSS_{(\AAA,\pi_0)}(V)\to \und{\und{\SSS}}_{(\AAA,\pi_0)}$,
choose for each object $(K,\pi)\in\und{\und{\SSS}}_{(\AAA,\pi_0)}$ a unitary
$u_{(K,\pi)} : K\to H$ such that  $\big(H,\pi^{u_{(K,\pi)}}\big) \in \SSS_{(\AAA,\pi_0)}(V)$
defined in \eqref{eqn:piuformula} is strictly localized in $V$. When 
$(H,\pi)\in \SSS_{(\AAA,\pi_0)}(V)\subseteq \und{\und{\SSS}}_{(\AAA,\pi_0)}$ is already strictly localized in
$V$, we choose the identity $u_{(H,\pi)}=\id_H : H\to H$. This defines
a quasi-inverse $\ast$-functor
\begin{flalign}
p_V\,:\,\und{\und{\SSS}}_{(\AAA,\pi_0)} ~&\xrightarrow{\;\quad\;}~\SSS_{(\AAA,\pi_0)}(V)~,\\
\nn (K,\pi)~&\xmapsto{\;\quad\;}~\big(H,\pi^{u_{(K,\pi)}}\big)~,\\
\nn \Big(L : (K,\pi)\to(K^\prime,\pi^\prime)\Big) ~&\xmapsto{\;\quad\;}~
\Big(u_{(K^\prime,\pi^\prime)}\circ L\circ u_{(K,\pi)}^\ast :\big(H,\pi^{u_{(K,\pi)}}\big)\to
\big(H,\pi^{\prime u_{(K^\prime,\pi^\prime)}}\big) \Big)
\end{flalign}
satisfying $p_V\,\iota_V =\id$ and such that $\alpha_V : \id \stackrel{\cong}{\Longrightarrow} \iota_V\,p_V$
via the natural isomorphism given by the unitary components $(\alpha_V)_{(K,\pi)} = u_{(K,\pi)} : (K,\pi)
\to \big(H,\pi^{u_{(K,\pi)}}\big)$. With a similar construction one obtains quasi-inverses
\begin{flalign}
p_U^V \,:\, \SSS_{(\AAA,\pi_0)}(V)~\xrightarrow{\;\quad\;}~ \SSS_{(\AAA,\pi_0)}(U)
\end{flalign} 
for the inclusion $\ast$-functors
$\iota_U^V :\SSS_{(\AAA,\pi_0)}(U)\to \SSS_{(\AAA,\pi_0)}(V)$ satisfying $p_U^V\,\iota_U^V = \id$
and such that $\alpha_U^V : \id \stackrel{\cong}{\Longrightarrow} \iota_U^V\,p_U^V$ via a natural isomorphism
with unitary components.
\sk

The monoidal structure from Corollary \ref{cor:selfequivalence}
can be realized on the $C^\ast$-category \eqref{eqn:undundSSS} 
by taking as monoidal unit the reference $\ast$-representation 
$\pi_0\in\und{\und{\SSS}}_{(\AAA,\pi_0)}$ and as monoidal product the $\ast$-functor
\begin{equation}\label{eqn:starmonoidal}
\begin{tikzcd}
\ar[d,"p_V\boxtimes p_V"'] \und{\und{\SSS}}_{(\AAA,\pi_0)}\boxtimes \und{\und{\SSS}}_{(\AAA,\pi_0)}\ar[r,dashed,"\star"]& 
\und{\und{\SSS}}_{(\AAA,\pi_0)}  \\
\SSS_{(\AAA,\pi_0)}(V) \boxtimes \SSS_{(\AAA,\pi_0)}(V) \ar[r,"\diamond	\hspace*{-1.16mm}\cdot	\hspace*{0.6mm}"'] & 
\SSS_{(\AAA,\pi_0)}(V) \ar[u,"\iota_V"']
\end{tikzcd}
\end{equation}
which is obtained by transferring the strict monoidal product from 
Lemma \ref{lem:monoidal} along any of the unitary equivalences
$\iota_V : \SSS_{(\AAA,\pi_0)}(V) \rightleftarrows \und{\und{\SSS}}_{(\AAA,\pi_0)} : p_V$,
for some $V\in\Cone_0(\bbZ^2)$.
The existence of an associator and unitors for this (not necessarily strict) 
monoidal structure is a direct consequence of Corollary \ref{cor:selfequivalence}. 
\sk 

To exhibit a model for the braiding $\tau: \star\Rightarrow \star\circ\mathrm{flip}$ 
for this monoidal structure, we leverage the fact that
the prefactorization algebra structure and object-wise monoidal structures 
of \eqref{eqn:PFAfinalsubsection} are strictly compatible by Theorem \ref{theo:PFAinMonCat}. 
Using an Eckmann-Hilton-type argument, one observes that the diagram
\begin{equation}\label{eqn:EH}
\begin{tikzcd}
\ar[dr,"\boxtimes_i \iota_{U_i}^V"'] \bigboxtimes\limits_{i=1}^n \SSS_{(\AAA,\pi_0)}(U_i) \ar[rr,"\bullet"] && 
\SSS_{(\AAA,\pi_0)}(V)\\
& \bigboxtimes\limits_{i=1}^n \SSS_{(\AAA,\pi_0)}(V) \ar[ru,"\diamond	\hspace*{-1.16mm}\cdot	\hspace*{0.6mm}^n"'] &
\end{tikzcd}
\end{equation}
in $\CastCat$ commutes strictly, for all operations $(U_1,\dots,U_n)\to V$ in the operad
$\P_{\Cone(\bbZ^2)^\perp}$, where $\cdiamond^n$ denotes the $n$-fold
object-wise monoidal product from Lemma \ref{lem:monoidal}. Choosing any
binary operation $(U_1,U_2)\to V$ in $\P_{\Cone_0(\bbZ^2)^\perp}$, we obtain the
natural isomorphism
\begin{subequations}\label{eqn:braiding}
\begin{equation}
\begin{tikzcd}[column sep=tiny]
		\cdiamond 
		\ar[d,Rightarrow,"\diamond\hspace*{-1.16mm}\cdot\hspace*{0.6mm}\circ (\alpha_{U_1}^V\boxtimes \alpha_{U_2}^V)"']
		\ar[rrr,Rightarrow,dashed,"\tilde{\tau}"]&[-10mm]&&[-15mm] 
		\cdiamond\circ\, \mathrm{flip}\\[4mm]
		\cdiamond\circ\big(\iota_{U_1}^Vp_{U_1}^V\boxtimes \iota_{U_2}^{V}p_{U_2}^V\big) \ar[rd,equal]&&
		&
		\cdiamond\circ\big(\iota_{U_2}^Vp_{U_2}^V\boxtimes \iota_{U_1}^{V}p_{U_1}^V\big)\circ\mathrm{flip}
		\ar[u,Rightarrow,"\diamond	\hspace*{-1.16mm}\cdot	\hspace*{0.6mm} \circ (\alpha_{U_2}^V\boxtimes \alpha_{U_1}^V)^{-1}\circ\Id"']\\[4mm]
		& \bullet\circ \big(p_{U_1}^V\boxtimes p_{U_2}^V\big) \ar[r,equal]&
		\bullet\circ \mathrm{flip}\circ \big(p_{U_1}^V\boxtimes p_{U_2}^V\big) \ar[ru,equal] &
\end{tikzcd}~,
\end{equation}
where the diagonal equalities use the Eckmann-Hilton argument \eqref{eqn:EH}
and the lower horizontal equality uses the permutation equivariance property \eqref{eqn:PFAdiagramperm} 
of a prefactorization algebra. This natural isomorphism transfers to the braiding
\begin{flalign}
\tau\,:=\, \iota_V\circ \tilde{\tau}\circ (p_V\boxtimes p_V) \,:\, \star ~\Longrightarrow~\star\circ\,\mathrm{flip}
\end{flalign}
\end{subequations}
for the monoidal structure \eqref{eqn:starmonoidal}, which as a direct 
consequence of Corollary \ref{cor:selfequivalence} satisfies the relevant hexagon identities.
\begin{rem}
The braiding \eqref{eqn:braiding} agrees with the standard
one from traditional superselection theory, see e.g.\ \cite{DHR} and \cite{BuchholzFredenhagen}.
To verify this claim, it suffices to compute the components $\tau_{\pi,\dot{\pi}}$ of the braiding
for any two objects $\pi,\dot{\pi}\in \SSS_{(\AAA,\pi_0)}(V)\subseteq \und{\und{\SSS}}_{(\AAA,\pi_0)}$ which
we may assume to be strictly localized in the chosen cone-shaped subset $V\in\Cone_0(\bbZ^2)$.
Abbreviating by $u:= (\alpha_{U_1}^V)_{\pi} : \pi\to \pi^u$
and $\dot{u}:= (\alpha_{U_2}^V)_{\dot{\pi}} : \dot{\pi}\to \dot{\pi}^{\dot{u}}$
the corresponding components of the unitary natural isomorphisms $\alpha_{U_1}^V : 
\id \Rightarrow \iota_{U_1}^V\,p_{U_1}^V$ and $\alpha_{U_2}^V : \id \Rightarrow \iota_{U_2}^V\,p_{U_2}^V$,
we find that
\begin{subequations}
\begin{equation}
\begin{tikzcd}
\tau_{\pi,\dot{\pi}}\,:\,\pi\star \dot{\pi}\,=\,\pi\cdiamond\dot{\pi}
\ar[r,"u \diamond	\hspace*{-1.16mm}\cdot	\hspace*{0.6mm}\dot{u}"] & 
\pi^u\cdiamond\dot{\pi}^{\dot{u}}\,=\, \dot{\pi}^{\dot{u}}\cdiamond\pi^u
\ar[r,"\dot{u}^\ast\diamond	\hspace*{-1.16mm}\cdot	\hspace*{0.6mm} u^\ast"] &
\dot{\pi}\cdiamond\pi \,=\, \dot{\pi}\star \pi
\end{tikzcd}~.
\end{equation}
Recalling the explicit formula \eqref{eqn:monoidalformulamorphism} for the monoidal structure 
$\cdiamond$ on morphisms,
we can write this more concretely as the composition of unitary operators
\begin{flalign}
\tau_{\pi,\dot{\pi}}\,=\, \dot{\rho}_V(u^\ast)\circ \dot{u}^\ast\circ u\circ \rho_V(\dot{u})~,
\end{flalign}
\end{subequations}
which coincides with the usual braiding from \cite{DHR,BuchholzFredenhagen}.
Since, by our transfer construction, the resulting braided monoidal $C^\ast$-category is unique (up to equivalence),
it becomes superfluous to analyze as in \cite{DHR,BuchholzFredenhagen} the dependence of the braiding
on the choice of binary operation $(U_1,U_2)\to V$ in $\P_{\Cone_0(\bbZ^2)^\perp}$.
\end{rem}

It remains to exhibit a model for the self-equivalence 
$T : \und{\und{\SSS}}_{(\AAA,\pi_0)}\to \und{\und{\SSS}}_{(\AAA,\pi_0)}$ 
from Corollary \ref{cor:selfequivalence}. 
This is geometrically realized by winding once around a cone's apex 
via a zig-zag of cone inclusions. More explicitly, we choose any sequence of morphisms
\begin{subequations}\label{eqn:Tcones}
\begin{flalign}
V_1 \,\xrightarrow{\;\quad\;}\, V_2\, \xleftarrow{\;\quad\;} \,V_3\, \xrightarrow{\;\quad\;}\, V_4 \,\xleftarrow{\;\quad\;}\, V_1
\end{flalign}
in $\Cone_0(\bbZ^2)$ that winds clockwise around the origin $0\in\bbZ^2$, e.g.\
\begin{flalign}
\begin{gathered}
\begin{tikzpicture}[scale=0.3]
\draw[fill=blue, opacity=0.3] (-2,-2) -- (-2,2) -- (0,0) -- (-2,-2) ;
\draw[thick] (-2,-2) -- (-2,2) -- (2,2) -- (2,-2) -- (-2,-2);
\filldraw (0,0) circle (2pt);
\end{tikzpicture}
\end{gathered}~\longrightarrow~
\begin{gathered}
\begin{tikzpicture}[scale=0.3]
\draw[fill=blue, opacity=0.3]  (-2,-2) -- (-2,2) -- (2,2) -- (2,-2) -- (0,0) -- (-2,-2);
\draw[thick] (-2,-2) -- (-2,2) -- (2,2) -- (2,-2) -- (-2,-2);
\filldraw (0,0) circle (2pt);
\end{tikzpicture}
\end{gathered}~\longleftarrow~
\begin{gathered}
\begin{tikzpicture}[scale=0.3]
\draw[fill=blue, opacity=0.3]  (0,0) -- (2,2) -- (2,-2) -- (0,0);
\draw[thick] (-2,-2) -- (-2,2) -- (2,2) -- (2,-2) -- (-2,-2);
\filldraw (0,0) circle (2pt);
\end{tikzpicture}
\end{gathered}~\longrightarrow~
\begin{gathered}
\begin{tikzpicture}[scale=0.3]
\draw[fill=blue, opacity=0.3]  (-2,2) -- (-2,-2) -- (2,-2) -- (2,2) -- (0,0) -- (-2,2);
\draw[thick] (-2,-2) -- (-2,2) -- (2,2) -- (2,-2) -- (-2,-2);
\filldraw (0,0) circle (2pt);
\end{tikzpicture}
\end{gathered}
~\longleftarrow~
\begin{gathered}
\begin{tikzpicture}[scale=0.3]
\draw[fill=blue, opacity=0.3] (-2,-2) -- (-2,2) -- (0,0) -- (-2,-2) ;
\draw[thick] (-2,-2) -- (-2,2) -- (2,2) -- (2,-2) -- (-2,-2);
\filldraw (0,0) circle (2pt);
\end{tikzpicture}
\end{gathered}\quad.
\end{flalign}
\end{subequations}
This defines a model (unique up to equivalence) for the self-equivalence by taking the holonomy
\begin{equation}\label{eqn:T}
\begin{tikzcd}[column sep=tiny]
	\ar[d,"p_{V_1}"'] \und{\und{\SSS}}_{(\AAA,\pi_0)} \ar[rrrr,dashed,"T"] &&&&\und{\und{\SSS}}_{(\AAA,\pi_0)}\\
	\SSS_{(\AAA,\pi_0)}(V_1) \ar[rd,"\iota_{V_1}^{V_2}"'] & &
	\SSS_{(\AAA,\pi_0)}(V_3) \ar[rd,"\iota_{V_3}^{V_4}"'] & &
	\SSS_{(\AAA,\pi_0)}(V_1) \ar[u,"\iota_{V_1}"']\\
	& 	\SSS_{(\AAA,\pi_0)}(V_2) \ar[ru,"p_{V_3}^{V_2}"'] & & \SSS_{(\AAA,\pi_0)}(V_4) \ar[ru,"p_{V_1}^{V_4}"']
\end{tikzcd}~.
\end{equation}
As a direct consequence of Corollary \ref{cor:selfequivalence},
this $\ast$-functor can be endowed with the 
structure of a braided monoidal self-equivalence
of the braided monoidal $C^\ast$-category
$\big(\und{\und{\SSS}}_{(\AAA,\pi_0)},\star,\pi_0,\tau\big)$.
\begin{ex}\label{ex:Kitaevmonodromy}
A concrete example of the braided monoidal 
$C^\ast$-category $\und{\und{\SSS}}_{(\AAA,\pi_0)}$ 
of localizable superselection sectors is constructed 
in \cite{Naaijkens1} for  Kitaev's quantum double model with $G=\mathbb{Z}_2$, i.e.\ the toric code.
It is therefore natural to ask how the additional 
braided monoidal self-equivalence $T$ from \eqref{eqn:T} 
behaves in this concrete example and the answer we obtain below is that $T$ is trivial.
Our argument below readily generalizes to all finite Abelian groups using the results from~\cite{FiedlerNaaijkens},
however for a generalization to non-Abelian groups one would require 
more sophisticated techniques as in \cite{Bols}.
As already highlighted in the introduction, we believe that this is not
an artifact of the simplicity of this model, but that the triviality of $T$ is rooted in the insufficiency
of traditional superselection theory \cite{DHR,BuchholzFredenhagen} to detect non-trivial topology.
In particular, we expect that adapting the generalized superselection theory from \cite{BrunettiRuzzi}
will lead to a richer braided monoidal $C^\ast$-category for Kitaev's quantum double model, 
endowed with a potentially non-trivial self-equivalence $T$.
\sk

Our goal is to show that $T(\pi) = \pi$, 
for all irreducible stringlike localized representations 
$\pi \in \SSS_{(\AAA,\pi_0)}(V_1)\subseteq \und{\und{\SSS}}_{(\AAA,\pi_0)}$ as introduced 
in \cite{Naaijkens1}. Let us recall that $\pi$ is defined by a choice of path $\gamma$ 
extending to infinity and contained in $V_1 \in \Cone_0(\bbZ^2)$, 
see \cite[Proposition 3.4]{Naaijkens1}. 
In order to compute $T(\pi)$ explicitly, 
recall from \eqref{eqn:Tcones} the sequence of morphisms 
in $\Cone_0(\bbZ^2)$ used in the definition \eqref{eqn:T} of $T$
and that the quasi-inverses $p_{V_1}$, 
$p_{V_3}^{V_2}$ and $p_{V_1}^{V_4}$ involve choices. 
We already defined $p_{V_1}$ so that it acts 
as the identity on objects localized in $V_1$. 
It is convenient to define $p_{V_3}^{V_2}$ so that 
it acts on irreducible stringlike representations 
localized in $V_1$ by a clockwise rotation 
of the underlying path by an angle $\pi$ around the origin $0 \in \bbZ^2$. 
(The required unitary charge transporter 
can be constructed by choosing a path contained in $V_2$ 
from the basepoint of the original path to the basepoint 
of the rotated path, see \cite[Lemma 4.2]{Naaijkens1}.) 
Similarly, it is convenient to define $p_{V_1}^{V_4}$ so that 
it acts on irreducible stringlike representations 
localized in $V_3$ by a clockwise rotation 
of the underlying path by an angle $\pi$ around the origin $0 \in \bbZ^2$. 
(The required unitary charged transporter can be constructed again
by choosing a similar path contained in $V_4$.) 
To transport $\pi \in \SSS_{(\AAA,\pi_0)}(V_1)$ 
clockwise along the chosen sequence of morphisms \eqref{eqn:Tcones} in $\Cone_0(\bbZ^2)$, 
we proceed in two steps.
First, we transport the irreducible representation $\pi = p_{V_1}(\pi)$ from 
$V_1$ to $V_3$ through $V_2$. The outcome is the object 
$\dot{\pi} := p_{V_3}^{V_2} (\iota_{V_1}^{V_2} (\pi)) \in 
\SSS_{(\AAA,\pi_0)}(V_3)$, whose underlying path $\dot{\gamma} = \gamma^\pi$ is by construction 
the clockwise rotation of the path $\gamma$  by an angle $\pi$ around the origin $0 \in \bbZ^2$.
Second, we transport $\dot{\pi}$ from $V_3$ back to $V_1$ through $V_4$. 
The outcome is the object 
$T(\pi) = p_{V_1}^{V_4} (\iota_{V_3}^{V_4} (\dot{\pi})) \in \SSS_{(\AAA,\pi_0)}(V_1)$, 
whose underlying path $\widetilde{\gamma} = \dot{\gamma}^\pi$ 
is by construction the clockwise rotation of the path $\dot{\gamma}$
by an angle $\pi$ around the origin $0 \in \bbZ^2$. 
It follows that $\widetilde{\gamma} = \gamma^{2\pi} = \gamma$ is obtained 
by a clockwise rotation of the original path $\gamma$ by an angle $2\pi$ around the origin $0 \in \bbZ^2$.
Hence, the associated irreducible stringlike localized representations $T(\pi) = \pi$ coincide. 
\end{ex}


\section*{Acknowledgments}
We would like to thank Sebastiano Carpi, Owen Gwilliam, Robin Hillier, Corey Jones
and David Penneys for useful discussions and comments. We also would
like to thank the anonymous referees for their comments which helped us to improve our paper.
The work of M.B.\ is supported in part by the MUR Excellence 
Department Project awarded to Dipartimento di Matematica, 
Universit{\`a} di Genova (CUP D33C23001110001) and it is fostered by 
the National Group of Mathematical Physics (GNFM-INdAM (IT)). 
V.C.\ is partially supported by the grant PID2020-117971GB-C21 funded by
MCIN/AEI/10.13039/501100011033.


\section*{Data availability statement}
All data generated or analyzed during this study are contained in this document.

\section*{Conflict of interest statement}
The authors have no conflict of interest to declare that are relevant to the content of this article. 



\end{document}